\documentclass[12pt,a4paper]{amsart}

\usepackage[utf8]{inputenc}
\usepackage[english]{babel}
\usepackage[T1]{fontenc}
\usepackage{amsmath}
\usepackage{mathtools}
\usepackage{amsfonts}
\usepackage{amssymb}
\usepackage{setspace}
\usepackage{amsfonts, dsfont}
\usepackage{mathrsfs, enumerate, csquotes, color}
\usepackage{xcolor}
\definecolor{vertfonce}{rgb}{0.20, 0.46, 0.25}
\definecolor{rougefonce}{rgb}{0.64, 0.09, 0.20}
\usepackage[breaklinks=true,
            colorlinks=true,
            linkcolor=rougefonce,
            citecolor=vertfonce]{hyperref}

\addtolength{\textheight}{2cm}
\addtolength{\textwidth}{4cm}
\addtolength{\oddsidemargin}{-2cm}
\addtolength{\evensidemargin}{-2cm}
\addtolength{\topmargin}{-1cm}

\newcommand{\h}{\hbar}
\renewcommand{\O}{\mathscr{O}}
\newcommand{\abs}[1]{\left|#1\right|}

\author{R. Fahs}
\email{rayan.fahs@univ-angers.fr}
\address[R. Fahs]{Univ Angers, CNRS, LAREMA, SFR MATHSTIC, F-49000 Angers, France}

\author{L. Le Treust}
\email{loic.le-treust@univ-amu.fr}
\address[L. Le Treust]{Aix Marseille Univ, CNRS, Centrale Marseille, I2M, Marseille, France}

\author{N. Raymond}
\email{nicolas.raymond@univ-angers.fr}
\address[N. Raymond]{Univ Angers, CNRS, LAREMA, Institut Universitaire de France, SFR MATHSTIC, F-49000 Angers, France}

\author{S. V\~u Ng\d{o}c}
\email{san.vu-ngoc@univ-rennes1.fr}
\address[S. V\~u Ng\d{o}c]{Univ Rennes, CNRS, IRMAR - UMR 6625, F-35000 Rennes, France}

\title[]{Boundary states of the Robin magnetic Laplacian}

\makeatletter

\@addtoreset{equation}{section}
\makeatother

\usepackage{tikz}
\usetikzlibrary{shapes.misc}
\tikzset{cross/.style={cross out}, minimum size=1pt, draw=black, inner sep =0pt, outer sep=0pt, cross/.default={1pt}}

\theoremstyle{plain}
\newtheorem{theorem}{Theorem}[section]
\newtheorem{lemma}[theorem]{Lemma}
\newtheorem{corollary}[theorem]{Corollary}
\newtheorem{proposition}[theorem]{Proposition}

\theoremstyle{definition}
\newtheorem{remark}[theorem]{Remark}

\newtheorem{definition}[theorem]{Definition}
\newtheorem{assumption}[theorem]{Assumption}

\setcounter{tocdepth}{2}

\newcommand{\R}{\mathbf{R}}
\newcommand{\T}{\mathbf{T}}
\newcommand{\C}{\mathbf{C}}
\newcommand{\N}{\mathbf{N}}
\newcommand{\Z}{\mathbf{Z}}

\renewcommand{\Re}{\mathrm{Re}}
\renewcommand{\Im}{\mathrm{Im}}

\renewcommand{\leq}{\leqslant}	\renewcommand{\geq}{\geqslant}

\newcommand{\dd}{\mathrm{d}}

\begin{document}

\maketitle

\begin{abstract}
  This article tackles the spectral analysis of the Robin Laplacian on
  a smooth bounded two-dimensional domain in the presence of a
  constant magnetic field. In the semiclassical limit, a uniform
  description of the spectrum located between the Landau levels is
  obtained. The corresponding eigenfunctions, called edge states, are
  exponentially localized near the boundary. By means of a microlocal
  dimensional reduction, our unifying approach allows on the one hand
  to derive a very precise Weyl law and a proof of quantum magnetic
  oscillations for excited states, and on the other hand to refine
  simultaneously old results about the low-lying eigenvalues in the
  Robin case and recent ones about edge states in the Dirichlet case.
\end{abstract}

% \tableofcontents

\section{Motivations and results}

\subsection{About the magnetic Robin Laplacian}
We want to describe the spectrum of the semiclassical magnetic
Laplacian $\mathscr{L}_h=(-i h\nabla-\mathbf{A})^2$ on a smooth,
bounded, and simply connected open Euclidean domain
$\Omega\subset\R^2$, with boundary conditions of Robin type.  The
vector potential $\mathbf{A} : \overline{\Omega}\to\R^2$ is supposed
to be smooth and generating a constant magnetic field of intensity
$1$:
\[
  \partial_1 A_2-\partial_2A_1=1\,.
\]
The magnetic Robin boundary conditions are enforced by defining the
operator $\mathscr{L}_h=\mathscr{L}_{h,A,\gamma}$ to be the
selfadjoint operator associated with the quadratic form defined for
all $\psi\in H^1(\Omega)$ by:
\begin{equation}
  \mathcal{Q}_{h,\mathbf{A}}(\psi) = \int_{\Omega}|(-i
  h\nabla-\mathbf{A})\psi|^2\mathrm{d}x+\gamma
  h^{\frac32}\int_{\partial\Omega}|\psi|^2\mathrm{d}s\,,\label{eq:Q}
\end{equation}
where $\gamma\in\R\cup\{+\infty\}$, and $\mathrm{d}s$ is the length
measure of the boundary induced by the Euclidean metric. By
convention, $\gamma=+\infty$ corresponds to the Dirichlet boundary
condition $\psi\in H^1_0(\Omega)$. In the whole paper, our estimates
will be uniform when $\gamma\in[-\gamma_0,+\infty]$ for an arbitrary
fixed $\gamma_0>0$. When $\gamma\in\R$, the domain of $\mathscr{L}_h$
is given by
\[
  \mathrm{Dom}(\mathscr{L}_h)=\{\psi\in H^1(\Omega) : (-i
  h\nabla-\mathbf{A})^2\psi\in L^2(\Omega)\,, -i h\mathbf{n}\cdot(-i
  h\nabla-\mathbf{A})\psi=\gamma h^{\frac32}\psi\,\mbox{ on
  }\partial\Omega\}\,,
\]
where $\mathbf{n}$ is the outward pointing normal to the
boundary. Note that a change of gauge can be used to ensure that
$\mathbf{A}\cdot\mathbf{n}=0$. In this case, the magnetic Robin
condition becomes a usual Robin condition:
\begin{equation}\label{eq.boundary}
  -\mathbf{n}\cdot\nabla\psi=\gamma h^{-\frac12}\psi\,.
\end{equation}
We would like to establish accurate spectral asymptotics for
$\mathscr{L}_h$ in regimes where the magnetic field plays a major
role, competing with the Robin condition (this is the origin, as we
will see, of the factor $h^{\frac32}$ in the Robin condition). Until
now, very accurate results are available only for the Neumann magnetic
Laplacian (when $\gamma=0$). In this case, the lowest eigenvalues have
been analyzed in detail in \cite{FH06} and uniform estimates have been
recently established in \cite{BHR21} where a purely magnetic
tunnelling effect formula has been proved. When $\gamma\neq 0$, the
only known results go back to the works by Kachmar, see \cite{K06},
where only the smallest eigenvalue has been estimated. In all these
situations (except when $\gamma=+\infty$, where the first eigenvalue
is asymptotic to $h$ times the magnetic intensity --- here, 1), one
can show that the first eigenvalue becomes smaller than $h$ as soon as
$h$ is small enough. This energy bound is usually associated with a
localization behavior near the boundary of the eigenfunctions, which
can be quantified by semiclassical Agmon estimates.

By a simple scaling, the semiclassical limit $h\to 0$ translates
into a quantum regime where the intensity of the magnetic field tends
to infinity. In the physics literature of thin conductors or electron
gases (approximated by 2D domains) subject to a strong external
magnetic field, it is well known that the presence of a boundary (or,
more generally, of an abruptly changing magnetic field along a curve)
generates a current along the boundary due to the presence of
``bouncing modes'' classically localized at a distance $\sqrt E/B$ to
the boundary ($E$ is the kinetic energy and $B$ is the magnetic
intensity: in this work $B=1$), see for
instance~\cite{Halperin82}. These so-called ``edge states'' or
``boundary states'' exist as soon as the Fermi level of the conductor
lies strictly in between two consecutive Landau levels, and produce
ballistic dynamics along the boundary. If the boundary
$\partial\Omega$ is compact, this dynamics is quantized and produces
new discrete energy levels. These are precisely the eigenvalues that
we wish to describe in this work.

Heuristically, the localization near $\partial\Omega$ is often
explained by the classical bouncing modes alluded to above, but it is
also easy to understand from a quantum perspective.  Indeed, if we
forget the boundary condition, $\mathscr{L}_h$ acts as the magnetic
Laplacian with constant magnetic field, on the Euclidean plane,
$\mathscr{L}^{\R^2}_{h,\mathbf{A}}$. The spectrum of this so-called
``bulk'' operator is well-known and made of the famous Landau levels
$\{(2n-1)h\,, n\geq 1\}$, which are infinitely degenerate
eigenvalues. This suggests that, if one considers potential
eigenvalues of $\mathscr{L}_h$ in a window of the form $I_h=[h a,h b]$
with $2n-1<a<b<2n+1$ for some integer $n\geq 0$ (for $n=0$, we take
$a=-\infty$), they cannot correspond to any bulk state, and hence the
corresponding eigenfunctions should be localized near the boundary.
This phenomenon has interesting physical applications; a famous one is
the quantum Hall effect, when the domain is not simply connected,
which expresses the collective effects of several boundaries on the
total net current. Another application is the confinement of particles
in small domains, or ``quantum dots'' (sometimes called ``anti-dots''
because one takes $B=0$ inside the domain, and $B=1$ outside),
see~\cite{Reijniers-Peeters-Matulis99,Lee_2004}.

On the mathematics side, the existence of edge currents in a
half-plane with Dirichlet boundary condition was shown
in~\cite{de_Bievre-Pule99}. In a compact setting, the eigenfunction
localization at the boundary has been observed (again in the Dirichlet
case $\gamma=+\infty$, which is usually chosen in physics) in
\cite{GV21}, which was one of our motivations for this work. The
methods of~\cite{GV21} lead to a description of the spectrum in a thin
spectral window, see \cite[Corollary 2.7]{GV21}. However the
\emph{exponential} decay was not established. In fact, as we will see,
this decay does not  follow from the usual Agmon estimates, but from a strategy \textit{à la} Combes-Thomas (see the original article \cite{CT73} or the review \cite{Hislop00}).

In this article we treat the general case
$\gamma\in\R\cup \{+\infty\}$. This corresponds, physically, to a
domain $\Omega$ coated with a very thin layer of a different material
(see for instance~\cite{Bendali-Lemrabet1996}). Since $\Omega$ is
bounded, the spectrum in $I_h$ is always discrete and a first rough
estimate shows that the number of eigenvalues lying in $I_h$, denoted
by $N(\mathscr{L}_h,I_h)$, satisfies
\begin{equation}\label{eq.roughNLhIh}
  N(\mathscr{L}_h,I_h)\leq Ch^{-2}\,,
\end{equation}
for some $C>0$ and all $h>0$ small enough (see Appendix
\ref{sec.roughWeyl} where we recall the origin of this estimate).  Our
goal is to obtain a very precise description, in the semiclassical
regime, of the spectral elements corresponding to the interval $I_h$,
much more accurate than~\eqref{eq.roughNLhIh}. This includes the
localization behavior near $\partial\Omega$ of the corresponding
eigenfunctions. For instance, when $\gamma\in\R$, a consequence of our
main result Theorem~\ref{thm.main} is the appearance of a quite
interesting phenomenon: for a given (low) energy, one can have
boundary quasimodes corresponding to classical currents flowing in
opposite directions, leading to magnetic oscillations of eigenvalues,
see Theorem~\ref{theo:oscillate}.

This work is also an opportunity to revisit the Neumann case analyzed
in \cite{HM01, FH06} (see also \cite{BHR21}) by establishing more
uniform asymptotic expansions, with slightly more general boundary
conditions.

\subsection{De Gennes operator with Robin condition}
\label{sec:de-gennes-operator}

Our results will be expressed in terms of the eigenvalues of the de
Gennes operator with Robin boundary condition. This operator, which
appears naturally in the study of boundary induced magnetic
effects~\cite{StJames-deGennes63,FH10}, is a differential operator of
order two depending on the real parameters $\gamma$ and $\sigma$ and
acting as
\[
  H[\gamma,\sigma] = -\dfrac{\dd ^2}{\dd t ^{2}}+(t-\sigma)^2\,,
\]
on the domain
\[
  \mathrm{Dom}(H[\gamma, \sigma])=\left\{u \in
    B^{1}\left(\R_{+}\right) : \Big(-\dfrac{\dd ^2}{\dd t
      ^{2}}+(t-\sigma)^2\Big)u \in L^2(\R_{+}),\,\,
    u^{\prime}(0)=\gamma\, u(0)\right\}\,,
\]
where
\[
  B^{1}\left(\R_{+}\right) = \{ u\in
  H^{1}\left(\R_{+}\right)\,:\;[t\mapsto tu(t)]\in
  L^{2}\left(\R_{+}\right)\}\,.
\]
It is well-known that $H[\gamma,\sigma]$ is a self-adjoint elliptic
operator with compact resolvent. Its spectrum can be written as a
non-decreasing sequence of eigenvalues
$( \mu_n(\gamma,\sigma))_{n\geq 1}$ (which are all simple due to the
Cauchy-Lipschitz theorem).
We denote by $u_n^{[\gamma,\sigma]}$ the normalized sequence of the
corresponding eigenfunctions (with $u_n^{[\gamma,\sigma]}(0)>0$).  We
let
\[
  \Theta^{[n-1]}(\gamma):= \inf_{\sigma \in \R}
  \mu_{n}(\gamma,\sigma).
\]
The index $n-1$ is compatible with the notation used in the case of
the de Gennes operator (case when $\gamma=0$), see \cite[Section
3.2]{FH10}.  The family $(H[\gamma,\sigma])_{(\gamma,\sigma)\in\R^2}$
is analytic of type (B) (in the sense of Kato, see \cite[Chapter VII,
\S 4]{Kato}), \emph{i.e.}, the form domain does not depend on the
parameters and the sesquilinear form is analytic as a function of
$\gamma$ or $\sigma$. By convention, we denote by $H[+\infty,\sigma]$
(\emph{i.e.} we let $\gamma=+\infty$) the corresponding operator with
Dirichlet boundary condition $u(0)=0$.

The following proposition gathers the main properties of the functions
$\mu_{n}(\gamma,\cdot)$ (which are usually called \emph{dispersion
  curves}) that will be used in this article. Most of them have been
established in \cite{K06} (see also \cite{K07},
and~\cite{de_Bievre-Pule99} in the Dirichlet case).
\begin{proposition}\label{prop.dispersion}
  Let us fix $n\geq 1$. When $\gamma\in\R$, the function
  $\mu_{n}(\gamma,\cdot)$ is analytic and
  \begin{equation}
    \lim_{\sigma\to-\infty}\mu_n(\gamma,\sigma)=+\infty\,,\quad
    \lim_{\sigma\to+\infty}\mu_{n}(\gamma,\sigma)=2n-1\,.
    \label{equ:limites_sigma}
  \end{equation}
  Moreover, $\mu_n(\gamma,\cdot)$ has a unique minimum attained at
  $\sigma=\xi_{n-1}(\gamma)$, but not attained at infinity. This
  minimum is non-degenerate.  The function $\mu_n(\gamma,\cdot)$ is
  decreasing on $(-\infty,\xi_{n-1}(\gamma))$ and increasing on
  $(\xi_{n-1}(\gamma),+\infty)$. In addition, we have, for all
  $n\geq 2$,
  \begin{equation}\label{eq.lubtheta}
    2n-3<\Theta^{[n-1]}(\gamma)<2n-1\,.
  \end{equation}
  When $\gamma=+\infty$, that is when the Robin condition is replaced
  by the Dirichlet condition, $\mu_n(+\infty,\cdot)$ is still smooth,
  but now decreasing from $+\infty$ to $2n-1$.
\end{proposition}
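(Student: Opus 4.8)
The plan is to obtain analyticity and the asymptotics $\sigma\to\pm\infty$ by soft arguments, then to reduce the whole qualitative picture of $\mu_n(\gamma,\cdot)$ to a single Feynman--Hellmann computation of $\partial_\sigma^2\mu_n$ at its critical points, and finally to get \eqref{eq.lubtheta} by comparison with the Dirichlet and the full-line harmonic oscillators. Every $\mu_n(\gamma,\sigma)$ is simple (two eigenfunctions for the same eigenvalue are both proportional to the unique, up to a scalar, $L^2$ solution of the ODE recessive at $+\infty$), so Kato's analytic perturbation theory applies to the type (B) family $(H[\gamma,\sigma])_\sigma$ --- form domain $B^1(\R_+)$ independent of $\sigma$, form polynomial in $\sigma$ --- and yields analytic branches $\sigma\mapsto\mu_n(\gamma,\sigma)$ and $\sigma\mapsto u_n^{[\gamma,\sigma]}$; I normalise so that $u_n^{[\gamma,\sigma]}(0)>0$ (when $\gamma\in\R$) and $\langle\partial_\sigma u_n^{[\gamma,\sigma]},u_n^{[\gamma,\sigma]}\rangle=0$, the Dirichlet case being identical on the form domain $B^1(\R_+)\cap H^1_0(\R_+)$. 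For the limits: when $\sigma<0$, $(t-\sigma)^2\geq\sigma^2$ on $\R_+$, and absorbing the boundary term via $|u(0)|^2\leq 2\|u'\|\,\|u\|$ (uniformly for $\gamma\in[-\gamma_0,+\infty]$) gives $\mu_1(\gamma,\sigma)\geq\sigma^2-C$, so $\mu_n(\gamma,\sigma)\to+\infty$. As $\sigma\to+\infty$, an upper bound $\mu_n(\gamma,\sigma)\leq 2n-1+O(e^{-c\sigma^2})$ follows from the min-max on the span of $t\mapsto\chi(t/\sigma)h_{k-1}(t-\sigma)$, $1\leq k\leq n$ (Hermite functions $h_j$; cut-off $\chi$ vanishing near $0$, so no boundary term), and the matching lower bound from Neumann bracketing at $t=\sqrt\sigma$: on $[0,\sqrt\sigma]$ there is no spectrum below $\sigma^2/8$ since $(t-\sigma)^2\geq\sigma^2/4$ there, while on $[\sqrt\sigma,+\infty)$ the translated operator is the Neumann harmonic oscillator on a half-line whose endpoint recedes to $-\infty$, whose $n$-th eigenvalue tends to $2n-1$, so that $\mu_n(\gamma,\sigma)\geq 2n-1-o(1)$ --- this (and its Dirichlet analogue) being classical, cf.~\cite{FH10,K06}.

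The heart of the argument is the following. Set $H=H[\gamma,\sigma]$, $u=u_n^{[\gamma,\sigma]}$, $\mu=\mu_n(\gamma,\sigma)$, and write $\dot f$ for $\partial_\sigma f$. Differentiating $Hu=\mu u$ once and twice and pairing with $u$ --- the boundary terms cancel because $u'(0)=\gamma u(0)$ forces $\dot u'(0)=\gamma\dot u(0)$ and $\ddot u'(0)=\gamma\ddot u(0)$ --- gives
\[
  \dot\mu=-2\int_0^\infty(t-\sigma)u^2\,\dd t,\qquad \ddot\mu=2-4\int_0^\infty(t-\sigma)\dot u\,u\,\dd t.
\]
At a critical point $\sigma_c$ one has $(H-\mu)\dot u=2(t-\sigma_c)u$, and differentiating the ODE in $t$ gives $(H-\mu)u'=-2(t-\sigma_c)u$; hence $u'+\dot u$ lies in the one-dimensional $L^2$-kernel of $H-\mu$, so $u'+\dot u=c\,u$, and $\langle\dot u,u\rangle=0$ together with $\int_0^\infty u'u\,\dd t=-\tfrac12 u(0)^2$ forces $c=-\tfrac12 u(0)^2$. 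Inserting $\dot u=-\tfrac12 u(0)^2 u-u'$ into the second formula and integrating $\int_0^\infty(t-\sigma_c)u'u\,\dd t$ by parts yields the key identity
\[
  \partial_\sigma^2\mu_n(\gamma,\sigma_c)=2\,\sigma_c\,u_n^{[\gamma,\sigma_c]}(0)^2.
\]
Now $u_n^{[\gamma,\sigma_c]}(0)\neq0$ (else $u_n\equiv0$ by Cauchy--Lipschitz), and by the first formula $\dot\mu_n(\gamma,\sigma)<0$ for every $\sigma\leq0$ (the integrand is then positive), so any critical point has $\sigma_c>0$ and therefore $\partial_\sigma^2\mu_n(\gamma,\sigma_c)>0$: every critical point of $\mu_n(\gamma,\cdot)$ is a strict local minimum.

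A $C^1$ function on $\R$ all of whose critical points are strict local minima has at most one of them; combined with the asymptotics above and with the bound $\inf_\sigma\mu_n(\gamma,\sigma)<2n-1$ --- obtained from the min-max on the same truncated, translated Hermite functions, the truncation removing strictly more energy than $(2k-1)$ times the removed mass while adding only an exponentially smaller boundary term --- this shows that the infimum is attained at a unique interior point $\xi_{n-1}(\gamma)>0$, which is the unique critical point, a non-degenerate minimum separating a decreasing branch from an increasing one, and that it is not attained at infinity since $\lim_{\sigma\to+\infty}\mu_n(\gamma,\sigma)=2n-1$ exceeds it. The upper bound in \eqref{eq.lubtheta} is the estimate just used. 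For the lower bound, $H[\gamma,\sigma]$ and $H[+\infty,\sigma]$ are two distinct self-adjoint extensions of a symmetric operator with deficiency indices $(1,1)$, so their eigenvalues interlace; since $\mu_1(\gamma,\sigma)<\mu_1(+\infty,\sigma)$ (equality would force the recessive solution to vanish together with its derivative at $0$), interlacing gives $\mu_n(\gamma,\sigma)>\mu_{n-1}(+\infty,\sigma)$, and $\mu_{n-1}(+\infty,\sigma)>2n-3$ because extending a Dirichlet eigenfunction by zero is an admissible test function for the full-line harmonic oscillator (whose $(n-1)$-th eigenvalue is $2n-3$) while the latter's eigenfunctions do not vanish at $0$; evaluating at $\sigma=\xi_{n-1}(\gamma)<\infty$ gives $\Theta^{[n-1]}(\gamma)>2n-3$.

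For $\gamma=+\infty$ the Feynman--Hellmann computation above still applies, but now $u(0)=0$ and $\dot u(0)=0$, so $u'+\dot u=c\,u$ evaluated at $0$ forces $u'(0)=0$, which is impossible for a Dirichlet eigenfunction; hence $\mu_n(+\infty,\cdot)$ has no critical point, and being continuous with limits $+\infty$ at $-\infty$ and $2n-1$ at $+\infty$, it is strictly decreasing. The main obstacle I anticipate is not the algebra of the third paragraph --- which is short once one spots that $u'+\partial_\sigma u$ belongs to the $L^2$-kernel of $H-\mu$ at a critical point --- but the two ``imported'' ingredients: the convergence to $2n-1$ of the eigenvalues of the harmonic oscillator on a half-line with a receding endpoint (needed for the $\sigma\to+\infty$ lower bound), and the interlacing of Dirichlet/Robin eigenvalues underlying \eqref{eq.lubtheta}.
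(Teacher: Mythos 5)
Your proof is correct and, in part, genuinely different from the paper's. The central identity $\partial_\sigma^2\mu_n(\gamma,\cdot)|_{\sigma_c}=2\sigma_c|u_n(0)|^2$ is the Dauge--Helffer formula \eqref{equ:partial2sigma(mu)} that the paper attributes to Kachmar, but you give a self-contained derivation: spotting that $u'+\partial_\sigma u$ solves the ODE $(H-\mu)v=0$ at a critical point (because $(H-\mu)u'=-2(t-\sigma)u$ and $(H-\mu)\partial_\sigma u=2(t-\sigma)u$ when $\dot\mu=0$), using only that the space of $L^2$ solutions of this limit-point ODE is one-dimensional (no boundary condition needed), hence $u'+\partial_\sigma u=cu$, with $c=-\tfrac12u(0)^2$ from the normalizations. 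This is a pleasant shortcut, and it dispenses with Proposition \ref{Prop.cri}. The observation that $\dot\mu<0$ on $\sigma\leq0$ forces $\sigma_c>0$, hence $\partial_\sigma^2\mu(\sigma_c)>0$, gives uniqueness of the critical point by the ``all critical points are strict minima'' argument, and the same computation at $\gamma=+\infty$ yields $u'(0)=0$, ruling out critical points of the Dirichlet dispersion curves entirely; the paper merely asserts the monotone-decreasing Dirichlet behaviour. Where you deviate most from the paper is the lower bound in \eqref{eq.lubtheta}: the paper's Appendix~\ref{sec.deGennes} shifts the Robin eigenfunction to its first zero and invokes Sturm oscillation theory to identify $\mu_n(\gamma,\sigma)$ with a Dirichlet eigenvalue $\mu_{n-1}^{\mathrm{Dir}}(\sigma-z_{n,1})$, whereas you invoke the strict interlacing of eigenvalues of two distinct self-adjoint extensions of a deficiency-$(1,1)$ symmetric operator, anchored by $\mu_1(\gamma,\sigma)<\mu_1(+\infty,\sigma)$. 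Both are valid; the interlacing route is arguably cleaner (no shift of $\sigma$, no zero-counting), but, as you note, it imports the classical rank-one perturbation interlacing theorem. The remaining imported facts---analyticity via Kato's type (B) theory, the $\sigma\to\pm\infty$ limits, and the strict upper bound on the infimum via truncated translated Hermite functions---are all standard and match what the paper references to \cite{K06,FH10}; your sketch of the $\sigma\to+\infty$ lower bound via Neumann bracketing at $t=\sqrt\sigma$ (after absorbing the Robin boundary term) and of the upper bound via the exponentially small truncation error is correct in spirit, though those are the places where a reader would need to fill in details.
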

The non-degeneracy of the minimum of $\mu_n(\gamma,\cdot)$ for
$\gamma\in\R$ is obtained by adapting the Dauge-Helffer formula,
see~\cite{K06} for the case $n=1$, which gives:
\begin{equation}
  {\partial^2_\sigma}\mu_n(\gamma,\sigma)_{| \sigma
    = \xi_{n-1}(\gamma)} = 2 \xi_{n-1}(\gamma)
  \abs{u_{n}^{[\gamma,\sigma]}(0)}^2 \,.
  \label{equ:partial2sigma(mu)}
\end{equation}
The lower bound in \eqref{eq.lubtheta} will be established in Appendix
\ref{sec.deGennes}. This proposition has the following elementary but
important consequences for our analysis, which are illustrated in
Figure~\ref{fig:dispersion}.

\begin{corollary}\label{coro:dispersion}
  Let $\gamma\in\R\cup \{+\infty\}$ be fixed. Let $\Theta$ be the set
  of all critical values of the functions $\mu_n$: we have
  \[
    \Theta = \{\Theta^{[n-1]}(\gamma), \quad n\geq 1\}\,.
  \]
  Let $\Lambda$ be the set of limit points of the functions $\mu_n$ at
  infinity:
  \[
    \Lambda:= \{ 2n-1, \quad n\geq 1\}\,.
  \]
  Let $[a,b]\subset \R$ be an interval disjoint from $\Lambda$. Let
  either $n=1$ if $a<1$ or $n\geq 2$ be such that
  $[a,b]\subset (2n-3, 2n-1)$. (In the case $n=1$ we allow
  $a=-\infty$.)  It follows from~\eqref{equ:limites_sigma} that for
  any integer $k\geq 1$, $\mu_k^{-1}([a,b])$ is compact.

  Let $p(k)$ be the number of connected components of
  $\mu_k^{-1}([a,b])$: we have
  \[
    \begin{cases}
      p(k)=1 & \textup{if } 1 \leq k < n\\
      p(n) = 0 & \textup{if } \gamma=+\infty\\
      p(n) = 1 & \textup{if } \gamma\in\R \textup{ and } \Theta^{[n-1]}\in[a,b]\\
      p(n) = 2  & \textup{if } \gamma\in\R \textup{ and } \Theta^{[n-1]} < a\\
      p(n) = 0  & \textup{if } \gamma\in\R \textup{ and } b < \Theta^{[n-1]}\\
      p(k)=0 & \textup{if } k > n\,.
    \end{cases}
  \]
  Therefore, when $\gamma\in\R$,
  \begin{equation}
    N(\gamma,a,b) := \# \{k\geq 1 :
    \mu_k(\gamma,\cdot)^{-1}([a,b])\neq\emptyset\}=
    \begin{cases}
      n &\mbox{ if } b\geq\Theta^{[n-1]}(\gamma)\\
      n-1&\mbox{otherwise}
    \end{cases}\,,
    \label{equ:N}
  \end{equation}
  and if $\gamma=+\infty$ (Dirichlet case) then $\mu_1(+\infty,\cdot)$
  does not take any value in $(-\infty,1)$, and
  $N(\gamma,a,b) = n - 1$.
\end{corollary}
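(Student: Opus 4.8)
The plan is to reduce everything to the monotonicity structure of one dispersion curve, as furnished by Proposition~\ref{prop.dispersion}. Fix $k\geq 1$, write $m_k:=\Theta^{[k-1]}(\gamma)$ and, when $\gamma\in\R$, $\xi_{k-1}:=\xi_{k-1}(\gamma)$. First I would record the exact shape of $\mu_k(\gamma,\cdot)$. For $\gamma\in\R$ it is a continuous strictly decreasing bijection from $(-\infty,\xi_{k-1}]$ onto $[m_k,+\infty)$ and a continuous strictly increasing bijection from $[\xi_{k-1},+\infty)$ onto $[m_k,2k-1)$, the value $2k-1$ being a limit that is \emph{not} attained; for $\gamma=+\infty$ it is a continuous strictly decreasing bijection from $\R$ onto $(2k-1,+\infty)$. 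In particular $m_k<2k-1$ for every $k$, and the range of $\mu_1(+\infty,\cdot)$ is $(1,+\infty)$, which omits $(-\infty,1)$ as asserted. The compactness of $\mu_k^{-1}([a,b])$ is then immediate: on each monotonicity branch the preimage of $[a,b]$ is a closed interval, possibly empty or a single point, and \eqref{equ:limites_sigma} together with $[a,b]\cap\Lambda=\emptyset$ forbids any escape to $\pm\infty$. Thus $\mu_k^{-1}([a,b])=J_-\cup J_+$, a union of at most two compact intervals, with $J_-$ on the decreasing branch, $J_+$ on the increasing branch (and $J_+=\emptyset$ when $\gamma=+\infty$), and $J_-\cap J_+\neq\emptyset$ if and only if $m_k\in[a,b]$, the meeting point then being $\xi_{k-1}$.

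Next I would compute $p(k)$ from when $J_\pm$ are nonempty and whether they meet. For $\gamma\in\R$, $J_-$ is nonempty iff $b\geq m_k$, while $J_+$ is nonempty iff $b\geq m_k$ and $a<2k-1$. Hence $p(k)=0$ if $b<m_k$; $p(k)=1$ if $a\leq m_k\leq b$ (the two branches overlap at $\xi_{k-1}$, and $m_k<2k-1$ automatically); $p(k)=2$ if $m_k<a<2k-1$ (two disjoint intervals, neither reaching $\xi_{k-1}$); and $p(k)=1$ if $m_k<a$ and $a\geq 2k-1$. For $\gamma=+\infty$, $\mu_k^{-1}([a,b])$ is nonempty --- and then a single interval --- if and only if $b>2k-1$.

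It remains to insert the hypotheses on $[a,b]$ and $k$. If $1\leq k<n$ (a vacuous range when $n=1$), then $m_k<2k-1\leq 2n-3<a$, so $m_k<a$ and $a\geq 2k-1$, whence $p(k)=1$ when $\gamma\in\R$, and $b>2n-3\geq 2k-1$ likewise gives a single interval when $\gamma=+\infty$. If $k=n$, then $a<2n-1=2k-1$ by hypothesis; for $\gamma=+\infty$ the bound $b<2n-1$ forces $p(n)=0$, while for $\gamma\in\R$ the three mutually exclusive and exhaustive possibilities $\Theta^{[n-1]}(\gamma)<a$, $\Theta^{[n-1]}(\gamma)\in[a,b]$, $b<\Theta^{[n-1]}(\gamma)$ yield, by the previous paragraph, $p(n)=2$, $1$, $0$ respectively. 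If $k>n$, then $k\geq 2$, so \eqref{eq.lubtheta} gives $m_k=\Theta^{[k-1]}(\gamma)>2k-3\geq 2n-1>b$, whence $b<m_k$ and $p(k)=0$ (and $2k-1\geq 2n+1>b$ when $\gamma=+\infty$). Finally $N(\gamma,a,b)=\#\{k\geq 1:p(k)\geq 1\}$ receives a contribution $1$ from each $k\in\{1,\dots,n-1\}$, a contribution from $k=n$ precisely when $\gamma\in\R$ and $b\geq\Theta^{[n-1]}(\gamma)$, and nothing from $k>n$; this is exactly \eqref{equ:N}, with value $n-1$ in the Dirichlet case. I do not expect any genuine obstacle here --- all the substance sits in Proposition~\ref{prop.dispersion} and~\eqref{eq.lubtheta} --- so the only point needing care is to keep track of the non-attained asymptote $2k-1$, which is exactly what distinguishes the cases $p(n)=0,1,2$ and the Dirichlet behaviour at $k=n$, and to invoke the open-interval constraint $[a,b]\subset(2n-3,2n-1)$ (together with $[a,b]\cap\Lambda=\emptyset$) at the right places.
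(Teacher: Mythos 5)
Your proof is correct and constitutes exactly the elementary case-by-case verification that the paper leaves implicit when it calls the corollary an ``elementary but important consequence'' of Proposition~\ref{prop.dispersion}; there is nothing to reconcile, since no separate argument is supplied in the paper. One small point worth making explicit: your claim $m_k<2k-1$ for \emph{all} $k\geq 1$ (including $k=1$, where \eqref{eq.lubtheta} is not stated) follows from the ``minimum not attained at infinity'' clause of Proposition~\ref{prop.dispersion} together with the monotone approach to $2k-1$ on the increasing branch, which is indeed what you invoke.
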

  
% For $n\geq 1$, let us consider an interval of the form
% $[a,b]\subset(2n-3,2n-1)$ (if $n\geq 2$) or of the form
% $(-\infty,b]\subset(-\infty,1)$. Moreover, has the following
% important consequence: see
\begin{figure}[h]
  \centering \includegraphics[width=\linewidth]{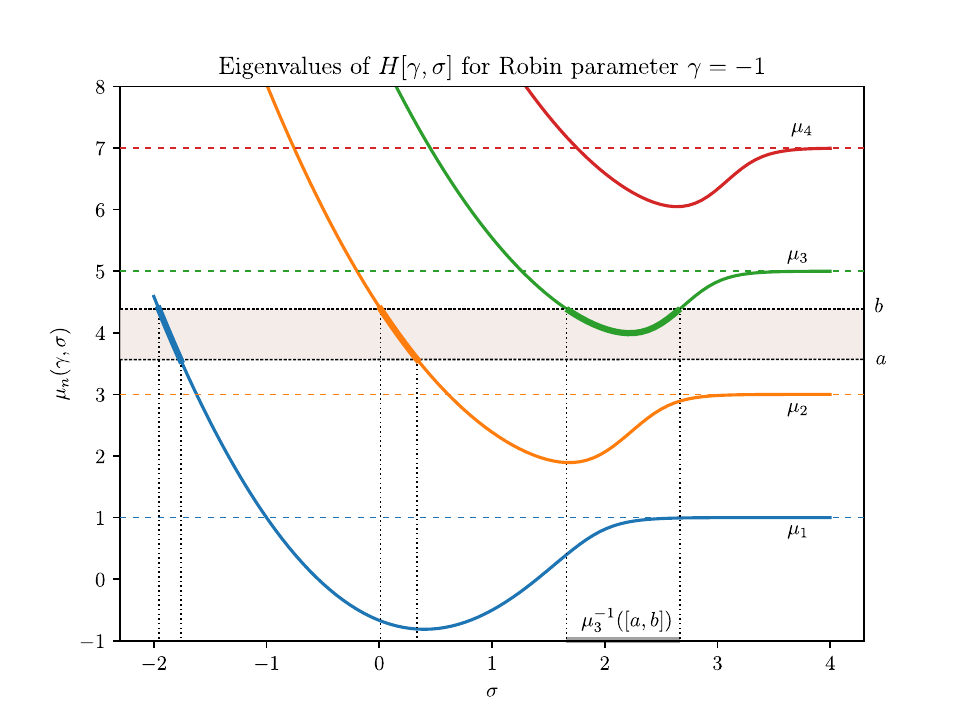}
  \caption{This figure shows the dispersion curves
    $\mu_k(\gamma,\cdot)$, for $\gamma=-1$.  We visualise the preimage
    of a given interval $[a,b]\subset (2n-3, 2n-1)$ with $n=3$. The
    curves are obtained by a standard finite difference numerical
    scheme.}
  \label{fig:dispersion}
\end{figure}
From now on, we denote by $N(\gamma,a,b)=N$ this cardinal.
\begin{assumption}\label{rem.abh}
  In the following, $a$ and $b$ are allowed to depend on $h$, as soon
  as they stay in an $h$-independent compact interval inside
  $(2n-3, 2n-1)$.
\end{assumption}
With this picture in mind, for each $k\geq 1$, we may construct a
smooth function $\overset{\circ}{\mu}_k$, bounded with all its
derivatives, which coincide with $\mu_k$ in a neighborhood of
$\mu_k^{-1}([a,b])$. Indeed, let $\Xi_0: \R \to \R$ be a smooth,
bounded with all its derivatives, and increasing function such that
for all $k\in\{1,\ldots,N\}$,
$\mu_k(\gamma,\Xi_0(\sigma))=\mu_k(\gamma,\sigma)$ in a neighborhood
of $\mu_k^{-1}([a,b])$ and $\mu_k\circ\Xi_0$ takes its values in
$(-\infty,a)\cup(b,+\infty)$ away from it. We consider
\begin{equation}
  \overset{\circ}{\mu}_k:=\mu_k (\gamma,\Xi_0(\cdot))\,,
  \label{equ:circle}
\end{equation}
where we omit
the reference to the parameter $\gamma$ to lighten the notation.  In
the following, we will more generally denote by
$\overset{\circ}{\varphi}$, the function $\varphi$ after $\Xi_0$.

\subsection{Results}
Let us now describe the main results of our article, which will be
expressed in terms of pseudo-differential operators in one dimension.

\subsubsection{A pseudo-differential framework}
The bounded functions $\overset{\circ}{\mu}_k$ will be convenient to
state our main theorem, which involves
$h^{\frac12}$-pseudo-differential operators with symbols in the usual
class $S_{\R^2}(1)$ given by
\[
  S_{\R^2}(1)=\{a\in\mathscr{C}^{\infty}(\R^2_{s,\sigma}) :
  \forall\alpha\in\N^2\,,\exists C_\alpha>0 : |\partial^\alpha a|\leq
  C_\alpha\}\,.
\]
As we said before, the eigenfunctions of $\mathscr{L}_h$ will be
localized near the boundary of $\Omega$, which is a closed smooth
curve with length $2L$. Our main result describes their distribution
with the help of an $h^{\frac12}$-pseudo-differential operator on the
boundary (see for instance \cite[Section 4.1]{FHKR22} where similar
considerations have been done in the context of discontinuous magnetic
fields). Let us denote $\hbar=h^{\frac12}$. We recall that the Weyl
quantization of a symbol $p$ is given by the formula:
\begin{equation}\label{eq.Weylquant}
  (\mathrm{Op}^{\mathrm{W}}_\hbar p) \psi(x) =
  \frac{1}{2\pi\hbar}\int_{\R^2}e^{i(x-y)\eta/\hbar}
  p\left(\frac{x+y}{2},\eta\right)\psi(y)\mathrm{d}y\mathrm{d}\eta\,,
  \quad \forall\psi\in\mathscr{S}(\R)\,,
\end{equation}
and that this formula defines a bounded operator form $L^2(\R)$ to
$L^2(\R)$ if $p\in S_{\R^2}(1)$, by the Calderón-Vaillancourt
theorem. To shorten the notation, we will sometimes write
$p^{\mathrm{W}}$ instead of $\mathrm{Op}^{\mathrm{W}}_\hbar p$.

Let $\T_{2L} = \R/2L\Z$, and $L^2(\T_{2L})$ be the subset of
$L^2_{\mathrm{loc}}(\R)$ of $2L$-periodic functions, equipped with the
usual $L^2$ norm on $[0,2L]$. When $p\in S_{\T_{2L}\times \R}(1)$,
\emph{i.e.} $p\in S_{\R^2}(1)$ and is $2L$-periodic in its first
variable $s$, then for any $\theta\in\R$, the operator given in
\eqref{eq.Weylquant} induces a bounded operator from
$e^{i\theta\cdot}L^2(\T_{2L})$ to $e^{i\theta\cdot}L^2(\T_{2L}) $ ---
we denote by $e^{i\theta\cdot}$ the function $x\mapsto e^{i\theta x}$.

\begin{remark}\label{rem.period}
  The space $e^{i\theta\cdot}L^2(\T_{2L})$ only depends on the class of
  $\theta$ modulo $\frac{\pi}{L}$; it is equal to the subspace of
  functions in $L^2_{\mathrm{loc}}(\R)$ equipped with the Floquet
  boundary condition $\psi(x+2L) = e^{i2L\theta}\psi(x)$.  The
  operator $\mathrm{Op}^{\mathrm{W}}_\hbar p$ acting on
  $e^{i\theta\cdot}L^2(\T_{2L})$ is unitarily equivalent to
  $\mathrm{Op}^{\mathrm{W}}_\hbar p(x,\eta+\hbar\theta)$ acting on
  $L^2(\T_{2L})$.
\end{remark}

\subsubsection{Main theorem}
Since our main result describes the spectrum of $\mathscr{L}_h$
``modulo $\O(h^\infty)$'', we need to make this notion precise.
\begin{definition}\label{defi:coincide}
  In this article, we will say that the spectra of two self-adjoint
  operators $T_1$ and $T_2$ depending on $h$ coincide in $I_h$ modulo
  $\mathscr{O}(h^\alpha)$, $\alpha\in\R\sqcup\{+\infty\}$, when there
  exists $C,h_0>0$ such that, for all $h\in(0,h_0)$,
  \begin{enumerate}[---]
  \item $T_1$ and $T_2$ have discrete spectrum in
    $I_h+[-Ch^{\alpha},Ch^\alpha]$,
  \item for all interval $J_h\subset I_h$ we can find an interval
    $ K_{h}$ such that $J_h\subset K_{h}$ with
    $\mathrm{d}_{H}(K_{h},J_h)\leq Ch^\alpha$ and
    \[
      \mathrm{rank}\,\mathds{1}_{ J_h}(T_1) \leq
      \mathrm{rank}\,\mathds{1}_{K_{h}}(T_{2})
      \,,\quad\mathrm{rank}\,\mathds{1}_{ J_h}(T_2) \leq
      \mathrm{rank}\,\mathds{1}_{K_{h}}(T_{1}) \,,
      \] where $\mathrm{d}_{H}$ denotes the Hausdorff distance:
      \[
        \mathrm{d}_{H}(A,B)=\sup_{(a,b)\in A\times
          B}\max(d(a,B),d(b,A))\,.
      \]
    \end{enumerate}
   This definition translates to discrete subsets of $\R$ as
  follows: for each discrete subset $S\subset \R$, we associate the
  sum of Dirac masses $\delta_S:=\sum_{s\in S} \delta_s$, and consider
  the corresponding self-adjoint operator whose spectral measure is
  $\delta_S$. Then we say that two discrete subsets $A_1$ and $A_2$
  coincide modulo $\O(h^\alpha)$ when the spectra of the corresponding
  operators coincide modulo $\O(h^\alpha)$ in the above sense. In
  order to deal with multiplicities, we will, by convention, associate
  with the \emph{disjoint union} $S\sqcup S'$ the operator
  corresponding to the spectral measure $\delta_S+\delta_{S'}$.
\end{definition}

\begin{remark}
  \begin{enumerate}[---]
  \item The relation "the spectra of $T_1$ and $T_2$ coincide in $I_h$
    modulo $\mathscr{O}(h^\alpha)$'' is an equivalence relation. It is
    obviously symmetric and reflexive (taking $K_{h} = J_h$). The
    transitivity follows from the triangle inequality for $d_H$.

  \item If the spectra of $T_1$ and $T_2$ coincide in $I_h$ modulo
    $\mathscr{O}(h^\alpha)$, then, for all
    $\widetilde{I}_h\subset {I}_h$, the spectra of $T_1$ and $T_2$
    coincide in $\widetilde{I}_h$ modulo $\mathscr{O}(h^\alpha)$.

  \item If the spectra of $T_1$ and $T_2$ coincide in $I_h$ modulo
    $\mathscr{O}(h^\alpha)$, we have
    \[
      \mathrm{d}_{H}(\mathrm{sp}(T_1)\cap I_h,\mathrm{sp}(T_2)\cap
      I_h)=\mathscr{O}(h^\alpha)\,.
    \]
  \item If the endpoints of the interval $I_h$ stay away from an
    $h^\beta$-neighborhood of the spectrum, with $\beta<\alpha$, then
    for $h$ small enough $T_1$ and $T_2$ have exactly the same number
    of eigenvalues inside $I_h$, counted with multiplicities.
  \item The notion described in Definition \ref{defi:coincide} already
    appears under various forms in the literature (see, for instance,
    the view point in \cite[Section 1]{HS84} and \cite[\S
    4]{Helffer88}).
  \end{enumerate}
\end{remark}

We can now state our main result, where we use, among others, the
eigenvalues $\mu_k(\gamma,\sigma)$ and eigenfunctions
$u_k^{[\gamma,\sigma]}$ of the de Gennes operator (Section
\ref{sec:de-gennes-operator}), the integer $N$ defined
in~\eqref{equ:N}, and the notation introduced in~\eqref{equ:circle}.

\begin{theorem}\label{thm.main}
  Under Assumption \ref{rem.abh}, the spectrum of $\mathscr{L}_h$ in
  $[ha,hb]$ coincides with that of $h\mathfrak{M}_h$
  modulo $\mathscr{O}(h^2)$, where
\[
  \mathfrak{M}_h := \begin{bmatrix}
    m_{1}^{\textup{W}} & 0 & \cdots & 0 \\
    0 & m_{2}^{\textup{W}} &  & \vdots \\
    \vdots &  &  \ddots & 0 \\
    0 & \cdots & 0 & m_{N}^{\textup{W}}
 \end{bmatrix}
\]
 is a bounded operator acting diagonally on
  $e^{i\theta(h)\cdot}L^2(\T_{2L})^N$.  Here
  \[\theta(h)=\frac{|\Omega|}{|\partial\Omega|h}\,,\]
  and each $m_k^{\textup{W}}$ is an $h^{\frac12}$-pseudodifferential
  operator with symbol in $S_{\T_{2L}\times \R}(1)$. Let us denote by
  $(s,\sigma)$ the (canonical) variables in $\T_{2L}\times \R$. Then
  we have:
  \begin{itemize}
  \item the principal symbol of $m_k^{\textup{W}}$ is
    $\overset{\circ}{\mu_k}(\sigma)$;
  \item its subprincipal symbol is
    $-\kappa(s){\overset{\circ}C_k}(\sigma)$ with
    \begin{equation}\label{eq.Cj}
      C_k(\sigma)=\big\langle \left( (\tau-\sigma)
        \tau^2-\partial_\tau-2\tau(\sigma-\tau)^2\right)u^{[\gamma,\sigma]}_k(\tau),\;
      u^{[\gamma,\sigma]}_k(\tau)\big\rangle_{L^2(\R_+)}\,,
    \end{equation}
    and $\kappa(s)$ is the curvature of the boundary at the point of
    curvilinear abscissa $s$.
  \end{itemize}
\end{theorem}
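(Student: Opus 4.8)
The plan is to perform a microlocal dimensional reduction near $\partial\Omega$, turning $\mathscr{L}_h$ (modulo $\O(h^2)$ in the relevant spectral window) into a direct sum of scalar $\hbar$-pseudodifferential operators on the boundary. First I would introduce tubular (boundary) coordinates $(s,t)$, where $s$ is the curvilinear abscissa along $\partial\Omega$ (of total length $2L$) and $t=\dist(x,\partial\Omega)$. In these coordinates the Euclidean metric becomes $(1-t\kappa(s))^2\,\dd s^2+\dd t^2$, and after the rescaling $t=\hbar\tau$ (consistent with the $\sqrt E/B$ heuristic in the introduction) one expands the operator in powers of $\hbar=h^{1/2}$. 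Choosing a gauge with $\A\cdot\mathbf{n}=0$, the magnetic one-form along the boundary contributes a phase whose holonomy around $\partial\Omega$ equals $|\Omega|/h$ by Stokes' theorem; this is precisely the origin of the Floquet parameter $\theta(h)=|\Omega|/(|\partial\Omega|h)$ and of the space $e^{i\theta(h)\cdot}L^2(\T_{2L})^N$. The leading-order operator in the rescaled variables is, fiberwise in the tangential frequency $\sigma$, exactly the de Gennes operator $H[\gamma,\sigma]$ from Section~\ref{sec:de-gennes-operator}, whose spectrum produces the Landau-type levels $\mu_k(\gamma,\sigma)$.

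Next, I would implement a Grushin/Feshbach-type reduction: since the spectral window $[ha,hb]$ sits strictly between consecutive limit values in $\Lambda$ and meets only the curves $\mu_1,\dots,\mu_N$ (Corollary~\ref{coro:dispersion}), the eigenfunctions are, microlocally, superpositions of the first $N$ de Gennes eigenmodes $u_k^{[\gamma,\sigma]}$. Using Agmon/Combes–Thomas estimates (alluded to in the introduction — the exponential decay toward the interior does not come from standard Agmon but from the Combes–Thomas strategy) I would justify truncating to a neighborhood of $\partial\Omega$ with an error $\O(h^\infty)$, and then justify replacing $\mu_k$ by the globally bounded $\overset{\circ}{\mu}_k$ from \eqref{equ:circle}, which is licit because the construction of $\Xi_0$ leaves $\mu_k$ unchanged near $\mu_k^{-1}([a,b])$ and pushes the rest outside $[a,b]$. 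The reduced operator is then an $N\times N$ matrix of $\hbar$-pseudodifferential operators on $\T_{2L}$; a second reduction (diagonalization of the principal symbol, whose eigenvalues $\overset{\circ}{\mu}_k(\sigma)$ are simple by Cauchy–Lipschitz) decouples it into the diagonal form $\mathfrak{M}_h=\mathrm{diag}(m_1^{\textup{W}},\dots,m_N^{\textup{W}})$ modulo $\O(h^2)$, with the off-diagonal pseudodifferential couplings absorbed into this remainder.

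The heart of the computation is the identification of the symbols of $m_k^{\textup{W}}$ to second order. The principal symbol is read off from the leading de Gennes operator: it is $\overset{\circ}{\mu}_k(\sigma)$, independent of $s$. For the subprincipal symbol one must collect the $\O(\hbar)$ terms in the expansion of $(-ih\nabla-\A)^2$ in boundary coordinates — these are exactly the terms proportional to the curvature $\kappa(s)$ arising from the metric factor $(1-t\kappa)$ and from the expansion of the magnetic phase — and project them onto the normalized eigenfunction $u_k^{[\gamma,\sigma]}$. Carrying out the projection, one gets $-\kappa(s)\,\overset{\circ}{C}_k(\sigma)$ with $C_k(\sigma)=\langle\big((\tau-\sigma)\tau^2-\partial_\tau-2\tau(\sigma-\tau)^2\big)u_k^{[\gamma,\sigma]},u_k^{[\gamma,\sigma]}\rangle_{L^2(\R_+)}$; here the Robin boundary condition $u'(0)=\gamma u(0)$ intervenes in the integration by parts that produces the $-\partial_\tau$ contribution, and the Dirichlet case $\gamma=+\infty$ is recovered in the limit. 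The main obstacle I expect is bookkeeping: keeping the Grushin reduction, the $\hbar$-Weyl calculus on the circle, the rescaling, and the gauge/Floquet transformation all mutually consistent to second order, so that the $s$-dependence enters only through $\kappa(s)$ at the subprincipal level and no spurious first-order term survives — and checking that the various $\O(h^\infty)$ localization errors and the $\O(h^2)$ reduction errors are genuinely of the claimed size uniformly for $\gamma\in[-\gamma_0,+\infty]$.
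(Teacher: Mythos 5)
Your proposal follows essentially the same route as the paper: tubular coordinates with the rescaling $t=\hbar\tau$, Combes--Thomas exponential localization (Proposition~\ref{prop.Agmon}), truncation to the compact set of relevant $\sigma$ via $\Xi_0$, a Grushin problem with the de~Gennes operator $H[\gamma,\sigma]$ as principal operator-valued symbol leading to the effective matrix operator $M_\hbar^{\mathrm W}$ (Proposition~\ref{prop.LhtoMh}), and a final conjugation by $\exp(\hbar A^{\mathrm W})$ to diagonalize modulo $\O(\hbar^2)$ (Section~\ref{sec.expA}). One small imprecision: you describe the last step as a ``diagonalization of the principal symbol,'' but after the Grushin reduction the principal matrix symbol $M_0$ is \emph{already} diagonal (diagonal entries $\overset{\circ}{\mu}_k$); what must be diagonalized is the subprincipal block $M_1$, and it is the simplicity and uniform separation of the $\overset{\circ}{\mu}_k$ that makes the commutator equation $[M_0,A]=M_1-\mathrm{diag}(M_1)$ solvable for a skew-symmetric $A\in S(1)$.
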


\begin{remark}\label{rem16}
  One can check that, for all $k\geq 1$, $C_k(\xi_{k-1}(\gamma))$ has
  the same sign as $\gamma_0^{[k-1]}-\gamma$, see Proposition
  \ref{prop.sgnCj} where the threshold $\gamma_0^{[k-1]}$ is
  discussed. Proposition \ref{prop.sgnCj} also corrects a mistake in
  \cite[Lemma II.3 \& (2.24)]{K06}, where it is stated that
  $C_1(\xi_{0}(\gamma))$ is always positive.
\end{remark}
It is important to notice that Theorem \ref{thm.main} is actually a
diagonalization result since it reduces the spectral analysis of
$\mathscr{L}_h$ to that of a family of pseudo-differential operators
in one dimension: the spectrum of $\mathfrak{M}_h$ is the
superposition (counting multiplicities) of the spectra of the
$m_k^{\textup{W}}$, $k=1,\dots, N$.  As it turns out, the spectrum of
each of these pseudo-differential operators can be completely
described using (refinements of) old and new results in the
literature. Indeed, notice that the principal symbols
$\overset{\circ}{\mu_k}$ have a special feature: they depend only on
the frequency variable $\sigma$, and, as functions of $\sigma$, they
have at most a unique critical point, which is a nondegenerate minimum
(Proposition~\ref{prop.dispersion}). Hence, from a microlocal
viewpoint, only two situations must be considered. Let $E\in[a,b]$,
either $E$ is a regular value of $\overset{\circ}{\mu_k}$ (or
$\mu_k(\gamma,\cdot)$, equivalently), and then the well-known
Bohr-Sommerfeld rules apply, or $E$ is a critical value of
$\overset{\circ}{\mu_k}$, in which case the Hamiltonian
$(s,\sigma) \mapsto \overset{\circ}{\mu_k}(\sigma)$ admits a
transversally non-degenerate minimum on a circle, and the recent
study~\cite{D-VN21} of folded quantum action variables applies.

\subsubsection{Eigenvalues in a regular spectral window}

Our first application concerns the case where the interval $[a, b]$
consists of regular values of all $\mu_k$.  We will use the following
well-known spectral result, an extension to all orders of the
Bohr-Sommerfeld rules (see, for instance, \cite{Rozenblum75, HR82,
  HR84, Duistermaat74}), which we prove in
Section~\ref{sec:proof-BSreg}.

\begin{proposition}\label{prop.pseudo1D}
  Consider an $\h$-pseudo-differential operator
  $\mathsf{P}_\h\in\mathrm{Op}^{\mathrm{W}}_\hbar(S_{\R^2}(1))$ with
  symbol $2L$-periodic with respect to $s$ and with principal symbol
  $(s,\sigma)\mapsto\mu(\sigma)$ and subprincipal symbol
  $(s,\sigma)\mapsto -\kappa(s)C(\sigma)$. We consider its realization
  on $e^{is\theta(\hbar^2)}L^2(\T_{2L})$. Let $E$ be a regular value
  of $\mu$ for which $\mu^{-1}(E)$ is a finite set of points
  $\sigma^E_1,\ldots, \sigma^E_p$.
	
  Then, there exists $\varepsilon>0$ such that
  $[E-\varepsilon, E+\varepsilon]$ is a set of regular values of
  $\mu$, and $\mu^{-1}([E-\varepsilon, E+\varepsilon])$ is the
  disjoint union $\Sigma_1\sqcup\dots\sqcup\Sigma_p$ where each
  $\Sigma_q\subset\R$ is a compact interval containing $\sigma_q^E$ in
  its interior. Let $\varepsilon>0$ be any value satisfying the above
  conditions. For each $q=1,\dots, p$, let $\tilde \Sigma_q$ be an
  open interval containing $\Sigma_q$ such that the $\tilde\Sigma_q$'s
  are pairwise disjoint. Then the following holds.

  For each $q=1,\dots, p$, there exists a smooth map
  $\tilde\Sigma_q\ni \sigma\mapsto f_q(\sigma,\h)\in\R$ with an
  asymptotic expansion, in the smooth topology,
  \[
    f_q(\sigma,\h) \sim f_{q,0}(\sigma) + \h f_{q,1}(\sigma) + \h^2
    f_{q, 2}(\sigma) + \cdots
  \]
  depending only on the symbol of $\mathsf{P}_\h$ in the cylinder
  $\T_{2L}\times \Sigma_q$, such that the spectrum of $\mathsf{P}_\h$
  inside $[E-\varepsilon, E+\varepsilon]$ coincides, modulo
  $\O(\h^\infty)$, with the disjoint union
  \[
    \left(\bigsqcup_{q=1}^p \left\{f_q(\sigma,\h), \, \sigma \in \h
        (\tfrac{\pi}{L}\Z+\theta (\h^2)) \cap
        \tilde\Sigma_q\right\}\right) \cap [E-\varepsilon,
    E+\varepsilon]\,,
  \]
  see Definition~\ref{defi:coincide}. Moreover, we have
  \begin{align}
    f_{q,0}(\sigma)
    & = \mu(\sigma)_{| \Sigma_q} \label{equ:action}\\
    f_{q,1}(\sigma)
    & = \frac{-C(\sigma)_{| \Sigma_q}}{2L}\int_0^{2L}\kappa(s)\textup{d}s\,.
      \label{equ:kappa}
  \end{align}
\end{proposition}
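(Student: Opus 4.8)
The plan is to reduce to the microlocal normal form for a pseudodifferential operator near a regular value of its principal symbol, and then extract the first two terms of the resulting quantum action from the Weyl symbol. First I would localize: since $E$ is a regular value of $\mu$, by the implicit function theorem there is $\varepsilon>0$ such that $[E-\varepsilon,E+\varepsilon]$ consists of regular values and $\mu^{-1}([E-\varepsilon,E+\varepsilon])$ splits into the $p$ disjoint compact intervals $\Sigma_q$, each a neighborhood of a simple root $\sigma_q^E$; shrinking $\varepsilon$ further we may assume the $\tilde\Sigma_q$ are disjoint as well. Using a microlocal partition of unity in the $\sigma$-variable adapted to the $\tilde\Sigma_q$ (the symbol is globally in $S_{\R^2}(1)$ and $2L$-periodic in $s$, so cut-offs in $\sigma$ preserve this class), the part of the spectrum of $\mathsf{P}_\h$ in $[E-\varepsilon,E+\varepsilon]$ decouples, modulo $\O(\h^\infty)$, into contributions from operators $\mathsf{P}_\h^{(q)}$ whose symbols agree with that of $\mathsf{P}_\h$ on $\T_{2L}\times\Sigma_q$ and whose principal symbol $\mu$ is, on the relevant energy shell, a diffeomorphism in $\sigma$. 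This is the standard ``splitting over connected components of the energy surface'' argument.

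Next, on each piece I would conjugate $\mathsf{P}_\h^{(q)}$ to a normal form $g_q(\h D_s,\h)$ — i.e.\ a Fourier multiplier — by an $\h$-pseudodifferential (or Fourier integral) conjugation. Classically the Hamiltonian $(s,\sigma)\mapsto\mu(\sigma)$ is already ``straightened'' (it generates translation in $s$ at speed $\mu'(\sigma)\neq 0$), so the obstruction is purely at the level of lower-order symbols: one constructs an elliptic $U_q=\mathrm{Op}^{\mathrm W}_\h(u_q)$ with $u_q\sim 1+\h u_{q,1}+\cdots$, $2L$-periodic in $s$, such that $U_q^*\mathsf{P}_\h^{(q)}U_q$ has an $s$-independent total symbol; solving the cohomological equations order by order is possible precisely because averaging over the circle $\T_{2L}$ kills the relevant coboundaries, and any purely $s$-dependent obstruction at order $\h$ is removed by absorbing its mean. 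Tracking the symbol calculus through this conjugation gives the $\sigma$-multiplier symbol $g_q(\sigma,\h)\sim g_{q,0}(\sigma)+\h g_{q,1}(\sigma)+\cdots$ with $g_{q,0}=\mu$ on $\Sigma_q$ (the principal symbol is conjugation-invariant) and $g_{q,1}=\langle\text{subprincipal symbol}\rangle_s$, the $s$-average of $-\kappa(s)C(\sigma)$, which is exactly $-\frac{C(\sigma)}{2L}\int_0^{2L}\kappa(s)\,\dd s$; the Weyl quantization is used here so that no extra $\frac{1}{2i}\partial_s\partial_\sigma$ term enters the subprincipal symbol. A Fourier multiplier $g_q(\h D_s,\h)$ on $e^{is\theta}L^2(\T_{2L})$ is diagonalized by the Floquet–Fourier basis $e^{is(\frac{\pi}{L}\ell+\theta)}$, $\ell\in\Z$, with eigenvalues $g_q\big(\h(\tfrac{\pi}{L}\ell+\theta),\h\big)$ — this is the content of Remark \ref{rem.period} — so the spectrum in $[E-\varepsilon,E+\varepsilon]$ coincides, modulo $\O(\h^\infty)$, with $\{f_q(\sigma,\h):\sigma\in\h(\tfrac{\pi}{L}\Z+\theta(\h^2))\cap\tilde\Sigma_q\}$ upon setting $f_q:=g_q$; collecting the $p$ pieces and intersecting with $[E-\varepsilon,E+\varepsilon]$ gives the claimed disjoint union. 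The identifications \eqref{equ:action}–\eqref{equ:kappa} are then read off from $g_{q,0}$, $g_{q,1}$ above.

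The main obstacle is the construction of the conjugation and the bookkeeping that guarantees the remainder is genuinely $\O(\h^\infty)$ in the sense of Definition \ref{defi:coincide}, \emph{uniformly} up to the edges of $\tilde\Sigma_q$ — i.e.\ one must be careful that the Fourier-multiplier normal form is valid on an open set $\tilde\Sigma_q$ strictly containing the closed energy band $\Sigma_q$, so that eigenvalues landing near $\partial\Sigma_q$ are still captured, and that the spectral counting (rank of spectral projectors) matches on both sides with a Hausdorff error $\O(\h^\infty)$. Concretely this requires: (i) an Egorov/normal-form argument with all estimates uniform on $\overline{\tilde\Sigma_q}$; (ii) a standard ``exponentially small tails'' estimate showing the quasimodes built from the multiplier are genuine eigenfunctions of $\mathsf{P}_\h$ modulo $\O(\h^\infty)$ and conversely that every eigenfunction with energy in $[E-\varepsilon,E+\varepsilon]$ is microlocalized in $\bigcup_q\T_{2L}\times\Sigma_q$, by non-stationary phase / ellipticity away from the energy shell; and (iii) a dimension-count (Floquet labels $\ell$ in a range of size $\sim\h^{-1}$) ensuring no eigenvalue is missed or double-counted. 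None of these steps is conceptually new — they are the periodic one-dimensional case of the classical Bohr–Sommerfeld machinery of \cite{HR82,HR84,Duistermaat74} — but the uniformity in $(a,b)$ allowed to vary with $h$ (Assumption \ref{rem.abh}) and the precise matching demanded by Definition \ref{defi:coincide} are what the careful part of the proof in Section \ref{sec:proof-BSreg} must supply.
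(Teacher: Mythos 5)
Your overall strategy is sound and leads to the same formulas \eqref{equ:action}--\eqref{equ:kappa}, but by a genuinely different route from the paper. The paper invokes the Bohr--Sommerfeld cocycle / sheaf machinery of Vũ Ngọc: it writes the quantization condition
\[
\mathcal{A}(E) + \h m(E)\tfrac{\pi}{2} + \h\mathcal{K}(E) + \O(\h^2) = 2\pi\h\bigl(\ell + \tfrac{L}{\pi}\theta\bigr),\qquad \ell\in\Z\,,
\]
with $\mathcal{A}(E)=2L\sigma$, $m(E)=0$ (the level curve $\{\sigma=\textup{const}\}$ projects diffeomorphically onto $s$), and $\mathcal{K}(E)$ the integral of the subprincipal form $\frac{C(\sigma)\kappa(s)}{\mu'(\sigma)}\,\dd s$ over the level curve, then inverts this series to get $f_{q,0}$ and $f_{q,1}$. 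You instead conjugate each localized piece to a Fourier multiplier and diagonalize it in the Floquet--Fourier basis, so \eqref{equ:kappa} appears directly as the $s$-average of the subprincipal symbol. Both are classical approaches; yours is arguably more explicit for this translation-invariant principal symbol, while the paper's handles the Maslov index and $\h$-dependent Floquet exponent through a ready-made cohomological framework.

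There is, however, a concrete error in your conjugation step that would make the proof, as written, fail. An elliptic $U_q = \mathrm{Op}^{\mathrm W}_\h(u_q)$ with $u_q \sim 1 + \h u_{q,1} + \cdots$ (principal symbol identically $1$) leaves the subprincipal Weyl symbol of $\mathsf{P}_\h^{(q)}$ \emph{unchanged}: in the Moyal expansion of $U_q^{-1}\mathsf{P}_\h^{(q)}U_q$, the $O(\h)$ contributions $\pm p_0u_{q,1}$ from the two sides cancel and no Poisson bracket is produced at that order. To shift the subprincipal symbol you need a conjugation with nontrivial principal part, e.g.\ a unitary $U_q = e^{i\mathrm{Op}^{\mathrm W}_\h(w_q)}$ with $w_q$ a real, $2L$-periodic, $O(1)$ symbol. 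Then the subprincipal symbol becomes $-\kappa(s)C(\sigma) + \{\mu,w_q\} = -\kappa(s)C(\sigma) + \mu'(\sigma)\partial_s w_q$, and solving the cohomological equation $\mu'(\sigma)\partial_s w_q = \bigl(\kappa(s)-\langle\kappa\rangle\bigr)C(\sigma)$ (admissible since $\mu'\neq 0$ on $\tilde\Sigma_q$ and the right-hand side has zero $s$-mean) gives the $s$-independent subprincipal symbol $-\langle\kappa\rangle C(\sigma)$, matching \eqref{equ:kappa}. With that correction, and the uniformity/tail estimates and dimension counting you already flag, your proposal would give a valid alternative proof.
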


Combining Proposition \ref{prop.pseudo1D} and Theorem \ref{thm.main},
we get the following result, where we use the notation of
Corollary~\ref{coro:dispersion} and Theorem~\ref{thm.main}.
\begin{corollary}[Spectrum of $\mathscr{L}_h$ at regular
  values]\label{cor.reg}
  Let $[a,b]$ be an interval disjoint from $\Theta$ and $\Lambda$.
  For each $k=1,\dots N$, for each $q=1,\dots, p(k)$, let
  $\Sigma_{k,q}\subset \R$ be an interval such that
  $\mu_k(\gamma,\cdot)$ is a diffeomorphism from $\Sigma_{k,q}$ to a
  neighborhood of $[a,b]$, in such a way that all $\Sigma_{k,q}$ are
  pairwise disjoint and $\bigcup_{q=1}^{p(k)}\Sigma_{k,q}$ contains
  $\mu_k(\gamma,\cdot)^{-1}([a,b])$. Then there exists a smooth map
  $\Sigma_{k,q}\ni \sigma\mapsto f_{k,q}(\sigma,\h)\in\R$ with an
  asymptotic expansion (in the smooth topology)
  \[
    f_{k,q}(\sigma,\h) \sim f_{k,q,0}(\sigma) + \h f_{k,q,1}(\sigma) +
    \h^2 f_{k,q, 2}(\sigma) + \cdots
  \]
  such that the spectrum of $\mathscr{L}_h$ in $[h a, h b]$ coincides,
  modulo $\O(h^2)$, with the disjoint union
  \[
    \left(\bigsqcup_{k=1}^N \bigsqcup_{q=1}^{p(k)} \left\{ h f_{k,q}
        (\sigma,h^{\frac12}), \, \sigma \in
        h^{\frac12}(\tfrac{\pi}{L}\Z + \theta (h)) \cap \Sigma_{k,q}
      \right\} \right) \cap [ha, hb]\,.
  \]
  Moreover, we have, when $\sigma\in \Sigma_{k,q}$,
  \begin{align}
    f_{k,q,0}(\sigma)
    & = \mu_k(\gamma,\sigma) \label{equ:fkq0}\\
    f_{k,q,1}(\sigma)
    & = -\langle \kappa \rangle C_k(\sigma)
      \label{equ:fkq1}
  \end{align}
  where $\langle \kappa \rangle$ is the average curvature:
  \[
    \langle \kappa \rangle =
    \frac{1}{2L}\int_0^{2L}\kappa(s)\textup{d}s\,.
  \]
\end{corollary}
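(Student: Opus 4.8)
The plan is to feed each diagonal block of the operator $\mathfrak M_h$ of Theorem~\ref{thm.main} into the Bohr--Sommerfeld machinery of Proposition~\ref{prop.pseudo1D}, and then to transport the resulting description back to $\mathscr L_h$ through the multiplication by $h$ and the transitivity of the equivalence relation of Definition~\ref{defi:coincide}. First, by Theorem~\ref{thm.main}, $\mathrm{sp}(\mathscr L_h)\cap[ha,hb]$ coincides modulo $\O(h^2)$ with $\mathrm{sp}(h\mathfrak M_h)\cap[ha,hb]$, and since $\mathfrak M_h$ is block-diagonal, $\mathrm{sp}(h\mathfrak M_h)=\bigsqcup_{k=1}^{N}\mathrm{sp}(h\,m_k^{\textup{W}})$ as multisets. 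Each $h\,m_k^{\textup{W}}=\hbar^2 m_k^{\textup{W}}$ is $\hbar^2$ times an $\hbar$-pseudodifferential operator whose principal symbol is $\overset{\circ}{\mu_k}(\sigma)$ and whose subprincipal symbol is $-\kappa(s)\overset{\circ}{C_k}(\sigma)$, realized on $e^{i\theta(h)\cdot}L^2(\T_{2L})$; since $\hbar=h^{1/2}$ and $|\partial\Omega|=2L$, the Floquet parameter is exactly $\theta(h)=\theta(\hbar^2)$ in the normalization of Proposition~\ref{prop.pseudo1D}.

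Next, I apply Proposition~\ref{prop.pseudo1D} to $\mathsf P_\hbar=m_k^{\textup{W}}$ for each $k=1,\dots,N$. Since $[a,b]$ is disjoint from the set $\Theta$ of critical values, every $E\in[a,b]$ is a regular value of $\mu_k(\gamma,\cdot)$, and by construction of $\Xi_0$ the function $\overset{\circ}{\mu_k}$ coincides with $\mu_k(\gamma,\cdot)$ near $\mu_k(\gamma,\cdot)^{-1}([a,b])$ and takes values outside $[a,b]$ elsewhere; hence $E$ is also a regular value of $\overset{\circ}{\mu_k}$, and $\overset{\circ}{\mu_k}{}^{-1}([a,b])$ has exactly the $p(k)$ components described in Corollary~\ref{coro:dispersion}, which after shrinking lie in the chosen intervals $\Sigma_{k,q}$. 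A routine partition-of-unity argument over the compact interval $[a,b]$, using that the families produced by Proposition~\ref{prop.pseudo1D} depend only on the symbol over the cylinders $\T_{2L}\times\Sigma_q$ and are therefore mutually compatible on overlaps, yields a single smooth family $\sigma\mapsto f_{k,q}(\sigma,\hbar)$ on each $\Sigma_{k,q}$, with an asymptotic expansion in the smooth topology, such that $\mathrm{sp}(m_k^{\textup{W}})$ in a neighborhood of $[a,b]$ coincides modulo $\O(\hbar^\infty)$ with $\bigsqcup_{q=1}^{p(k)}\{f_{k,q}(\sigma,\hbar)\,:\,\sigma\in\hbar(\tfrac{\pi}{L}\Z+\theta(\hbar^2))\cap\Sigma_{k,q}\}$ intersected with that neighborhood.

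Now I assemble and rescale. Multiplying by $\hbar^2=h$ converts the above into a description of $\mathrm{sp}(h\,m_k^{\textup{W}})$ in $[ha,hb]$ modulo $h\cdot\O(\hbar^\infty)=\O(h^\infty)$, with eigenvalue functions $h\,f_{k,q}(\sigma,h^{1/2})$ and quantization set $\sigma\in h^{1/2}(\tfrac{\pi}{L}\Z+\theta(h))\cap\Sigma_{k,q}$; taking the disjoint union over $k=1,\dots,N$ and $q=1,\dots,p(k)$ gives the claimed description of $\mathrm{sp}(h\mathfrak M_h)\cap[ha,hb]$ modulo $\O(h^\infty)$. Chaining this with the first step via transitivity of the relation of Definition~\ref{defi:coincide} (the $\O(h^2)$ error swallowing the $\O(h^\infty)$ one) proves the statement for $\mathscr L_h$ modulo $\O(h^2)$. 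Finally,~\eqref{equ:fkq0} and~\eqref{equ:fkq1} follow from~\eqref{equ:action} and~\eqref{equ:kappa}: on $\Sigma_{k,q}$ one has $\overset{\circ}{\mu_k}=\mu_k(\gamma,\cdot)$ and $\overset{\circ}{C_k}=C_k$, so $f_{k,q,0}(\sigma)=\mu_k(\gamma,\sigma)$ and $f_{k,q,1}(\sigma)=-\tfrac{1}{2L}\big(\int_0^{2L}\kappa(s)\,\dd s\big)C_k(\sigma)=-\langle\kappa\rangle\,C_k(\sigma)$.

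\textbf{Main obstacle.} The analytic content is entirely carried by Theorem~\ref{thm.main} and Proposition~\ref{prop.pseudo1D}; the only genuine work here is the bookkeeping, namely chaining two ``coincide modulo $\O(h^\alpha)$'' statements with different powers of $h$, correctly propagating the substitution $\hbar=h^{1/2}$ and the Floquet shift $\theta(h)$, and verifying that the locally constructed Bohr--Sommerfeld functions glue into a well-defined $f_{k,q}$ on all of $\Sigma_{k,q}$. This last point is where I expect to be most careful, but it is handled by the locality clause of Proposition~\ref{prop.pseudo1D} together with a standard covering argument over the compact interval $[a,b]$.
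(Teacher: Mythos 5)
Your proposal is correct and follows exactly the route the paper has in mind: the paper itself introduces Corollary~\ref{cor.reg} with the one-line statement ``combining Proposition~\ref{prop.pseudo1D} and Theorem~\ref{thm.main}'', and your write-up is precisely the (bookkeeping-heavy but straightforward) elaboration of that combination, including the correct identification $\theta(h)=\theta(\hbar^2)$, the observation that the diagonal structure of $\mathfrak M_h$ turns the spectrum into a multiset disjoint union, the gluing of the local Bohr--Sommerfeld families $f_{q}$ over $[a,b]$ via the locality clause of Proposition~\ref{prop.pseudo1D}, and the absorption of the $\O(h^\infty)$ error into the dominant $\O(h^2)$. The derivation of~\eqref{equ:fkq0}--\eqref{equ:fkq1} from~\eqref{equ:action}--\eqref{equ:kappa}, using that $\Xi_0$ is the identity on $\Sigma_{k,q}$ so $\overset{\circ}{\mu_k}=\mu_k(\gamma,\cdot)$ and $\overset{\circ}{C_k}=C_k$ there, is also correct.
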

Since the leading terms~\eqref{equ:fkq0} and~\eqref{equ:fkq1} do not
depend on $q$ (apart from the domain of definition $\Sigma_{k,q}$) we
obtain that the spectrum of $\mathscr{L}_h$ in $[h a, h b]$ coincides,
modulo $\O(h^2)$, with the disjoint union
\begin{equation}
\bigsqcup_{k=1}^N \left\{ h \mu_k (\gamma,\sigma) -
  h^{\frac32}\langle \kappa \rangle \overset{\circ}{C_k}(\sigma),
  \quad \sigma \in h^{\frac12}(\tfrac{\pi}{L}\Z + \theta (h))
\right\} \cap [ha, hb]\,.\label{equ:regular-O(h2)}
\end{equation}
As a first application of this corollary, we obtain a very accurate
formula for the number of eigenvalues of $\mathscr{L}_h$ in $[ha,hb]$,
this number being much smaller than what the crude estimate~\eqref{eq.roughNLhIh} says:
\begin{theorem}[Precise Weyl formula]\label{theo:weyl}
  Let $I_h=[ha,hb]$ where $[a,b]$ is an interval disjoint from
  $\Theta$ and $\Lambda$.  Then the number of eigenvalues of
  $\mathscr{L}_h$ in $I_h$ is
  \[
    N(\mathscr{L}_h,I_h) = \left\lfloor  \frac{L}{\pi h^{1/2}}\sum_{k,q}
    \delta_{k,q}^{[0]} + \frac{L\langle \kappa \rangle}{\pi}\sum_{k,q}
    \delta_{k,q}^{[1]} + \O(h^{1/2})\right \rfloor\,,
  \]
  where we use the notation $\sum_{k,q}:=\sum_{k=1}^N\sum_{q=1}^{p(k)}$, and
  \[
    \delta_{k,q}^{[0]} := \abs{\alpha_{k,q} - \beta_{k,q}}\,, \qquad
    \delta_{k,q}^{[1]} :=
    \frac{C_k(\beta_{k,q})}{\abs{\mu_k'(\beta_{k,q})}} -
    \frac{C_k(\alpha_{k,q})}{\abs{\mu_k'(\alpha_{k,q})}}\,,
  \]
  with $\alpha_{k,q}:=\mu_{k,q}^{-1}(a)$,
  $\beta_{k,q}:=\mu_{k,q}^{-1}(b)$.
\end{theorem}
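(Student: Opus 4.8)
The plan is to deduce the Weyl formula directly from the explicit spectral description~\eqref{equ:regular-O(h2)} provided by Corollary~\ref{cor.reg}. The key observation is that, modulo $\O(h^2)$, the spectrum of $\mathscr{L}_h$ in $I_h$ is a disjoint union over $(k,q)$ of sets of the form $\{h\mu_k(\gamma,\sigma) - h^{3/2}\langle\kappa\rangle\overset{\circ}{C_k}(\sigma) : \sigma\in h^{1/2}(\tfrac{\pi}{L}\Z+\theta(h))\}\cap[ha,hb]$. Since $[a,b]$ is disjoint from $\Theta$ and $\Lambda$, each $\mu_k(\gamma,\cdot)$ restricted to $\Sigma_{k,q}$ is a diffeomorphism onto a neighborhood of $[a,b]$; hence counting eigenvalues in $[ha,hb]$ amounts to counting the lattice points $\sigma\in h^{1/2}(\tfrac{\pi}{L}\Z+\theta(h))$ that fall in the preimage interval. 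To first order, this preimage is the interval between $\alpha_{k,q}=\mu_{k,q}^{-1}(a)$ and $\beta_{k,q}=\mu_{k,q}^{-1}(b)$, which has length $\delta_{k,q}^{[0]}=|\alpha_{k,q}-\beta_{k,q}|$; its endpoints are shifted by the $h^{1/2}$-correction coming from the subprincipal term, computed via the implicit function theorem as a displacement of size $h^{1/2}C_k/|\mu_k'|$ evaluated at each endpoint, producing the correction $\delta_{k,q}^{[1]}$.

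The main steps are as follows. First I would fix $(k,q)$ and write $\mathcal{N}_{k,q}(h)$ for the number of $\sigma$ in the arithmetic progression $h^{1/2}(\tfrac{\pi}{L}\Z+\theta(h))$ lying in the (possibly $h$-dependent) interval $J_{k,q}(h)$ which is the $\mu_k(\gamma,\cdot)|_{\Sigma_{k,q}}$-preimage of $[a,b]$ after incorporating the $h^{3/2}$ subprincipal correction. A lattice with spacing $h^{1/2}\pi/L$ inside an interval of length $\ell$ contains $\frac{L\ell}{\pi h^{1/2}}+\O(1)$ points; more precisely, the count is $\lfloor \frac{L\ell}{\pi h^{1/2}} + (\text{endpoint phase terms})\rfloor$ with an error controlled by how the lattice aligns with the endpoints. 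Second I would Taylor expand the endpoints of $J_{k,q}(h)$: the left endpoint solves $h\mu_k(\gamma,\sigma)-h^{3/2}\langle\kappa\rangle\overset{\circ}{C_k}(\sigma)=ha$, so $\sigma = \alpha_{k,q} + h^{1/2}\langle\kappa\rangle \frac{C_k(\alpha_{k,q})}{\mu_k'(\alpha_{k,q})} + \O(h)$, and similarly for the right endpoint at $\beta_{k,q}$; taking the difference of the lengths and using the sign conventions (the $|\mu_k'|$ and the relative orientation of $\Sigma_{k,q}$) yields $\ell = \delta_{k,q}^{[0]} + h^{1/2}\langle\kappa\rangle\,\delta_{k,q}^{[1]} + \O(h)$. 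Third I would sum over $k=1,\dots,N$ and $q=1,\dots,p(k)$, using that by Corollary~\ref{cor.reg} the spectral sets are disjoint so the total count is the sum of the individual counts, obtaining $\frac{L}{\pi h^{1/2}}\sum_{k,q}\delta_{k,q}^{[0]} + \frac{L\langle\kappa\rangle}{\pi}\sum_{k,q}\delta_{k,q}^{[1]} + \O(h^{1/2})$ before the floor.

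Finally I would address the $\O(h^2)$ spectral coincidence from Theorem~\ref{thm.main}: since the interval endpoints $ha,hb$ are at distance of order $h$ from the spectrum (because $a,b$ avoid $\Theta\cup\Lambda$ and the eigenvalue spacing near a regular value is of order $h^{3/2}$, hence $\gg h^2$), the fourth remark after Definition~\ref{defi:coincide} applies and $\mathscr{L}_h$ has exactly the same number of eigenvalues in $I_h$ as $h\mathfrak{M}_h$, so the $\O(h^2)$ error does not affect the integer count. The outer floor $\lfloor\cdot\rfloor$ then records that the number of lattice points in an interval of length $\ell$ with irrational-type offset $\theta(h)$ is the floor of $\frac{L\ell}{\pi h^{1/2}}$ up to the $\O(h^{1/2})$ absorbed inside.

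The step I expect to be the main obstacle is the careful bookkeeping of lattice-point counting with $h$-dependent interval endpoints and the $h$-dependent shift $\theta(h)$: one must check that the $\O(1)$ boundary fluctuation of the naive count is exactly captured by the floor function and that the first-order correction $h^{1/2}\langle\kappa\rangle\delta_{k,q}^{[1]}$ genuinely survives inside the floor rather than being swallowed by the $\O(1)$ ambiguity. This requires writing the count as $\#\big((\tfrac{\pi}{L}\Z+\theta(h))\cap h^{-1/2}J_{k,q}(h)\big)$ and expanding $h^{-1/2}J_{k,q}(h) = [h^{-1/2}\alpha_{k,q}+\langle\kappa\rangle\tfrac{C_k(\alpha_{k,q})}{\mu_k'(\alpha_{k,q})}+\O(h^{1/2}),\,\dots]$, whose \emph{length} is $h^{-1/2}\delta_{k,q}^{[0]}+\langle\kappa\rangle\delta_{k,q}^{[1]}+\O(h^{1/2})$; the number of integer-translates-of-$\theta(h)$ in such an interval is its length plus a quantity in $(-1,1)$, and summing over $(k,q)$ and taking the floor absorbs precisely these finitely many $\O(1)$ contributions together with the $\O(h^{1/2})$ remainder, giving the stated formula.
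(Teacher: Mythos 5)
Your proposal follows essentially the same strategy as the paper: start from Corollary~\ref{cor.reg} (equivalently, \eqref{equ:regular-O(h2)}), count lattice points of $h^{1/2}(\tfrac{\pi}{L}\Z+\theta(h))$ in the corrected preimage intervals, Taylor-expand the endpoints to extract $\delta_{k,q}^{[0]}$ and $\delta_{k,q}^{[1]}$, and sum over $(k,q)$. The Taylor computation and the sign bookkeeping with $|\mu_k'|$ are correct, and match the paper's.

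However, your handling of the $\O(h^2)$ spectral coincidence is flawed. You invoke the fourth remark after Definition~\ref{defi:coincide}, asserting that the endpoints $ha,hb$ ``are at distance of order $h$ from the spectrum.'' This is false: near a regular value the eigenvalues of $\mathscr{L}_h$ (and of $h\mathfrak{M}_h$) are spaced at scale $h^{3/2}$, so the distance from $ha$ to the nearest eigenvalue is at most $\sim h^{3/2}\ll h$ and, more importantly, has no a priori lower bound as $h$ varies --- for certain sequences of $h$ an eigenvalue may land arbitrarily close to $ha$. The hypothesis of the fourth remark (endpoints away from an $h^\beta$-neighborhood of the spectrum, $\beta<2$) is therefore not checkable here, and there is no reason for the counts in $I_h$ to coincide \emph{exactly}. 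The parenthetical you give --- ``eigenvalue spacing $\sim h^{3/2}\gg h^2$'' --- bounds how much eigenvalues can move, not how far the fixed endpoints stay from them.

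The paper avoids this pitfall: from Theorem~\ref{thm.main} and Definition~\ref{defi:coincide} it derives the two-sided inequality~\eqref{equ:weyl_0} with $\epsilon=\O(h)$, and then estimates the discrepancy by applying the crude eigenvalue count for $\mathfrak{M}_h$ (bounded by $C(b'-a')/h^{1/2}$) to the four boundary strips $(a,a+\epsilon)$, $(b-\epsilon,b)$, $(a-\epsilon,a)$, $(b,b+\epsilon)$ of width $\O(h)$. This yields an $\O(h^{1/2})$ error, which is then absorbed inside the floor. You should replace your appeal to the fourth remark with this sandwich-plus-crude-count argument; the rest of your proposal (endpoint expansion, summation over branches, discussion of the floor) then goes through as in the paper.
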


In this statement we have denoted
$\mu_{k,q} := \mu_k(\gamma,\cdot)_{| \Sigma_{k,q}}$.  Notice that,
since the remainder term $\O(h^{1/2})$ tends to 0, we obtain that,
when $h$ is small enough, $N(\mathscr{L}_h,I_h)$ is equal to the
integer part of
$\frac{L}{\pi h^{1/2}}\sum_{k,q} \delta_{k,q}^{[0]} + \frac{L\langle
  \kappa \rangle}{\pi}\sum_{k,q} \delta_{k,q}^{[1]}$, or this plus or
minus 1.

\medskip

In a second application, we focus on the regular eigenvalues of
$\mathscr{L}_h$ below the first Landau level, and investigate how the
eigenvalues move when $h$ varies (by the scaling mentioned in the
introduction, this corresponds to the variation of the quantum
energies when the external magnetic field is modified). This variation
of eigenvalues is mainly due to the strong flux term
$\theta(h) = \frac{|\Omega|}{|\partial\Omega|h}$,
see~\eqref{equ:regular-O(h2)}. When $\gamma\in\R$, the eigenvalues
below the first Landau level are described by only two intervals
$\Sigma_{1,1}$ and $\Sigma_{1,2}$, for which the sense of variation of
the approximate eigenvalues with respect to $h$ are opposite. Hence,
we obtain a strongly oscillating behavior for these eigenvalues, which
is a generalization to excited states of the Little-Parks effect,
see~\cite{fournais2015lack}.

\begin{theorem}[Magnetic quantum oscillations]\label{theo:oscillate}
  Let $\gamma\in\R$.  Let $I_h=[ha, hb]$ with
  $a> \Theta^{{0}}(\gamma)$ and $b<1$. There exists $h_0>0$, $C>0$ and
  $M>0$ such that the following holds. Let $h<h_0$, and let $j\in\N$
  be such that the $j$-th eigenvalue $\lambda_j(\gamma,h)$ of
  $\mathscr{L}_h$ belongs to $I_h$. Then there exists $C_i=C_i(j,h)$,
  $i=1,2,3$, with $0<C_1<C_2<C_3\leq M$ such that, letting
  $h_i:= h+ C_ih^{2}$, we have
  \begin{itemize}
  \item $\lambda_j(h_2) \geq \lambda_j(h_1) + C h^{3/2}$,
  \item $\lambda_j(h_2) \geq \lambda_j(h_3) + C h^{3/2}$\,.
  \end{itemize}
  Moreover, the gap between consecutive eigenvalues is --- (roughly)
  periodically with period $\O(h^2)$ --- smaller than any order in
  $h$, precisely: there exists $h'$ such that $\abs{h-h'}=\O(h^2)$ and
  $\lambda_j(h') - \lambda_{j+1}(h') = \O(h^2)$, and there exists
  $h''$ such that $\abs{h-h''}=\O(h^2)$ and
  $\lambda_{j+1}(h'') - \lambda_{j}(h'') \geq Ch^{3/2}$.
\end{theorem}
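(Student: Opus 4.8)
The plan is to transfer the question to the explicit two--branch description of the spectrum given by Corollary~\ref{cor.reg}, and then to exploit that, when $h$ is replaced by $h+\O(h^2)$, the motion of the eigenvalues is governed by the flux term $\theta(h)=|\Omega|/(|\partial\Omega|h)$. Concretely: since $\gamma\in\mathbf{R}$ and $[a,b]$ is a fixed compact subinterval of $(\Theta^{[0]}(\gamma),1)$, we are in the case $n=1$, $N=1$, $p(1)=2$ of Corollary~\ref{coro:dispersion}: for $h$ small $[a,b]$ consists of regular values of $\mu_1(\gamma,\cdot)$, whose preimage of a neighbourhood of $[a,b]$ splits as $\Sigma_{1,1}\cup\Sigma_{1,2}$ with $\Sigma_{1,1}\subset(-\infty,\xi_0(\gamma))$ (where $\mu_1'<0$) and $\Sigma_{1,2}\subset(\xi_0(\gamma),+\infty)$ (where $\mu_1'>0$), $|\mu_1'|$ bounded above and below on both. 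By~\eqref{equ:regular-O(h2)} (slightly enlarging $[a,b]$ so that the $\O(h^{3/2})$--motion below stays inside), the spectrum of $\mathscr{L}_h$ in $[ha,hb]$ coincides, modulo~$\O(h^2)$, with the sorted union of
\[
  \mathcal{E}_q(h)=\Bigl\{\,h\mu_1(\gamma,\sigma)-h^{3/2}\langle\kappa\rangle\,\overset{\circ}{C_1}(\sigma) : \sigma\in h^{1/2}\bigl(\tfrac{\pi}{L}\mathbf{Z}+\theta(h)\bigr)\cap\Sigma_{1,q}\,\Bigr\}\cap[ha,hb],\qquad q=1,2.
\]
Applying the same to a finite chain of regular windows in $(\Theta^{[0]}(\gamma),1)$, together with the folded Bohr--Sommerfeld analysis of~\cite{D-VN21} in the (fixed--width) neighbourhood of $h\Theta^{[0]}(\gamma)$, one describes, modulo~$\O(h^2)$ and in the sense of Definition~\ref{defi:coincide}, the entire part of the spectrum of $\mathscr{L}_h$ lying below $hb$ by an explicit set; since $\lambda_j(\gamma,h)\in I_h$ lies deep in the bulk of this set, its $j$--th true eigenvalue is within~$\O(h^2)$ of the $j$--th point of the explicit set.

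\emph{The flux drives the motion.} Set $h_t:=h+th^2$ and $\Phi:=|\Omega|/|\partial\Omega|=|\Omega|/(2L)$. Then $\theta(h_t)=\Phi/h_t=\theta(h)-\Phi t+\O(h)$ and $h_t^{1/2}=h^{1/2}(1+\tfrac t2h+\O(h^2))$, so the quantization lattice $h_t^{1/2}(\tfrac{\pi}{L}\mathbf{Z}+\theta(h_t))$ is, up to~$\O(h)$, the lattice at $t=0$ translated by $-\Phi t\,h^{1/2}$ --- a shift comparable to the spacing $\tfrac{\pi}{L}h^{1/2}$. A short computation then shows, uniformly for $t,\sigma$ in fixed compacts, that along each branch the explicit eigenvalue at time $t$ equals $h\mu_1(\gamma,\sigma)-\Phi t\,h^{3/2}\mu_1'(\gamma,\sigma)-h^{3/2}\langle\kappa\rangle\overset{\circ}{C_1}(\sigma)+\O(h^2)$. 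Hence as $t$ grows every point of $\mathcal{E}_1$ moves \emph{up} at speed $\asymp h^{3/2}$ and every point of $\mathcal{E}_2$ moves \emph{down} at speed $\asymp h^{3/2}$, all spacings staying $\asymp h^{3/2}$. Moreover $\theta$ and $\theta+\tfrac{\pi}{L}$ define the same Floquet space (Remark~\ref{rem.period}) and the symbols of the reduced operators do not depend on $t$, so after the period $t_{\mathrm{per}}=\tfrac{2\pi}{|\Omega|}+\O(h)$, during which $\theta$ decreases by exactly $\tfrac{\pi}{L}$, the explicit set --- hence, modulo~$\O(h^2)$, the spectrum of $\mathscr{L}_{h_t}$ below $hb$ --- repeats. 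In particular $t\mapsto\lambda_j(\gamma,h_t)$ is periodic modulo~$\O(h^2)$ with period $t_{\mathrm{per}}$, which is the period~$\O(h^2)$ of the statement.

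\emph{Sawtooth and extraction of $C_1,C_2,C_3$.} Near $\lambda_j(\gamma,h)$ the spectrum is the sorted merge of an up--moving and a down--moving family with comparable spacings and speeds $\asymp h^{3/2}$, so $t\mapsto\lambda_j(\gamma,h_t)$ rises while riding an $\mathcal{E}_1$--branch, attains a local maximum exactly when that branch meets the $\mathcal{E}_2$--branch just above it --- where the $j$--th and $(j+1)$--th points of the model \emph{coincide} --- then falls along an $\mathcal{E}_2$--branch, with successive turning points $\asymp t_{\mathrm{per}}$ apart, oscillation amplitude $\asymp h^{3/2}$, and $|\partial_t\lambda_j|\ge c\,h^{3/2}$ between turning points. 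Fix $M$ a large enough fixed multiple of $1/|\Omega|$ (so that $[0,M]$ contains several periods); using the periodicity, select a local maximum $C_2\in(t_{\mathrm{per}},2\,t_{\mathrm{per}})$ of this curve and put $C_1:=C_2-\delta$, $C_3:=C_2+\delta$ for a small fixed $\delta\asymp t_{\mathrm{per}}$ lying inside the adjacent up-- and down--swings; then $0<C_1<C_2<C_3\le M$, and monotonicity with slope $\ge c\,h^{3/2}$ on $[C_1,C_2]$ and $[C_2,C_3]$ gives $\lambda_j(\gamma,h_{C_2})\ge\lambda_j(\gamma,h_{C_i})+C h^{3/2}$ for $i=1,3$, after absorbing the $\O(h^2)$ errors (negligible since $C_1,C_3$ sit at a definite distance from the crossings, where the eigenvalues involved are simple and $\gg h^2$--separated). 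Finally, at $h':=h_{C_2}$ the $j$--th and $(j+1)$--th model points coincide, so $\lambda_j(h')-\lambda_{j+1}(h')=\O(h^2)$; at $h'':=h_{C_1}$, away from crossings, they are $\ge c\,h^{3/2}$ apart, so $\lambda_{j+1}(h'')-\lambda_j(h'')\ge C h^{3/2}$; and $|h'-h|,|h''-h|=\O(h^2)$ since $C_1,C_2\le M$. The recurrence of these near--degeneracies with period~$\O(h^2)$ is the periodicity of the previous step.

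\emph{The main obstacle.} The analysis itself is essentially turnkey (Corollary~\ref{cor.reg} and~\cite{D-VN21}); the real work is the bookkeeping that keeps ``the $j$--th eigenvalue'' unambiguous and trackable across the three values of $h$, even though near a crossing the distance to the neighbouring eigenvalue shrinks to~$\O(h^2)$, exactly the order of the error in the reduction. This is handled by comparing $\lambda_j$ only at times a definite fraction of $t_{\mathrm{per}}$ away from all crossings, where the model gaps are $\gg h^2$ and $j$ is well defined, and by using the periodicity modulo~$\O(h^2)$ of the full low--lying spectrum to control the (also sliding) number of eigenvalues beneath the window, so that the $j$--th true eigenvalue corresponds to the $j$--th model point at each of the three times. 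The one point demanding genuine care is to verify that the folded Bohr--Sommerfeld contribution near $h\Theta^{[0]}(\gamma)$ likewise depends on $t$ only through $\theta(h_t)\bmod\tfrac{\pi}{L}$ modulo~$\O(h^2)$, so that this periodicity indeed holds for the whole spectrum below $hb$.
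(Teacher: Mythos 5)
Your approach is essentially the one the paper takes: via Theorem~\ref{thm.main}/Corollary~\ref{cor.reg} reduce to the single dispersion curve $\mu_1$, note that over $I_h$ the model spectrum is the superposition of an increasing branch ($\Sigma_{1,1}$) and a decreasing branch ($\Sigma_{1,2}$), observe that the flux term $\theta(h)=\mathfrak f_0/h$ makes the quantization abscissae $\sigma_\ell(h_t)$ slide by $\asymp h^{1/2}$ per unit of $t$ (hence model eigenvalues move by $\asymp h^{3/2}$), and pick $C_1,C_2,C_3$ so that $\lambda_j$ sits on the rising branch, reaches the crossing, and then rides the falling branch. Your quantitative claims (branch slopes $\asymp h^{3/2}$, oscillation period $t_{\rm per}=2\pi/|\Omega|+\O(h)$, the $\O(h^2)$ reduction error being negligible against the $h^{3/2}$ variations) all match the paper's.

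The one genuine difference is how you handle the index $j$. You worry that eigenvalues can slide in and out below $ha$, and to control this you invoke a description of the \emph{entire} spectrum below $hb$, including the folded Bohr--Sommerfeld region near $h\Theta^{[0]}(\gamma)$, and you flag (in your last paragraph) that you have not actually checked the needed $t$-periodicity modulo $\O(h^2)$ of that folded contribution. This is an overcomplication you can simply avoid. The paper never tracks the $j$-th model point ``from the bottom''; it tracks the continuous function $t\mapsto\lambda_j(h_t)$. Since $|\lambda_j(h_t)-\lambda_j(h)|=\O(h^{3/2})$ over $t\in[0,M]$, one may shrink $[a,b]$ slightly at the outset so that $\lambda_j(h_t)\in[h_t a, h_t b]$ for all such $t$; then Corollary~\ref{cor.reg} forces $\lambda_j(h_t)$ to remain $\O(h^2)$-close to the two-branch model inside the window at every $t$, and continuity determines which branch it follows (it switches branches at each crossing, which is precisely the mechanism generating the sawtooth). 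No information about eigenvalues below $ha$, or about the region near $\Theta^{[0]}(\gamma)$, is needed. With that simplification, the gap you yourself identify disappears, and your argument coincides with the paper's. The remaining stylistic difference --- you phrase the selection of $C_1,C_2,C_3$ via the period $t_{\rm per}$ and a chosen local maximum, while the paper constructs them operationally (put $\lambda_j$ on the rising branch, then push to the first crossing, then past it) --- is inessential; both give the same constants up to relabelling.
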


See also Figure~\ref{fig:branches}. The proof of this theorem is given
in Section~\ref{sec:proof-theor-oscillate}. We believe that this is
the first mathematical treatment of quantum magnetic oscillations for
excited states in the first Landau band.  In principle, similar
oscillations for eigenvalues between higher Landau levels could be
obtained in the same vein. However, the growing number of connected
components $\Sigma_{k,q}$ involved would make the analysis (and
statement) quite complicated.

\begin{remark}
  These applications illustrate the fact that Corollary \ref{cor.reg}
  gives a very accurate description of the spectrum of $\mathscr{L}_h$
  by providing us with explicit approximations of the eigenvalues in
  $[ha,hb]$ modulo $\mathscr{O}(h^2)$. When $\gamma=+\infty$
  (\emph{i.e.} in the Dirichlet case), it also improves the
  description given in \cite[Corollary 2.7]{GV21} concerned with a
  thin spectral window containing a regular value. Moreover, although
  our results are formulated in terms of approximation of the
  eigenvalues, the strategy, based on microlocal projections, leading
  to Theorem \ref{thm.main} can also be used to describe the
  eigenspaces of $\mathscr{L}_h$ in terms of those of
  $\mathfrak{M}_h$.
\end{remark}

\subsubsection{Critical values}
Our main theorem also applies to the case when the spectral window
contains a critical value, \emph{i.e.} an element of $\Theta$, see
Corollary~\ref{coro:dispersion} (such a critical value is the unique
non-degenerate global minimum of a unique dispersion curve, see
Proposition \ref{prop.dispersion}). To illustrate this, let us focus
on the low-lying eigenvalues. The following corollary improves
\cite[Theorem I.5, $\alpha=\frac12$]{K06} by establishing the spectral
asymptotics of the lowest eigenvalues and by exhibiting spectral gaps
of order $h^ {\frac74}$ instead of $h^{\frac32}$ in the case of
regular values for each given dispersion curve. It also extends to any
Robin parameter the result obtained by Fournais and Helffer in
\cite{FH06} when $\gamma=0$.

Once Theorem~\ref{thm.main} is applied and reduces the analysis to a
single $\h$-pseudo-differential operator, this corollary becomes
essentially an application of~\cite[Proposition 6.8]{D-VN21}, see
details in Section~\ref{sec:proof-coroll-minipuits}.

\begin{corollary}\label{cor.lowlying}
  Consider $\gamma\neq\gamma^{[0]}_0$ with $\gamma^{[0]}_0$ defined in
  Remark \ref{rem16}, and let
  $\epsilon=\textup{sign}(\gamma^{[0]}_0-\gamma) =
  \textup{sign}(C_1(\xi_0(\gamma)))$. Assume that $\epsilon\kappa$
  admits a unique maximum at $s_{\max}$, which is
  non-degenerate. Then, for all $j\geq 1$, uniformly when
  $j h^{\frac14}=o(1)$,
  \[
    \lambda_j(\gamma,h) = \Theta^{[0]}(\gamma)h -
    \kappa(s_{\max})C_1(\xi_0(\gamma))h^{\frac32} + \frac{
      h^{\frac74}(2j-1)}{2}\sqrt{k_2
      C_1(\xi_0(\gamma))\mu''_1(\gamma,\xi_0(\gamma))}
    +o(h^{\frac74})\,,
  \]
  with $k_2=-\kappa''(s_{\max})$, and where we recall that
  $\xi_0(\gamma)$ is given in Proposition \ref{prop.dispersion}.
\end{corollary}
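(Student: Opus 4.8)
\textbf{Proof proposal for Corollary~\ref{cor.lowlying}.}

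The plan is to combine Theorem~\ref{thm.main} with the semiclassical analysis of a single $\h$-pseudodifferential operator having a transversally nondegenerate minimum on a circle, as developed in~\cite{D-VN21}. The first step is to localize the spectral window. Since $\gamma\neq\gamma_0^{[0]}$, Remark~\ref{rem16} gives $C_1(\xi_0(\gamma))\neq 0$, and by Proposition~\ref{prop.dispersion} the value $\Theta^{[0]}(\gamma)=\min_\sigma\mu_1(\gamma,\sigma)$ is the unique (nondegenerate) global minimum of $\mu_1(\gamma,\cdot)$, attained at $\sigma=\xi_0(\gamma)$, and it lies strictly below the first Landau level $1$. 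Choosing $[a,b]$ a small fixed interval around $\Theta^{[0]}(\gamma)$ disjoint from $\Lambda$, Corollary~\ref{coro:dispersion} tells us that only $k=1$ contributes (i.e.\ $N=1$ and $p(1)=1$), so Theorem~\ref{thm.main} reduces the analysis of $\mathscr{L}_h$ in $[ha,hb]$, modulo $\O(h^2)$, to that of $h\,m_1^{\textup{W}}$ acting on $e^{i\theta(h)\cdot}L^2(\T_{2L})$, where $m_1^{\textup{W}}$ has principal symbol $\overset{\circ}{\mu_1}(\sigma)$ and subprincipal symbol $-\kappa(s)\overset{\circ}{C_1}(\sigma)$.

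The second step is to rewrite the symbol of $m_1^{\textup{W}}$ near the relevant region of phase space in the normal form required by~\cite[Proposition~6.8]{D-VN21}. Near $\sigma=\xi_0(\gamma)$ the principal symbol has the expansion $\overset{\circ}{\mu_1}(\sigma)=\Theta^{[0]}(\gamma)+\tfrac12\mu_1''(\gamma,\xi_0(\gamma))(\sigma-\xi_0(\gamma))^2+\O((\sigma-\xi_0)^3)$, with $\mu_1''(\gamma,\xi_0(\gamma))>0$ by the nondegeneracy in Proposition~\ref{prop.dispersion} (and formula~\eqref{equ:partial2sigma(mu)}). The full symbol is then $p(s,\sigma,\h)=\overset{\circ}{\mu_1}(\sigma)-\h\,\kappa(s)\overset{\circ}{C_1}(\sigma)+\O(\h^2)$. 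Setting $\epsilon=\operatorname{sign}(C_1(\xi_0(\gamma)))$, the function $(s,\sigma)\mapsto p(s,\sigma,0)+\epsilon\h\big(-\kappa(s)|C_1(\xi_0(\gamma))|\big)$ — more precisely, after writing $-\kappa(s)C_1(\sigma)=-\epsilon\,(\epsilon\kappa)(s)\,C_1(\sigma)$ — has, for $\h>0$ small, a minimum along the circle $\{\sigma=\xi_0(\gamma)+\O(\h)\}$ attained where $\epsilon\kappa$ is maximal, i.e.\ at $s=s_{\max}$; the transverse Hessian in $\sigma$ is $\mu_1''(\gamma,\xi_0(\gamma))>0$ and the tangential Hessian along the circle, coming from the $\h$-term, is $\h\,|C_1(\xi_0(\gamma))|\,(-\kappa''(s_{\max}))=\h\,|C_1(\xi_0(\gamma))|\,k_2>0$ by the assumed nondegenerate maximum of $\epsilon\kappa$. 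One must also account for the Floquet/flux shift $\theta(h)$: as in Remark~\ref{rem.period} this amounts to evaluating the quantized action on the shifted lattice $h^{1/2}(\tfrac\pi L\Z+\theta(h))$, which contributes only at order $\O(h^2)$ to the eigenvalues in this regime and hence is absorbed in the error.

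The third step is to apply~\cite[Proposition~6.8]{D-VN21} verbatim to $m_1^{\textup{W}}$: this yields that the low-lying eigenvalues of $m_1^{\textup{W}}$ are, uniformly for $j\h^{1/2}=o(1)$,
\[
  \Theta^{[0]}(\gamma)-\h\,\kappa(s_{\max})C_1(\xi_0(\gamma))+\frac{\h^{3/2}(2j-1)}{2}\sqrt{k_2\,|C_1(\xi_0(\gamma))|\,\mu_1''(\gamma,\xi_0(\gamma))}+o(\h^{3/2})\,,
\]
where I have used $-\kappa(s_{\max})C_1(\xi_0(\gamma))=-\epsilon\kappa(s_{\max})|C_1(\xi_0(\gamma))|$ and the fact that the harmonic-oscillator gap in~\cite{D-VN21} is governed by the geometric mean of the tangential and transverse curvatures. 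Multiplying by $h$ and recalling $\h=h^{1/2}$, $j\h^{1/2}=jh^{1/4}$, gives exactly the claimed expansion for $\lambda_j(\gamma,h)$, the product $|C_1(\xi_0(\gamma))|\mu_1''(\gamma,\xi_0(\gamma))$ under the square root equals $C_1(\xi_0(\gamma))\mu_1''(\gamma,\xi_0(\gamma))$ up to the sign already carried by $\epsilon$, and the $\O(h^2)$ discrepancy between $\operatorname{sp}(\mathscr{L}_h)$ and $\operatorname{sp}(h\mathfrak{M}_h)$ from Theorem~\ref{thm.main} is negligible compared with $o(h^{7/4})$ once $h$ is small (and is uniform in $j$ in the stated range).

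The main obstacle is verifying that the hypotheses of~\cite[Proposition~6.8]{D-VN21} are met by $m_1^{\textup{W}}$ in the required uniform way: namely that the symbol genuinely has the structure of an $\h$-family with a well-of-miniwells on a circle (transverse nondegeneracy from $\mu_1''>0$, tangential nondegeneracy from $k_2>0$ together with $C_1(\xi_0)\neq0$), that the dependence on the parameter $\gamma$ stays in the right uniform class, and — more delicately — that the uniformity in $j$ up to $jh^{1/4}=o(1)$ is compatible with the error terms in both Theorem~\ref{thm.main} (the $\O(h^2)$) and in the effective one-dimensional analysis. The subprincipal-symbol sign bookkeeping (through $\epsilon$ and the identity $-\kappa C_1=-\epsilon(\epsilon\kappa)|C_1|$) must be handled carefully so that the minimum of the effective potential is at $s_{\max}$ and the square root in the gap is of a nonnegative quantity; this is where the precise statement of Proposition~\ref{prop.sgnCj}/Remark~\ref{rem16} is essential.
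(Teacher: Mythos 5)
Your proposal follows essentially the same route as the paper: reduce via Theorem~\ref{thm.main} to a single scalar $\h$-pseudo\-differential operator $m_1^{\mathrm{W}}$, Taylor expand its symbol near $\sigma=\xi_0(\gamma)$ to reveal a transversally nondegenerate miniwell with tangential Morse potential $-\kappa(s)C_1(\xi_0(\gamma))$, and conclude with the harmonic/miniwell asymptotics of~\cite[Section 6.3, Proposition 6.8]{D-VN21}. The paper's version of Section~\ref{sec:proof-coroll-minipuits} merely fills in the intermediate steps you flag as ``the main obstacle'': a rough Weyl estimate to justify microlocalization at scale $\hbar^{\eta/2}$ around $\xi_0(\gamma)$, and a perturbation argument comparing $h M_\hbar^{\mathrm{W}}$ to the explicit harmonic model $h\mathscr{M}_\hbar$ modulo $o(h^{7/4})$, uniformly for $jh^{1/4}=o(1)$.

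Two small but genuine inaccuracies in your write-up deserve attention. First, the intermediate expression $\sqrt{k_2\,|C_1(\xi_0(\gamma))|\,\mu_1''}$ is wrong for $\epsilon=-1$: in that case $s_{\max}$ is a \emph{minimum} of $\kappa$, so $\kappa''(s_{\max})>0$ and $k_2<0$, while $|C_1|=-C_1>0$, so $k_2|C_1|<0$ and the square root is undefined. The correct combination, as in the statement, is $k_2 C_1(\xi_0(\gamma))$, which is automatically positive under the hypotheses (both factors flip sign together with $\epsilon$); the final ``equals up to the sign carried by $\epsilon$'' sentence does not repair this. Second, the justification you give for dropping the Floquet shift $\theta(h)$ — that quantization on the shifted lattice $h^{1/2}(\frac{\pi}{L}\Z+\theta(h))$ contributes only $\O(h^2)$ — is the Bohr--Sommerfeld picture valid at regular values (Proposition~\ref{prop.pseudo1D}), not at the critical value considered here. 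The paper's reason is different and more robust: in the miniwell regime the eigenfunctions concentrate near the single point $(s_{\max},\xi_0(\gamma))$ in phase space, so the analysis is local and the global Floquet condition plays no role. Neither issue changes the overall strategy, but both would need to be corrected to make the argument airtight.
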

\begin{remark} Let us end the description of our results with a few
  comments about consequences and extensions following from our
  approach.
  \begin{enumerate}[\rm (i)]
  \item Corollary \ref{cor.lowlying} describes the low-lying
    eigenvalues with some uniformity in $j$ (which was not the case in
    \cite{FH06}), in an interval of the form
    $(-\infty, \Theta^{[0]}(\gamma)h + Ch^{3/2}]$. On the other hand,
    Corollary \ref{cor.reg} gives the spectrum in any interval of the
    form $[ha,hb]$ with $a> \Theta^{{0}}(\gamma)$ and $b<1$. Hence we
    have a spectral interval between these two regimes which we don't
    describe here. But actually, by using refined spectral results for
    1D pseudodifferential operators, and in particular the strategy
    of~\cite{D-VN21} in the case where $\kappa$ is a Morse function,
    it should also be possible to close this gap. However, this would
    require an analysis of the hyperbolic singularities arising from
    the minima of $\epsilon\kappa$, where we expect both a
    concentration of the eigenfunctions and a higher density of
    eigenvalues.
  \item When $\gamma>\gamma^{[0]}_0$, Corollary \ref{cor.lowlying}
    shows that the eigenfunctions (associated with the low-lying
    eigenvalues) are concentrated near the points of minimal
    curvature. This contrasts with the Neumann case when the points of
    maximal curvature play the role of attractive wells. This
    phenomenon was not observed before, see Remark \ref{rem16}.

  \item The case $\gamma=\gamma^{[0]}_0$ is critical since
    $C_1(\xi_0(\gamma))=0$. However, our analysis can still be used by
    computing additional subprincipal terms in our effective operator
    method. A similar phenomenon has recently been observed in the
    study of the magnetic Dirac operator \cite[Section 8]{BLTRS23} and
    also in the analysis of the magnetic Schrödinger operator with
    discontinuous magnetic fields \cite{FHKR22}. In this case, we
    have, for all $j\geq 1$,
    \[
      \lambda_j(\gamma,h) = \Theta^{[0]}(\gamma)h
      +h^2\lambda_j(\mathscr{A}_h)+o(h^2)\,,
    \]
    where
    $\mathscr{A}_h=\frac{\partial^2_\sigma\mu(\gamma,\xi_0(\gamma))}{2}(
    D_s+\theta(h)-h^{-\frac12}\xi_0(\gamma))^2+C_\gamma\kappa^2(s)$,
    for some $C_\gamma\in\mathbf{R}$. In this transition regime, the
    effective operator is not semiclassical.
  \item When the curvature $\kappa$ is constant, in the case
    $\gamma\in\mathbf{R}$, we are in a degenerate situation rather
    similar to the case when $\gamma=\gamma^{[0]}_0$. Concerning the
    operators $m_k^W$ of Theorem~\ref{thm.main}, this case corresponds
    to~\cite[Proposition 6.4]{D-VN21}. We can prove an expansion in
    the form
    \[
      \lambda_j(\gamma,h) = \Theta^{[0]}(\gamma)h- \kappa
      C_1(\xi_0(\gamma))h^{\frac32}
      +h^2\lambda_j(\mathscr{A}_h)+o(h^2)\,.
    \]
    Here, the eigenvalues of $\mathscr{A}_h$ will generate magnetic
    oscillations, see~\cite[Theorem 2.2; $k=0$]{D-VN21}.  When
    $\gamma=0$ and $j=1$, a similar estimate is described in
    \cite[Theorem 5.3.1]{FH06}.
  \end{enumerate}
	
\end{remark}

\subsection{Organization of the article}
In Section \ref{sec.2}, we prove that the eigenfunctions associated
with eigenvalues of $\mathscr{L}_h$ in $[h a,h b]$ are exponentially
localized near the boundary of $\Omega$, see Proposition
\ref{prop.Agmon}. Note that the strategy used to derive this
localization deviates from the usual variational method (see, for
instance, \cite{Helffer88} or \cite[Prop. 4.7]{Raymond17}), which
fails since we want to consider eigenvalues between two consecutive
Landau levels. To overcome this issue, our strategy, which eventually
generalizes the variational method, is based on establishing the
bijectivity of the magnetic Laplacian between exponentially weighted
$L^2$ spaces. In Section \ref{sec.3}, by means of tubular coordinates
$(s,t)$ near the boundary and a rescaling $t=h^{\frac12}\tau$, we
introduce a model operator $\mathscr{N}_{\hbar}$ depending on the
effective semiclassical parameter $\hbar=h^{\frac12}$, acting on
$2L$-periodic functions and involving a flux term $\mathfrak{f}_0$,
see \eqref{eq.Nhbar} and \eqref{eq.pshbar}. We also show that the
eigenfunctions of $\mathscr{L}_h$ are roughly microlocalized in a
compact set of the phase space attached to the boundary, see
Proposition \ref{prop.agmonmicro}. This allows to prove that the
spectrum of $\mathscr{L}_h$ (between the Landau levels) is located
near that of $\mathscr{N}_\hbar$, see Proposition
\ref{prop.firstspectralred}. However, one will see that Proposition
\ref{prop.firstspectralred} is not directly useful to establish our
main theorems. It is rather a pretext to motivate the introduction of
$\mathscr{N}_\hbar$ and to describe the spectral estimates required to
prove that spectra coincide modulo $\mathscr{O}(h^\infty)$. Actually,
one will compare directly the spectrum of $\mathscr{L}_h$ to that of
an effective operator on the boundary of $\Omega$. For that purpose,
in Section \ref{sec.4}, we construct a Grushin problem in order to
invert the (pseudo)differential operator $\mathfrak{N}_\hbar$ (which
acts as $\mathscr{N}_\hbar$ with $\mathfrak{f}_0$ replaced by
$0$). This method, inspired by the works of Martinez and Sjöstrand
(and adapted to magnetic operators by Keraval, see \cite{Keraval}),
has recently shown its efficiency to describe the low-lying
eigenvalues of various magnetic operators (see, for instance,
\cite{BHR21} and \cite{FHKR22}). The novelty in the present paper is
to use it to tackle the description of larger eigenvalues for magnetic
Schrödinger operators with boundaries, when several dispersion curves
are involved (see Figure \ref{fig:dispersion}), and not only the first
one as in \cite{BHR21} or \cite{FHKR22}. In order to use this method,
we write a semiclassical expansion of $\mathscr{N}_\hbar$, see
Proposition \ref{prop.descriptionNhbar}. The principal operator symbol
is the de Gennes operator (with Robin condition), which can be
inverted in the spectral window $[a,b]$ up to considering an augmented
matrix involving the eigenfunctions of the de Gennes operator, see
Lemma \ref{lem.inv0}. This allows to build an approximate inverse of
an augmented version of $\mathfrak{N}_\hbar$ denoted by
$\mathrm{Op}^{\mathrm{W}}_\hbar\mathscr{P}_\hbar$, see
\eqref{eq.PWhbar} (and the left and right quasi inverses
\eqref{eq.invgauche} and \eqref{eq.invdroite}). Thanks to these quasi
inverses, the bijectivity of $\mathscr{L}_h-z$ is reduced to that of a
pseudodifferential operator on $\T_{2L}$ whose matrix symbol is
$M_\hbar$, modulo some remainders, see Proposition \ref{prop.LhtoMh}
where the eigenfunctions of $\mathscr{L}_h$ are directly used as
quasimodes for $M_\hbar^{\mathrm{W}}$. In Section \ref{sec.5}, we
perform the spectral analysis of $M_\hbar^{\mathrm{W}}$ by using that
the principal matrix symbol $M_0$ is diagonal with uniform gaps
between the diagonal entries. We deduce Proposition
\ref{prop.pseudo1D} and Corollary \ref{cor.lowlying}. In Appendix
\ref{sec.roughWeyl}, we recall the origin of the estimate
\eqref{eq.roughNLhIh}. Appendix \ref{sec.deGennes} is devoted to the
de Gennes operator with Robin conditions: a couple of known results
are recalled and useful new ones are established.

\section{Exponential localization near the boundary and consequences}\label{sec.2}
Let us consider a smooth function $\Phi_0 : \overline{\Omega}\to \R_+$
that coincides with $\mathrm{dist}(x,\partial\Omega)$ near
$\partial\Omega$, and which vanishes only on $\partial\Omega$. Such a
function can be constructed as follows. Let $\epsilon>0$ be such that
the $\epsilon$-neighborhood of $\partial\Omega$, which we call
$\Omega_1$, admits a trivialization by the geodesic exponential: in
other words $\Omega_1\simeq \T\times [-\epsilon,\epsilon]$ with
coordinates $(s,t)$, and for any $x(s,t)\in\Omega_1$, we have
$\mathrm{dist}(x,\partial\Omega)=\abs{t}$, and $t>0$ if
$x\in\Omega$. We denote by $t:\Omega_1\to\R$ the corresponding
(smooth) map $x\mapsto t$. Let $\Omega_0\subset \Omega$ be the
complementary set of the $\epsilon/2$-neighborhood of
$\partial\Omega$. Thus, $\Omega_0\cup\Omega_1$ is an open neighborhood
of $\overline\Omega$. Let $(\chi_0,\chi_1)$, be an associated partition
of unity. The function $\Phi_0:= \chi_0+t\chi_1$ meets our
requirements.

Next, we extend $\Phi_0$ to a smooth function on $\mathbf{R}^2$ that
also belongs to $W^{2,\infty}(\mathbf{R}^2)$.

The following proposition states that the eigenfunctions of
$\mathscr{L}_h$ associated with eigenvalues in $I_h$ are localized
near the boundary of $\Omega$. The estimates look like Agmon's
estimates, but they are not obtained via variational means as it is
the case in many magnetic settings. Here, they follow from resolvent
estimates using the distance to the Landau levels.
\begin{proposition}\label{prop.Agmon}
  There exist $\alpha>0,C>0,h_0>0$ such that for all $h\in(0,h_0)$ and
  all eigenfunctions $\psi$ associated with an eigenvalue in $I_h$, we
  have
  \begin{equation}\label{eq.Agmona}
    \int_{\Omega}e^{2\alpha\Phi_0(x)/h^{1/2}}|\psi(x)|^2\dd x \leq
    C\|\psi\|^2\,,
  \end{equation}
  and
  \begin{equation}\label{eq.Agmonb}
    \int_{\Omega}e^{2\alpha\Phi_0(x)/h^{1/2}}|(-ih\nabla-\mathbf{A})\psi|^2\dd x
    \leq Ch \|\psi\|^2\,,
  \end{equation}
\end{proposition}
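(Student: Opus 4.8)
The plan is to follow a Combes--Thomas type strategy, as announced in the introduction, rather than the usual Agmon variational argument, because the spectral window $I_h$ lies \emph{between} two Landau levels and hence the quadratic form has no sign to exploit. The starting point is the following observation: if $\psi$ is an eigenfunction of $\mathscr{L}_h$ with eigenvalue $\lambda\in I_h$, then near any point of the interior (away from $\partial\Omega$) the operator $\mathscr{L}_h$ looks, up to $\O(h^\infty)$ errors, like the full-plane magnetic Laplacian $\mathscr{L}^{\R^2}_{h,\mathbf{A}}$, whose spectrum is $\{(2n-1)h : n\geq 1\}$. Since $\lambda$ is at distance $\geq c h$ from this spectrum, the resolvent $(\mathscr{L}^{\R^2}_{h,\mathbf{A}}-\lambda)^{-1}$ exists with norm $\O(h^{-1})$, and \emph{more importantly} the conjugated resolvent remains bounded under moderate exponential weights. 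This is the heart of the matter.

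Concretely, I would introduce, for a parameter $\alpha>0$ to be fixed small, the bounded multiplier $w=e^{\alpha\Phi_0/h^{1/2}}$ and study the conjugated operator $w\mathscr{L}^{\R^2}_{h,\mathbf{A}}w^{-1} = (-ih\nabla-\mathbf{A}-ih\alpha h^{-1/2}\nabla\Phi_0)^2$ on $\R^2$. Expanding the square, this equals $\mathscr{L}^{\R^2}_{h,\mathbf{A}} + h^{3/2}\alpha\,\mathcal{R}_1 + h\alpha^2|\nabla\Phi_0|^2$, where $\mathcal{R}_1$ is a first-order operator whose coefficients are bounded (here one uses $\Phi_0\in W^{2,\infty}$). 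Writing $\mathcal{R}_1$ in terms of $h^{-1}(-ih\nabla-\mathbf{A})$ and controlling the latter on the spectral window (via $\lambda\leq Ch$), one sees that, relative to $\mathscr{L}^{\R^2}_{h,\mathbf{A}}-\lambda$, the perturbation has size $\O(\alpha h)$ in a suitable sense. Since $\|(\mathscr{L}^{\R^2}_{h,\mathbf{A}}-\lambda)^{-1}\| = \O(h^{-1})$, a Neumann series argument shows that, for $\alpha$ small enough (independent of $h$), the operator $w(\mathscr{L}^{\R^2}_{h,\mathbf{A}}-\lambda)w^{-1}$ is invertible on $L^2(\R^2)$ with inverse of norm $\O(h^{-1})$. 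The only subtlety is that $w$ is not uniformly bounded (it is, since $\Phi_0$ is bounded, but its size is $e^{\alpha\|\Phi_0\|_\infty/h^{1/2}}$, which is large); the point of the conjugation is precisely to work with the conjugated operator and never invert $w$ itself on the non-compactly-supported objects.

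With this resolvent estimate in hand, the localization follows by the standard IMS/cutoff trick. Let $\chi$ be a smooth cutoff equal to $1$ outside an $\epsilon/2$-neighborhood of $\partial\Omega$ and supported away from the boundary, so that on $\supp\chi$ the operator $\mathscr{L}_h$ coincides with $\mathscr{L}^{\R^2}_{h,\mathbf{A}}$ (after the harmless extension of $\mathbf{A}$). Then $(\mathscr{L}^{\R^2}_{h,\mathbf{A}}-\lambda)(\chi\psi) = [\mathscr{L}^{\R^2}_{h,\mathbf{A}},\chi]\psi$, and the commutator is a first-order operator supported in the transition region $\{\epsilon/2\leq\dist(x,\partial\Omega)\leq\epsilon\}$, with coefficients $\O(h)$ relative to $h^{-1}(-ih\nabla-\mathbf{A})$. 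Conjugating by $w$ and applying the inverse just constructed gives
\[
  \|w\chi\psi\| \leq Ch^{-1}\,\big\|w[\mathscr{L}^{\R^2}_{h,\mathbf{A}},\chi]\psi\big\|
  \leq C\,\big\|w\mathbf{1}_{\supp\nabla\chi}\,(\psi,h^{-1}(-ih\nabla-\mathbf{A})\psi)\big\|\,.
\]
On $\supp\nabla\chi$ we have $\Phi_0=t\geq\epsilon/2$ is bounded below \emph{and bounded above}, so $w\mathbf{1}_{\supp\nabla\chi}$ is a bounded function, and the right-hand side is $\O(\|\psi\|)$ using the a priori elliptic estimate $\|(-ih\nabla-\mathbf{A})\psi\|\leq Ch^{1/2}\|\psi\|$ valid on $I_h$. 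This yields $\int e^{2\alpha\Phi_0/h^{1/2}}|\psi|^2\leq C\|\psi\|^2$ on the region where $\chi=1$; adding the trivial bound on the complementary (bounded, hence $w$-bounded) collar region gives \eqref{eq.Agmona}. Estimate \eqref{eq.Agmonb} then follows by feeding \eqref{eq.Agmona} back into the quadratic form identity $\mathcal{Q}_{h,\mathbf{A}}(w\chi\psi)+(\text{boundary term}) = \lambda\|w\chi\psi\|^2 + (\text{commutator terms})$, estimating the commutator terms by \eqref{eq.Agmona}, and using $\lambda=\O(h)$ together with the sign control $\gamma h^{3/2}\int_{\partial\Omega}|\cdot|^2 \geq -Ch^{3/2}\|\cdot\|^2_{H^1}$ absorbed on the left.

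The main obstacle I anticipate is \emph{making the perturbative resolvent bound genuinely uniform in $h$}: the perturbation $h^{3/2}\alpha\mathcal{R}_1$ is morally $\O(\alpha h)\cdot\|\mathscr{L}^{\R^2}_{h,\mathbf{A}}-\lambda\|^{1/2}$-small, so one does not get it directly from operator-norm bounds but must use a relative-boundedness estimate of the form $\|\mathcal{R}_1 u\|\leq C(\|\mathscr{L}^{\R^2}_{h,\mathbf{A}}u\| + \|u\|)$ and carefully track the $h$-powers, exploiting that on the relevant spectral subspace $\|\mathscr{L}^{\R^2}_{h,\mathbf{A}}u\| = \O(h)\|u\|$. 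A secondary technical point is that the claimed estimates must hold \emph{uniformly for $\gamma\in[-\gamma_0,+\infty]$}, including the Dirichlet case $\gamma=+\infty$; this is not an issue in the interior argument above (which never sees the boundary condition), but one must make sure the a priori bound $\|(-ih\nabla-\mathbf{A})\psi\|\leq Ch^{1/2}\|\psi\|$ and the form-absorption of the Robin term are themselves uniform in $\gamma$, which follows from the standard trace inequality $\|\psi\|^2_{L^2(\partial\Omega)}\leq \delta\|\nabla\psi\|^2 + C_\delta\|\psi\|^2$ and the energy bound $\lambda\leq hb$.
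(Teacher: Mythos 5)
Your overall plan --- conjugating the full-plane magnetic Laplacian by $w=e^{\alpha\Phi_0/h^{1/2}}$, proving a uniform resolvent estimate for the conjugated operator via a Neumann-series argument, then applying it through a cutoff commutator --- is exactly the paper's strategy, and the resolvent part of your proposal is essentially the paper's Lemma~\ref{lem.bij}.

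There is however a genuine gap in the cutoff step that invalidates the argument as written. You take $\chi$ with transition region $\{\epsilon/2\leq\dist(x,\partial\Omega)\leq\epsilon\}$ at a \emph{fixed} macroscopic scale $\epsilon$, and assert that $w\,\mathds{1}_{\supp\nabla\chi}$ is a bounded function. But on $\supp\nabla\chi$ one has $\Phi_0\in[\epsilon/2,\epsilon]$, so $w=e^{\alpha\Phi_0/h^{1/2}}\geq e^{\alpha\epsilon/(2h^{1/2})}$, which is not $\O(1)$ as $h\to 0$. Your chain of inequalities therefore only yields $\|w\chi\psi\|\leq Ce^{\alpha\epsilon/h^{1/2}}\|\psi\|$, which is far weaker than the claim and in fact trivially true (since $w\leq e^{\alpha\|\Phi_0\|_\infty/h^{1/2}}$ anyway). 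The same error recurs at the end where you invoke ``the trivial bound on the complementary (bounded, hence $w$-bounded) collar region'': on $\{\dist(x,\partial\Omega)<\epsilon\}$ the weight $e^{2\alpha\Phi_0/h^{1/2}}$ is as large as $e^{2\alpha\epsilon/h^{1/2}}$, so that contribution is not $\O(\|\psi\|^2)$ either. What your scheme actually proves is $\int e^{2\alpha(\Phi_0-\epsilon)_+/h^{1/2}}|\psi|^2\leq C\|\psi\|^2$, which says nothing about decay in the crucial region $\Phi_0\in[h^{1/2},\epsilon]$.

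The paper avoids this by cutting off at the $h^{1/2}$-scale: $\chi_h$ has transition region $\{\Phi_0\in[h^{1/2},2h^{1/2}]\}$ and a companion $\tilde\chi_h$ supported in $\{\Phi_0\leq 4h^{1/2}\}$. On both the commutator region and the complementary collar region $\Phi_0/h^{1/2}=\O(1)$, so the weight stays $\O(1)$ and the resolvent estimate produces $\|\varphi\|\leq C(\|\tilde\chi_h\varphi\|+\|(1-\chi_h)\varphi\|)\leq C'\|\psi\|$ with $\varphi=w\psi$. This in turn forces a further step you skip: the weighted gradient bound $h\|\nabla\chi_h\cdot(-ih\nabla-\mathbf{A})\varphi\|\lesssim h\|\tilde\chi_h\varphi\|$ (the paper's~\eqref{eq.nablachihnabla}) is a statement about $\varphi$, not about $\psi$, and cannot be deduced from the unweighted a priori bound $\|(-ih\nabla-\mathbf{A})\psi\|\leq Ch^{1/2}\|\psi\|$ alone; the paper needs a separate IMS-type localization computation on the conjugated operator to get it.
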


\subsection{Preliminaries}
In the following, $\mathscr{L}^{\R^2}_h$ denotes the operator
$(-ih\nabla-\mathbf{A})^2$ acting on the Hilbert space
$L^{2}(\R^{2})$.  By using the gauge invariance, we assume in the
whole section that $\mathbf{A}=\frac{1}{2}(-x_2,x_1)$.  Due to our
choice of eigenvalue $\lambda$, we deduce that
$\mathscr{L}^{\R^2}_h-\lambda$ is bijective and that there exists
$C>0$ such that, for all $h>0$,
\[
  \|(\mathscr{L}^{\R^2}_h-\lambda)^{-1}\|\leq Ch^{-1}\,.
\]
More precisely, we can take $C = \min(|2n-3-a|,|2n-1-b|)^{-1}$.  We
let $\Phi=\alpha \Phi_{0} $, with $\alpha>0$ to be determined, and consider the conjugated operator
\begin{align}
  \mathscr{L}^{\Phi}_h:= e^{\Phi/h^{1/2}} \mathscr{L}^{\R^2}_h
  e^{-\Phi/h^{1/2}}
  & = (-ih\nabla-\mathbf{A}+ih^{\frac12}\nabla\Phi)^2 \\
  & = \mathscr{L}^{\R^2}_h +
    2ih^{\frac12}\nabla\Phi\cdot(-ih\nabla-\mathbf{A}) - h\abs{\nabla\Phi}^2
    - i h^{\frac32} \Delta\Phi\,.
  \label{equ:conjugated}
\end{align}
The following lemma tells us that the invertibility is preserved for
$\mathscr{L}^{\Phi}_h - \lambda$ if $\alpha$ is small enough.
\begin{lemma}\label{lem.bij}
  There exists $C>0$ such that for all $h>0$ and all $\alpha>0$,
  \begin{equation}\label{eq.H1resolvant}
    h^{\frac12}\|\nabla\Phi\cdot(-ih\nabla-\mathbf{A})(\mathscr{L}^{\R^2}_h-\lambda)^{-1}\|\leq C\alpha\,.
  \end{equation}
  In particular, $\mathscr{L}^{\Phi}_h-\lambda$ is bijective as soon
  as $\alpha\leq \alpha_0$ and $\alpha_0$ is chosen small enough. With
  such a choice of $\alpha_0$, there exists $C>0$ such that, for all
  $h>0$, and all $\alpha\leq \alpha_0$,
  \begin{equation}\label{eq.resopert}
    \left\|\left(\mathscr{L}^{\Phi}_h-\lambda\right)^{-1}\right\|
    \leq \frac{C}{h}\,.
  \end{equation}
\end{lemma}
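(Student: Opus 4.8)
The plan is to view $\mathscr{L}^\Phi_h-\lambda$ as a perturbation of the invertible operator $\mathscr{L}^{\R^2}_h-\lambda$, the perturbation being the operator
\[
  R_h:=2ih^{\frac12}\nabla\Phi\cdot(-ih\nabla-\mathbf{A})-h|\nabla\Phi|^2-ih^{\frac32}\Delta\Phi
\]
appearing in~\eqref{equ:conjugated}, and to conclude by a Neumann series argument. Everything reduces to~\eqref{eq.H1resolvant}: once it is known, the first term of $R_h(\mathscr{L}^{\R^2}_h-\lambda)^{-1}$ has norm $\le 2C\alpha$ by~\eqref{eq.H1resolvant}, while the last two are multiplication operators of norm $\le C\alpha^2 h$ and $\le C\alpha h^{\frac32}$ respectively (using $\Phi=\alpha\Phi_0$ with $\Phi_0\in W^{2,\infty}(\R^2)$), composed with $(\mathscr{L}^{\R^2}_h-\lambda)^{-1}$ of norm $\le Ch^{-1}$, hence they contribute $\le C\alpha^2$ and $\le C\alpha h^{\frac12}$; altogether $\|R_h(\mathscr{L}^{\R^2}_h-\lambda)^{-1}\|\le C_1\alpha$ for $h\le1$ and $\alpha\le1$.

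To prove~\eqref{eq.H1resolvant} I would argue by the energy identity. Given $u\in L^2(\R^2)$, set $v:=(\mathscr{L}^{\R^2}_h-\lambda)^{-1}u\in\mathrm{Dom}(\mathscr{L}^{\R^2}_h)$; pairing $(\mathscr{L}^{\R^2}_h-\lambda)v=u$ with $v$ and taking real parts (recall $\lambda\in\R$) gives
\[
  \|(-ih\nabla-\mathbf{A})v\|^2=\lambda\|v\|^2+\Re\langle u,v\rangle\le|\lambda|\,\|v\|^2+\|u\|\,\|v\|\,.
\]
By Assumption~\ref{rem.abh}, $\lambda\in[ha,hb]$ with $[a,b]$ inside a fixed compact interval, so $|\lambda|\le Ch$; combined with the resolvent bound $\|(\mathscr{L}^{\R^2}_h-\lambda)^{-1}\|\le Ch^{-1}$ recalled above, this yields $\|(-ih\nabla-\mathbf{A})v\|^2\le Ch^{-1}\|u\|^2$, that is, $h^{\frac12}\|(-ih\nabla-\mathbf{A})(\mathscr{L}^{\R^2}_h-\lambda)^{-1}\|\le C$. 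Since $\nabla\Phi=\alpha\nabla\Phi_0$ with $\nabla\Phi_0$ bounded, contracting on the left with $\nabla\Phi$ costs only a factor $C\alpha$, and~\eqref{eq.H1resolvant} follows.

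It then only remains to choose $\alpha_0:=\min(1,1/(2C_1))$: for $\alpha\le\alpha_0$ the operator $I+R_h(\mathscr{L}^{\R^2}_h-\lambda)^{-1}$ is invertible on $L^2(\R^2)$ with inverse of norm $\le2$, and from the factorization
\[
  \mathscr{L}^\Phi_h-\lambda=\bigl(I+R_h(\mathscr{L}^{\R^2}_h-\lambda)^{-1}\bigr)\,(\mathscr{L}^{\R^2}_h-\lambda)
\]
one reads off both the bijectivity of $\mathscr{L}^\Phi_h-\lambda$ and the bound $\|(\mathscr{L}^\Phi_h-\lambda)^{-1}\|\le2Ch^{-1}$, which is~\eqref{eq.resopert}.

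I do not expect a serious obstacle: the computation is elementary. The two points that deserve a little care are bookkeeping the powers of $h$ in the energy estimate — so that the prefactor $h^{\frac12}$ in~\eqref{eq.H1resolvant} exactly absorbs the $h^{-\frac12}$ loss in $\|(-ih\nabla-\mathbf{A})(\mathscr{L}^{\R^2}_h-\lambda)^{-1}\|$ — and justifying the factorization above, for which $\mathscr{L}^\Phi_h$ and $\mathscr{L}^{\R^2}_h$ must share the same domain; this holds because, by~\eqref{equ:conjugated}, multiplication by the $W^{2,\infty}$ weight $e^{\pm\Phi/h^{1/2}}$ preserves $\mathrm{Dom}(\mathscr{L}^{\R^2}_h)$.
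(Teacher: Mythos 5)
Your proof is correct and follows essentially the same route as the paper: the energy identity pairing $(\mathscr{L}^{\R^2}_h-\lambda)v=u$ with $v$, combined with the resolvent bound $\|(\mathscr{L}^{\R^2}_h-\lambda)^{-1}\|\leq Ch^{-1}$ and the boundedness of $\nabla\Phi_0$, to obtain~\eqref{eq.H1resolvant}, followed by the factorization $\mathscr{L}^\Phi_h-\lambda=(\mathrm{Id}+R_h(\mathscr{L}^{\R^2}_h-\lambda)^{-1})(\mathscr{L}^{\R^2}_h-\lambda)$ and a Neumann series. The only differences are notational (the roles of $u$ and $v$ are swapped relative to the paper) and your added remark on domain preservation, which the paper leaves implicit.
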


\begin{proof}
  Consider $v\in L^2(\R^2)$ and let
  $u=(\mathscr{L}^{\R^2}_h-\lambda)^{-1}v$. We have
  \[(\mathscr{L}^{\R^2}_h-\lambda)u=v\,,\] so that, by taking the
  scalar product with $u$ and using that $\lambda\leq Ch$,
  \[
    \|(-ih\nabla-\mathbf{A})u\|^2\leq Ch\|u\|^2+\|u\|\|v\|\,.
  \]
  Therefore, since $\nabla\Phi_0\in L^\infty$, there is a new constant
  $C'>0$ such that
  \[
    h^{\frac12}\|\nabla\Phi\cdot(-ih\nabla-\mathbf{A})u\|\leq C'\alpha
    h\|u\|+C'h^{\frac12}\alpha\|u\|^{\frac12}\|v\|^{\frac12}\,.
  \]
  Since
  $\|u\|=\left\|(\mathscr{L}^{\R^2}_h-\lambda)^{-1}v\right\|\leq
  Ch^{-1}\|v\|$, we see that
  \[
    h^{\frac12}\|\nabla\Phi\cdot(-ih\nabla-\mathbf{A})(\mathscr{L}^{\R^2}_h-\lambda)^{-1}v\|\leq
    \tilde C\alpha\|v\|\,,
  \]
  which gives \eqref{eq.H1resolvant}.
  
  Let us now deal with the bijectivity. We have
  \[
\mathscr{L}^{\Phi}_h - \lambda = \mathscr{L}^{\R^2}_h -
    \lambda + B
\]
with
\[
B :=  2ih^{\frac12}\nabla\Phi\cdot(-ih\nabla-\mathbf{A}) - h\abs{\nabla\Phi}^2
    - i h^{\frac32} \Delta\Phi\,.
\]
Since $\nabla\Phi_0$ and $\Delta\Phi_0$ are bounded, we deduce from \eqref{eq.H1resolvant} that
\[
  \|B(\mathscr{L}^{\R^2}_h-\lambda)^{-1}\|\leq
  C\alpha + C_1\alpha + h^{1/2}C_2\alpha^2 \leq \tilde C\alpha\,,
\]
when $\alpha$ is small enough. On the other hand,
\[
\mathscr{L}^{\Phi}_h - \lambda = \left(\mathrm{Id} + B(\mathscr{L}^{\R^2}_h-\lambda)^{-1}\right)(\mathscr{L}^{\R^2}_h-\lambda)\,;
\]
For $\alpha$ small enough, we deduce that
$\mathrm{Id} + B(\mathscr{L}^{\R^2}_h-\lambda)^{-1} $ is invertible,
and thus so is $\mathscr{L}^{\Phi}_h - \lambda $.
\end{proof}

In order to prove Proposition \ref{prop.Agmon}, we need to localize on
an $h^{1/2}$-neighborhood of $\partial\Omega$. For this purpose, we
introduce two functions $\chi_h\in\mathscr{C}^\infty_0(\Omega)$ and
$\widetilde \chi_h\in\mathscr{C}^\infty(\overline{\Omega})$ as
follows.
\[
  \chi_h \colon
  \begin{cases}
    \Omega \longrightarrow [0,1]\\
    x \longmapsto g(\Phi_0(x)/h^\frac{1}{2})
  \end{cases}
  \text{ and }  \widetilde \chi_h \colon
  \begin{cases}
     \overline\Omega \longrightarrow [0,1]\\
     x \longmapsto 1 - g(\Phi_0(x)/2h^\frac{1}{2})
  \end{cases}
\]
where $g$ is a smooth non-decreasing function on $\mathbf{R}$, valued
in $[0,1]$, equal to $0$ on $(-\infty,1)$ and to $1$ on
$(2,+\infty)$. In particular,
\begin{equation}\label{supp.cutoff1}
  {\rm supp}(\chi_h)
  \cap \overline{\Omega}\subset\{x\in\Omega\,,h^{-\frac12}\Phi_0(x)\geq 1\}\,,
\end{equation}
and
\begin{equation}\label{supp.cutoffs}\begin{split}
    {\rm supp}(\nabla \chi_h)\cap \overline{\Omega}
    &\subset \{x\in \Omega\,: h^{-\frac{1}{2}}\Phi_0(x)\in[1,2]\}
    \subset \{x\in \Omega\,:\widetilde \chi_h(x) = 1\}\,.
  \end{split}
\end{equation}

Note that the following properties hold:
\begin{enumerate}[---]
\item $\chi_h= 1$ away from an $h^{1/2}$-neighborhood
  $\partial\Omega$,
\item $\nabla\chi_h$ is supported in an $h^{1/2}$-neighborhood of
  $\partial\Omega$,
\item $\mathds{1}_{ {\rm supp}\nabla\chi_h}\leq \widetilde \chi_h$,
\item $\widetilde \chi_h= 0$ away from an $h^{1/2}$-neighborhood of
  $\partial\Omega$.
\end{enumerate}

\subsection{Proof of Proposition \ref{prop.Agmon}}
Let us consider $\lambda\in[h a,h b]\cap\mathrm{sp}(\mathscr{L}_h)$
and an associated eigenfunction $\psi\in
\mathrm{Dom}(\mathscr{L}_h)$. We have
\[\left((-ih\nabla-\mathbf{A})^2-\lambda\right)\psi=0\,.\]
Let $\varphi=e^{\Phi/h^{1/2}}\psi$. Using~\eqref{equ:conjugated} in
$\Omega$, the equation becomes
\begin{equation}\label{eq.eevarphi}
  \left(\mathscr{L}^{\Phi}_h-\lambda\right)\varphi=0\,.
\end{equation}
Then, we have
\begin{align}
  \left(\mathscr{L}^{\Phi}_h-\lambda\right)(\chi_h\varphi)
  & =[\mathscr{L}^{\Phi}_h,\chi_h]\varphi \nonumber\\
  & = e^{\Phi/h^{1/2}} [\mathscr{L}_h,\chi_h] e^{-\Phi/h^{1/2}}\varphi \nonumber\\
  & = e^{\Phi/h^{1/2}} \left(-h^2\Delta\chi_h -
    2ih\nabla\chi_h\cdot(-ih\nabla-\mathbf{A})
    \right)e^{-\Phi/h^{1/2}}\varphi\nonumber\\
  & = \left(-h^2\Delta\chi_h -
    2ih\nabla\chi_h\cdot(-ih\nabla-\mathbf{A}+ih^{\frac12}\nabla\Phi)
    \right)\varphi
    \label{equ:lphih-on-chi}
\end{align}
We have $\|h^2(\Delta\chi_h)\varphi\|\leq
Ch\|\tilde\chi_h\varphi\|$. (Here and in the rest of the paper, $C$
denotes a constant that is independent on $h$ but that can vary from
line to line.)  Let us explain how to deal with the last term. We have
\[
  h\|\nabla\chi_h\cdot(-ih\nabla-\mathbf{A}+ih^{\frac12}\nabla\Phi)\varphi\|
  \leq h\|\nabla\chi_h\cdot(-ih\nabla-\mathbf{A})\varphi\| +
  Ch\|\tilde\chi_h\varphi\| \,.
\]
Let us temporarily admit that, for
$\alpha$ small enough,
\begin{equation}\label{eq.nablachihnabla}
  h\|\nabla\chi_h\cdot(-ih\nabla-\mathbf{A})\varphi\|
  \leq Ch\|\tilde\chi_h\varphi\|\,.
\end{equation}
We then immediately deduce from~\eqref{equ:lphih-on-chi} that
\begin{equation}
  \label{302}
  \|\left(\mathscr{L}^{\Phi}_h-\lambda\right)(\chi_h\varphi)\|
  \leq \tilde Ch\|\tilde\chi_h\varphi\|\,.
\end{equation}
Since $\chi_h\varphi\in\mathrm{Dom}(\mathscr{L}^{\R^2}_h)$ we
obtain from~\eqref{eq.resopert} that
\[
  \|\chi_h\varphi\|\leq C\|\tilde\chi_h\varphi\|\,,
\]
which implies that
\[
  \|\varphi\|\leq  \tilde C(\|\tilde\chi_h\varphi\|+\|(1-\chi_h)\varphi\|)\,,
\]
showing that $\varphi$ is localized near $\partial\Omega$. More
precisely, recalling that $\varphi=e^{\Phi/h^{\frac12}}\psi$, using
that $\Phi_0(x)=\mathrm{dist}(x,\partial\Omega)$ near the boundary,
and the fact that the supports of $\tilde\chi_h$ and $1-\chi_h$ lie in
neighborhood of the boundary of size $h^{\frac12}$, we deduce
\eqref{eq.Agmona}.

 Let us now deal with \eqref{eq.Agmonb}. We have the Agmon identity
 \[
   \Re\,\langle\mathscr{L}_h\psi,\,e^{2\Phi/h^{1/2}}\psi\rangle
   = \mathscr{Q}_{h,\mathbf{A}}(e^{\Phi/h^{1/2}}\psi)
   - h\|e^{\Phi/h^{1/2}}\psi\nabla\Phi\|^2\,,
 \]
 which follows from~\eqref{equ:conjugated} where we see that
 $\Re\,\mathscr{L}^{\Phi}_h = \mathscr{L}_h - h\abs{\nabla\Phi}^2 $
 and we notice that
 $ \Re\,\langle\mathscr{L}_h\psi,\, e^{2\Phi/h^{1/2}}\psi\rangle =
 \langle (\Re\,\mathscr{L}^{\Phi}_h) e^{\Phi/h^{1/2}}\psi, \,
 e^{\Phi/h^{1/2}}\psi\rangle$. Recall also that, when
 $u\in\textup{Dom}(\mathscr{L}_h)$, then
 $\langle \mathscr{L}_h u, \, u \rangle =
 \mathscr{Q}_{h,\mathbf{A}}(u) $, see~\eqref{eq:Q}.

 Then, by using that $\psi$ is an
 eigenfunction, we get
 \[
   \int_{\Omega}\left|(-ih\nabla-\mathbf{A})\left(
       e^{\Phi/h^{1/2}}\psi\right)\right|^2\mathrm{d}x + \gamma
   h^{\frac32}\int_{\partial\Omega}|e^{\Phi/h^{1/2}}\psi|^2\mathrm{d}s-h\|e^{\Phi/h^{1/2}}\psi\nabla\Phi\|^2=\lambda\|e^{\Phi/h^{1/2}}\psi\|^2\,.
 \]
 With \eqref{eq.Agmona}, we find
 \[\int_{\Omega}\left|(-ih\nabla-\mathbf{A})\left(e^{\Phi/h^{1/2}}\psi\right)\right|^2\mathrm{d}x
   + \gamma
   h^{\frac32}\int_{\partial\Omega}|e^{\Phi/h^{1/2}}\psi|^2\mathrm{d}s\leq
   Ch\|\psi\|^2\,.\] From a classical trace theorem (see for
 instance~\cite[Section 5.5]{evans-book-pde}), there exists $C>0$ such
 that for all $\varepsilon>0$, we have
\[
  \int_{\partial\Omega}|\varphi|^2\mathrm{d}s\leq C\left(\varepsilon^{-1}\|\varphi\|^2+\varepsilon\|\nabla|\varphi|\|^2\right)\,.
\]
With the diamagnetic inequality (see for instance \cite[Theorem
2.1.1]{FH10}), we deduce that
\[h^2\int_{\partial\Omega}|\varphi|^2\mathrm{d}s\leq
  C\left(h^2\varepsilon^{-1}\|\varphi\|^2+\varepsilon\|(-ih\nabla-\mathbf{A})\varphi\|^2\right)\,,\]
and then
\[h^{\frac32}\int_{\partial\Omega}|\varphi|^2\mathrm{d}s\leq C\left(h^{\frac32}\varepsilon^{-1}\|\varphi\|^2+\varepsilon h^{-\frac12}\|(-ih\nabla-\mathbf{A})\varphi\|^2\right)\,.\]
Taking $\varepsilon=\frac{h^{\frac12}}{2|c|C}$ implies that
\[\int_{\Omega}\left|(-ih\nabla-\mathbf{A})\left(e^{\Phi/h^{1/2}}\psi\right)\right|^2\mathrm{d}x\leq \tilde Ch\|\psi\|^2\,.\]
Computing a commutator gives \eqref{eq.Agmonb}.

It remains to explain why \eqref{eq.nablachihnabla} holds.
From~\eqref{equ:conjugated} we can write
\begin{equation}\label{eq.LPhi}
\mathscr{L}^\Phi_h=L_1+ih^{\frac12}L_2\,,
\end{equation}
with
\[L_1=(-ih\nabla-\mathbf{A})^2-h|\nabla\Phi|^2\,,\quad
  L_2=2\nabla\Phi\cdot(-ih\nabla-\mathbf{A}) - ih\Delta\Phi\,.\]
From \eqref{eq.LPhi} and \eqref{eq.eevarphi}, we get
\[(L_1-\lambda+ih^{\frac12}L_2)\varphi=0\,.\]
For $j=1,2$, we have 
\[\Re\langle (L_1-\lambda)\varphi,(\partial_j\chi_h)^2\varphi\rangle-h^{\frac12}\Im\langle L_2\varphi,(\partial_j\chi_h)^2\varphi\rangle=0\,.\]
Thanks to the classical localization formula (see, for instance, \cite[Prop. 4.2]{Raymond17}), we have
\begin{multline*}
\Re\langle (L_1-\lambda)\varphi,(\partial_j\chi_h)^2\varphi\rangle\\
=\|(-ih\nabla-\mathbf{A})[(\partial_j\chi_h)\varphi]\|^2-h\int_{\Omega}|\nabla\Phi|^2|(\partial_j\chi_h)\varphi|^2\dd x-\lambda\|\partial_j\chi_h \varphi\|^2-h^2\|\nabla(\partial_j\chi_h)\varphi\|^2\,.
\end{multline*}
Moreover,
\[\begin{split}
&|\Im\langle L_2\varphi,(\partial_j\chi_h)^2\varphi\rangle|\\
&=|\Im\langle (\partial_j\chi_h)L_2\varphi,(\partial_j\chi_h)\varphi\rangle|\\
&\leq|\Im\langle L_2((\partial_j\chi_h)\varphi),(\partial_j\chi_h)\varphi\rangle|+|\langle[L_2,\partial_j\chi_h]\varphi,(\partial_j\chi_h)\varphi\rangle|\\
&\leq Ch\|(\partial_j\chi_h)\varphi\|^2+C\alpha\|(-ih\nabla-\mathbf{A})(\partial_j\chi_h)\varphi\|\|\partial_j\chi_h\varphi\|+Ch\|\nabla(\partial_j\chi_h)\varphi\|\|\partial_j\chi_h\varphi\|\,.
\end{split}\]
Due to the properties of $\chi_h$, we have
\begin{multline*}
|\Im\langle L_2\varphi,(\partial_j\chi_h)^2\varphi\rangle|\leq C\|\tilde\chi_h\varphi\|+Ch^{-\frac12}\alpha\|(-ih\nabla-\mathbf{A})(\partial_j\chi_h)\varphi\|^2+Ch^{-\frac12}\alpha\|\tilde\chi_h\varphi\|^2\\
+Ch^{-\frac12}\|\tilde\chi_h\varphi\|^2\,.
\end{multline*}
Therefore,
\begin{equation*}
\|(-ih\nabla-\mathbf{A})[(\partial_j\chi_h)\varphi]\|^2\leq C\|\tilde\chi_h\varphi\|^2+C\alpha\|(-ih\nabla-\mathbf{A})(\partial_j\chi_h)\varphi\|^2\,.
\end{equation*}
Taking $\alpha$ small enough, we get
\[\|(-ih\nabla-\mathbf{A})[(\partial_j\chi_h)\varphi]\|^2\leq C\|\tilde\chi_h\varphi\|^2\,.\]
Computing a commutator, we get \eqref{eq.nablachihnabla}.

\section{An operator on a semi-cylinder}\label{sec.3}

\subsection{A model operator}
The exponential localization near the boundary at a scale of order
$h^{\frac12}$ given by Proposition \ref{prop.Agmon} invites us to use
the classical tubular coordinates $(s,t)$ near the boundary. We recall
that these coordinates are defined thanks to the map
\[
  \Gamma : \T_{2L}\times(0,t_0)\ni(s,t)\mapsto
  \Gamma(s)-t\mathbf{n}(s) \qquad \T_{2L} := \R/2L\Z\,,
\]
which is injective if $t_0$ is small enough. Its Jacobian is
$a(s,t)=1-t\kappa(s)$, where $\kappa$ is the curvature of the boundary
at the point $\Gamma(s)$. Here $\Gamma$ is a counterclockwise
parametrization by the curvilinear abscissa. Thus, $\Gamma$ induces a
smooth diffeomorphism between $\T_{2L}\times(0,t_0)$ and
$\Omega_{t_0}:=\Gamma(\T_{2L}\times(0,t_0))$.

By using \cite[Appendix F]{FH10}, we can check that the magnetic Laplacian acts locally near the boundary in these coordinates as
\begin{multline*}
\widetilde{\mathscr{L}}_h=a(s,t)^{-1}\left(-ih\partial_s-t+\mathfrak{f}_{0}+\kappa(s)\frac{t^2}{2}\right)a(s,t)^{-1}\left(-ih\partial_s-t+\mathfrak{f}_{0}+\kappa(s)\frac{t^2}{2}\right)\\
-h^2a(s,t)^{-1}\partial_ta(s,t)\partial_t\,,
\end{multline*}
in the ambient Hilbert space $L^2(a\dd s\dd t)$. Here $\mathfrak{f}_0=\frac{|\Omega|}{|\partial\Omega|}$. The boundary condition \eqref{eq.boundary} becomes
\[\partial_t\psi(s,0)=\gamma h^{-\frac12}\psi(s,0)\,.\]
Of course the operator $\widetilde{\mathscr{L}}_h$ is only defined near $t=0$. We would like to consider a global operator. This can be done  by inserting cutoff functions with respect to $t$. We let $\check t=t\zeta(h^{-\frac12+\eta}t)$ with $\eta\in(0,\frac12)$ and $\zeta$ a smooth cutoff function equal to $1$ near $0$.

Let us consider the differential operator acting as
\begin{multline*}
	\widetilde{\mathscr{L}}_h= a(s,\check t)^{-1}\left(-ih\partial_s-t+\mathfrak{f}_{0}+\kappa(s)\frac{\check t^2}{2}\right) a(s,\check t)^{-1}\left(-ih\partial_s-t+\mathfrak{f}_{0}+\kappa(s)\frac{\check t^2}{2}\right)\\
	-h^2 a(s,\check t)^{-1}\partial_t a(s,\check t)\partial_t\,,
\end{multline*}
on the domain
  \begin{align*}
    \mathrm{Dom}(\widetilde{\mathscr{L}}_h)&=\big\lbrace u\in L^{2}( \T_{2L}\times \R_{+}): -\partial_{t}^{2}u \in  L^{2}( \T_{2L}\times \R_{+}),\\&\quad \quad \left(-ih\partial_s-t+\mathfrak{f}_{0}\right)^{2}u \in L^{2}( \T_{2L}\times \R_{+}) , \partial_{t}u(\cdot,0)=\gamma h^{-\frac12} u(\cdot,0) \big\rbrace \,.   
  \end{align*}
The ambient Hilbert space is $L^2(a(s,\check t)\mathrm{d}s\mathrm{d}t)=L^2(\mathrm{d}s\mathrm{d}t)$, with $2L$-periodic condition with respect to $s$.

The exponential localization of the original eigenfunctions at the
scale $h^{\frac{1}{2}}$ near the boundary suggests to consider the
partial rescaling
\[(s,t)=(s,\hbar \tau)\,,\]
where $\hbar=h^{\frac12}$. We consider the new operator, acting in the ambient Hilbert space $L^2(\hat a_\hbar \mathrm{d}s\mathrm{d}\tau)=L^2(\mathrm{d}s\mathrm{d}\tau)$,
\begin{equation}\label{eq.hatLh}
 \widehat{\mathscr{L}}_h=\hat a_\hbar(s,\tau)^{-1}p_{s,\hbar}\hat a_\hbar(s,\tau)^{-1}p_{s,\hbar}
	-\hat a_\hbar(s,\tau)^{-1}\partial_\tau\hat a_\hbar(s,\tau)\partial_\tau\,,
\end{equation}
with
\begin{equation}\label{eq.pshbar}
	p_{s,\hbar}=-i\hbar\partial_s-\tau+\hbar^{-1}\mathfrak{f}_0+\hbar\kappa(s)\frac{\hat \tau^2}{2}\,,	
\end{equation}
and where $\hat a_\hbar(s,\tau)=1-\hbar\hat\tau\kappa$ with $\hat\tau=\zeta(\hbar^{2\eta}\tau)\tau$.

The boundary condition becomes
\[\partial_{\tau}\psi(s,0)=\gamma\psi(s,0)\,.\]
The domain is given by
\begin{align*}
	\mathrm{Dom}(\widehat{\mathscr{L}}_h)&=\big\lbrace u\in L^{2}( \T_{2L}\times \R_{+}): -\partial_{\tau}^{2}u \in  L^{2}( \T_{2L}\times \R_{+}),\\&\quad \quad \left(-i\hbar\partial_s-\tau+\hbar^{-1}\mathfrak{f}_{0}\right)^{2}u \in L^{2}(  \T_{2L}\times \R_{+}) , \partial_{\tau}u(\cdot,0)=\gamma u(\cdot,0) \big\rbrace \,.   
\end{align*}
In fact, it will even be more convenient to deal with the following operator
\begin{equation}\label{eq.Nhbar}
\mathscr{N}_\hbar=\hat a_\hbar(s,\tau)^{-1}p^{\Xi_0}_{s,\hbar}\hat a_\hbar(s,\tau)^{-1}p^{\Xi_0}_{s,\hbar}
	-\hat a_\hbar(s,\tau)^{-1}\partial_\tau\hat a_\hbar(s,\tau)\partial_\tau\,,
\end{equation}
where we recall that $\Xi_0$ was defined in~\eqref{equ:circle}, and
\begin{equation}\label{eq.pshbarXi}
  p^{\Xi_0}_{s,\hbar} :=
  \Xi_0(\cdot+\hbar^{-1}\mathfrak{f}_0)^{\mathrm{W}}
  - \tau + \hbar\kappa(s)\frac{\hat \tau^2}{2}\,.
\end{equation}
\begin{align*}
	\mathrm{Dom}(\mathscr{N}_\hbar)&=\big\lbrace u\in L^{2}(  \T_{2L}\times \R_{+}): -\partial_{\tau}^{2}u \in  L^{2}( \T_{2L}\times \R_{+}),\\&\quad \quad \tau^{2}u \in L^{2}(  \T_{2L}\times \R_{+}) , \partial_{\tau}u(\cdot,0)=\gamma u(\cdot,0) \big\rbrace \,.   
\end{align*}

\subsection{Microlocalization of the eigenfunctions of $\mathscr{L}_h$}\label{sec.K}

In fact, we can prove that the eigenfunctions of $\mathscr{L}_h$ associated with eigenvalues in $[h a,h b]$ are roughly microlocalized with respect to $\sigma+\hbar^{-1}\mathfrak{f}_0$, the (shifted) dual variable of $s$. In order to quantify this, we consider the compact set  
\begin{equation}\label{eq.U}
	K=\bigcup_{j\geq 1}\lbrace \sigma\in\R : \mu_{j}(\sigma)\in[a,b] \rbrace\subset[\sigma_{\min},\sigma_{\max}]=:\tilde K\,. 
\end{equation}
Note that $K$ is indeed compact due to the properties of the $\mu_j$ (tending to $+\infty$ in $-\infty$) and to the choice of $[a,b]$, which does not contain Landau levels (the limits of the $\mu_j$ in $+\infty$).

The following result establishes a rather rough microlocalization result (with respect to $\sigma$) for the eigenfunctions: it tells us that the eigenfunctions are microlocalized in the compact set $\tilde K$. To quantify this, we consider a smooth function $\Xi$ with values in $[0,1]$ such that $\Xi=0$ near $\tilde K$ and $1$ away from $\tilde K$.

We let $\hat\lambda=h^{-1}\lambda$.

\begin{proposition}\label{prop.agmonmicro}
Let us consider the eigenvalue equation $\mathscr{L}_h\psi=\lambda\psi$ for $\lambda\in[h a,h b]$. Then, 
\begin{equation}\label{eq.cutagmon}
\widehat{\mathscr{L}}_h\varphi=\hat\lambda\varphi+\mathscr{O}(h^{\infty})\|\psi\|\,.
	\end{equation}
with $\varphi=\hat\chi_\hbar\hat\psi$, where $\hat\psi=\psi\circ\Gamma(s,\hbar\tau)$ and $\hat\chi_\hbar(\tau)=\chi(\hbar^{\eta}\tau)$ for a smooth cutoff function $\chi$ equal to $0$ away from $\tau=0$.

Moreover, 		
\begin{equation}\label{eq.cutmicro}
\mathrm{Op}^{\mathrm{W}}_\hbar(\Xi(\sigma+\hbar^{-1}\mathfrak{f}_0)) \varphi=\mathscr{O}(\hbar^\infty)\|\psi\|\,.
\end{equation}
\end{proposition}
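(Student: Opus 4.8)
The statement has two parts: the approximate eigenvalue equation \eqref{eq.cutagmon} for the rescaled, cut-off eigenfunction $\varphi=\hat\chi_\hbar\hat\psi$, and the microlocalization estimate \eqref{eq.cutmicro}. For the first part, the plan is to start from $\mathscr{L}_h\psi=\lambda\psi$, transport it to tubular coordinates via $\Gamma$, and rescale $t=\hbar\tau$; this turns $\mathscr{L}_h$ into $\hbar^2\widehat{\mathscr{L}}_h$ (with the exact, non-truncated coefficients) acting on $\hat\psi$, at least on the region where $\Gamma$ is a diffeomorphism. The cut-off $\hat\chi_\hbar$ restricts to an $h^{1/2-\eta}$-neighbourhood of $\partial\Omega$, well inside the tubular chart, so $\widehat{\mathscr{L}}_h\varphi-\hat\lambda\varphi=\hbar^{-2}[\hbar^2\widehat{\mathscr{L}}_h,\hat\chi_\hbar]\hat\psi$ plus terms coming from replacing $t$ and $a(s,t)$ by their truncations $\check t$, $\hat a_\hbar$ — but on $\mathrm{supp}\,\hat\chi_\hbar$ these truncations are \emph{exact} ($\hat\tau=\tau$ there for $h$ small), so the only genuine error is the commutator. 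The commutator $[\hbar^2\widehat{\mathscr{L}}_h,\hat\chi_\hbar]$ is supported where $\nabla\hat\chi_\hbar\neq 0$, i.e. at distance $\sim h^{1/2-\eta}$ from the boundary, where $\Phi_0(x)\sim h^{1/2-\eta}$; by the Agmon estimate \eqref{eq.Agmona}, $\|\psi\|_{L^2}$ and $\|(-ih\nabla-\mathbf{A})\psi\|_{L^2}$ on that region are $\O(e^{-\alpha h^{-\eta}})\|\psi\| = \O(h^\infty)\|\psi\|$, and the coefficients of the commutator (which involve at most one power of $h\partial$) are bounded, so \eqref{eq.cutagmon} follows.

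For the microlocalization estimate \eqref{eq.cutmicro}, the idea is to test the equation \eqref{eq.cutagmon} against a cut-off in the shifted frequency variable. Write $\chi_\Xi:=\mathrm{Op}^{\mathrm W}_\hbar(\Xi(\sigma+\hbar^{-1}\mathfrak{f}_0))$, a bounded $\hbar$-pseudodifferential operator (after the change of quantization of Remark~\ref{rem.period}, this is $\mathrm{Op}^{\mathrm W}_\hbar(\Xi(\sigma))$ conjugated by a gauge/translation, so symbol calculus applies cleanly). The operator $\widehat{\mathscr{L}}_h$ has principal symbol $(\sigma-\tau)^2+\tau_\xi^2$ in the variables $(s,\sigma;\tau,\tau_\xi)$ — up to the harmless truncated curvature corrections — and on $\mathrm{supp}\,\Xi$, which avoids $\tilde K=[\sigma_{\min},\sigma_{\max}]$, the symbol $(\sigma-\tau)^2$ of the partial operator $(-i\hbar\partial_s-\tau+\hbar^{-1}\mathfrak{f}_0)^2$ restricted to the relevant frequencies is bounded below: more precisely, for $\sigma\notin\tilde K$ and any $\tau$, the transverse operator $-\partial_\tau^2+(\tau-\sigma)^2$ (with the Robin condition) has its whole spectrum outside $[a,b]$ — this is exactly the content of the definition of $\tilde K$ in \eqref{eq.U} together with Corollary~\ref{coro:dispersion}. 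So the plan is: conjugate-commute $\chi_\Xi$ through \eqref{eq.cutagmon} to get $(\widehat{\mathscr{L}}_h-\hat\lambda)\chi_\Xi\varphi=[\widehat{\mathscr{L}}_h,\chi_\Xi]\varphi+\O(h^\infty)\|\psi\|$, then establish an a priori elliptic-type lower bound $\|(\widehat{\mathscr{L}}_h-\hat\lambda)w\|\geq c\|w\|$ for $w$ microlocalized on $\mathrm{supp}\,\Xi$ by combining the transverse spectral gap with a partition of unity in $\sigma$ and the sharp Gårding inequality. Feeding $w=\chi_\Xi\varphi$ in and controlling the commutator term (one order of $\hbar$ gained, plus iteration) yields $\|\chi_\Xi\varphi\|=\O(\hbar)\|\psi\|$, and then a standard bootstrap — re-inserting the improved bound into the commutator, using that $[\widehat{\mathscr{L}}_h,\chi_\Xi]$ is again a $\Psi$DO with symbol supported near $\mathrm{supp}\,\nabla\Xi$ which can be absorbed by a slightly larger cut-off — upgrades this to $\O(\hbar^\infty)$.

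The main obstacle is the elliptic a priori estimate away from $\tilde K$: one must turn the \emph{transverse} spectral fact ``$\mathrm{sp}(-\partial_\tau^2+(\tau-\sigma)^2,\text{Robin})\cap[a,b]=\emptyset$ for $\sigma\notin K$'' into a genuine operator lower bound for the full two-dimensional operator $\widehat{\mathscr{L}}_h-\hat\lambda$ on functions frequency-localized in $\{\sigma+\hbar^{-1}\mathfrak{f}_0\in\mathrm{supp}\,\Xi\}$, uniformly in $\hbar$, while also handling the unbounded region $\sigma\to\pm\infty$ (where the dispersion curves $\mu_j(\sigma)\to+\infty$ as $\sigma\to-\infty$, giving coercivity, and the truncation $\hat\tau$ confines $\tau$ to an $O(\hbar^{-2\eta})$ window so the $\tau$-integration is under control). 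This is done by freezing $\sigma$ (a quantization/partition-of-unity argument in the $s$-variable), using the fibered lower bound $\langle(-\partial_\tau^2+(\tau-\sigma)^2-\hat\lambda)u,u\rangle\geq c\|u\|^2$ for each frozen $\sigma\notin\tilde K$ and $\hat\lambda\in[a,b]$, and then reassembling via the sharp Gårding inequality, paying only $\O(\hbar)$ errors from the symbol calculus — these errors are then defeated by the bootstrap. Everything else (the change of variables, the commutator bookkeeping, the boundedness of $\chi_\Xi$) is routine given Proposition~\ref{prop.Agmon} and the pseudodifferential toolbox recalled in the introduction.
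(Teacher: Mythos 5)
Your treatment of \eqref{eq.cutagmon} is fine and matches the paper's (very terse) argument: pull back to tubular coordinates, rescale, note that $\hat\chi_\hbar$ is supported where $\hat\tau=\tau$, and control the commutator term by the exponential decay \eqref{eq.Agmona} since $\nabla\hat\chi_\hbar$ lives at distance $\sim h^{1/2-\eta}$ from the boundary.

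For \eqref{eq.cutmicro}, however, the proposed ellipticity step has a genuine gap. You assert the fibered \emph{quadratic-form} lower bound
\[
\langle(-\partial_\tau^2+(\tau-\sigma)^2-\hat\lambda)u,u\rangle\geq c\|u\|^2\qquad\text{for }\sigma\notin\tilde K,\ \hat\lambda\in[a,b],
\]
and plan to globalize it via sharp G\aa rding. This positivity is \emph{false} for $\sigma>\sigma_{\max}$ as soon as $n\geq2$: for such $\sigma$ the low dispersion curves satisfy $\mu_k(\gamma,\sigma)<a\leq\hat\lambda$ for all $k<n$ (they accumulate at the Landau levels $2k-1<2n-3<a$), so $n_0(\sigma)-\hat\lambda$ has strictly negative spectrum. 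What is true --- and what the proof actually needs --- is the \emph{resolvent} bound $\|(n_0(\sigma)-\hat\lambda)u\|\geq c\|u\|$, i.e.\ $[a,b]$ stays a uniform distance from the spectrum $\{\mu_k(\gamma,\sigma)\}_k$. A quadratic-form/G\aa rding argument applied to $n_0-\hat\lambda$ cannot deliver this; you would have to work with $(n_0-\hat\lambda)^2$ or with projections onto the low modes, at the price of a more delicate operator-valued symbol calculus.

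The paper sidesteps the issue entirely by a cleaner device: instead of conjugating the cutoff through the original operator and invoking a G\aa rding inequality, it \emph{replaces the frequency variable inside the symbol} by a truncation $\tilde\Xi$ taking values only outside $\tilde K$, producing the operator $\widehat{\mathscr{L}}_0^{\mathrm{cut}}=\mathrm{Op}^{\mathrm{W}}_\hbar(-\partial_\tau^2+(\tilde\Xi(\sigma+\hbar^{-1}\mathfrak f_0)-\tau)^2)$. Since this symbol is $s$-independent, $\widehat{\mathscr{L}}_0^{\mathrm{cut}}-\hat\lambda$ is a Fourier multiplier in $s$ whose fibers are uniformly invertible by the resolvent bound above; bijectivity with $\O(1)$ inverse is then immediate, and the curvature and truncation terms are absorbed by a Neumann series using $\|\mathscr R_\hbar^{\mathrm{cut}}(\widehat{\mathscr{L}}_0^{\mathrm{cut}}-\hat\lambda)^{-1}\|=\O(\hbar^{1-4\eta})$. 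Note also that the paper is careful to control $\|\Xi^{\mathrm W}\partial_\tau\varphi\|$ alongside $\|\Xi^{\mathrm W}\varphi\|$ (the commutator naturally produces a $\partial_\tau$ derivative), and runs the bootstrap as an induction on the size of $\mathrm{supp}\,\Xi$; your sketch omits the first point and conflates the $\O(\hbar)$ G\aa rding error (which would just be absorbed by a smaller constant) with the commutator gain that genuinely drives the iteration.
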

\begin{proof}
  The estimate \eqref{eq.cutagmon} follows from the localization near
  the boundary (Proposition~\ref{prop.Agmon}).

  Then, let us only prove that \eqref{eq.cutmicro} holds when $\Xi$ is
  $0$ near $(-\infty,\sigma_{\max}+\frac\epsilon2)$ and $1$ on
  $(\sigma_{\max}+\epsilon,+\infty)$, the estimate following from
  similar arguments on $(-\infty,\sigma_{\min}-\epsilon)$.

  In order to lighten the notation, we will use a slight abuse of
  notation by writing
  \begin{equation}\label{eq.XiW}
    \Xi^{\mathrm{W}}:=\mathrm{Op}^{\mathrm{W}}_\hbar(\Xi(\sigma+\hbar^{-1}\mathfrak{f}_0))\,.
  \end{equation}
  Then, we write
  \[\left(\widehat{\mathscr{L}}_h-\hat\lambda\right)\Xi^{\mathrm{W}}\varphi=[\widehat{\mathscr{L}}_h,\Xi^\mathrm{W}]\varphi+\mathscr{O}(\hbar^\infty)\|\psi\|\,.\]
  Thanks to the explicit expression \eqref{eq.hatLh}, we get
  \begin{equation}\label{eq.commuthatLXi}
    \|[\widehat{\mathscr{L}}_h,\Xi^{\mathrm{W}}]\varphi\|\leq C\hbar\|\underline{\Xi}^{\mathrm{W}} \varphi\|+C\hbar\|\underline{\Xi}^{\mathrm{W}}\partial_\tau\varphi\|+\mathscr{O}(\hbar^{\infty})\|\psi\|\,,
  \end{equation}
  and we can write, by using the support of
  $\chi(h^{-\frac12+\eta}t)$,
  \begin{equation}\label{eq.perturbationLh}
    \widehat{\mathscr{L}}_h = \widehat{\mathscr{L}}_0+\mathscr{R}_\hbar\,,
    \qquad\widehat{\mathscr{L}}_0=-\partial^2_\tau+p_{s,\hbar,0}^2\,,
    \qquad p_{s,\hbar,0}=-i\hbar\partial_s+\hbar^{-\frac12}\mathfrak{f}_0-\tau\,,
  \end{equation}
  where the remainder $\mathscr{R}_\hbar$ can be written as
  \begin{equation}\label{eq.Rhcut}
    \mathscr{R}_\hbar = \hbar^{1-2\eta}R_{\hbar,2}(s,\tau)p_{s,\hbar,0}^2 +
    \hbar^{1-4\eta}R_{\hbar,1}(s,\tau)p_{s,\hbar,0} +
    \hbar^{2-8\eta}R_{\hbar,3}+\hbar R_{\hbar,4}\partial_\tau\,,
  \end{equation}
  the $R_{\hbar,j}$ being smooth functions, uniformly bounded in
  $\hbar$.

  Then, we consider an increasing function
  $\sigma\mapsto\tilde\Xi(\sigma)\in(\sigma_{\max}+\frac\epsilon4,+\infty)$
  that coincides with $\mathrm{Id}$ on
  $(\sigma_{\max}+\frac\epsilon2,+\infty)$. We let
  \[
    \widehat{\mathscr{L}}^{\mathrm{cut}}_0=\mathrm{Op}^{\mathrm{W}}_\hbar\left(-\partial^2_\tau+(\tilde\Xi(\sigma+\hbar^{-1}\mathfrak{f}_0)-\tau)^2\right)\,,
  \]
  acting on $L^2(\T_{2L}\times \R_+)$, where the superscript
  "$\mathrm{cut}$" refers to the replacement of
  $-i\hbar\partial_s+\hbar^{-1}\mathfrak{f}_0$ by
  $\tilde\Xi^{\mathrm{W}}$ (with the same abuse of notation as in
  \eqref{eq.XiW}). We notice that
  $\widehat{\mathscr{L}}^{\mathrm{cut}}_0-\hat\lambda$ is bijective
  (with an inverse uniformly bounded in $\hbar$) due to the choice of
  $\tilde\Xi$ and the definition of $\sigma_{\max}$. Moreover, we have
  $\{\Xi\neq 0\}\subset\{\tilde\Xi=\mathrm{Id}\}$ so that, with
  \eqref{eq.perturbationLh},
  \[\left(\widehat{\mathscr{L}}^{\mathrm{cut}}_0-\hat\lambda+\mathscr{R}^{\mathrm{cut}}_\hbar\right)\Xi^{\mathrm{W}}\varphi=[\widehat{\mathscr{L}}_h,\Xi^\mathrm{W}]\varphi+\mathscr{O}(\hbar^\infty)\|\psi\|\,,\]
  which can be written as
  \[\left(\mathrm{Id}+\mathscr{R}^{\mathrm{cut}}_\hbar(\widehat{\mathscr{L}}^{\mathrm{cut}}_0-\hat\lambda)^{-1}\right)(\widehat{\mathscr{L}}^{\mathrm{cut}}_0-\hat\lambda)\Xi^{\mathrm{W}}\varphi=[\widehat{\mathscr{L}}_h,\Xi^\mathrm{W}]\varphi+\mathscr{O}(\hbar^\infty)\|\psi\|\,.\]
  By using \eqref{eq.Rhcut} and applying the Calder\'on-Vaillancourt
  theorem, we get that
  \[
    \|\mathscr{R}^{\mathrm{cut}}_\hbar(\widehat{\mathscr{L}}^{\mathrm{cut}}_0
    - \hat\lambda)^{-1}\| = \mathscr{O}(\hbar^{1-4\eta})\,.
  \]
  Thus, the operator
  $\mathrm{Id} +
  \mathscr{R}^{\mathrm{cut}}_\hbar(\widehat{\mathscr{L}}^{\mathrm{cut}}_0-\hat\lambda)^{-1}$
  is bijective as soon as $\hbar$ is small enough.

  With \eqref{eq.commuthatLXi}, this provides us first with
  \[\|\Xi^{\mathrm{W}}\varphi\|^2\leq C\hbar\|\underline{\Xi}^{\mathrm{W}} \varphi\|^2+C\hbar\|\underline{\Xi}^{\mathrm{W}}\partial_\tau\varphi\|^2\\
    +\mathscr{O}(\hbar^{\infty})\|\psi\|^2\,,\] and then
  \[\|{\Xi}^{\mathrm{W}}\partial_\tau\varphi\|^2+\|\Xi^{\mathrm{W}}\varphi\|^2\leq
    C\hbar\|\underline{\Xi}^{\mathrm{W}}
    \varphi\|^2+C\hbar\|\underline{\Xi}^{\mathrm{W}}\partial_\tau\varphi\|^2
    +\mathscr{O}(\hbar^{\infty})\|\psi\|^2\,.\] The estimate
  \eqref{eq.cutmicro} follows by induction on the size of the support
  of $\Xi$.
\end{proof}

\subsection{First spectral estimates}
The aim of the following proposition is to establish that the spectrum
of $\mathscr{L}_h$ in $I_h$ is close to that of $h\mathscr{N}_\hbar$
and thus that $\mathscr{N}_\hbar$ is a nice auxiliary operator to
describe the spectrum of $\mathscr{L}_h$. In fact, we will see that
this proposition is not necessary to prove our spectral estimates, but
its proof is instructive.

\begin{proposition}\label{prop.firstspectralred}
There exists $h_0>0$ such that for all $h\in(0,h_0)$ the following holds.
Let us consider an interval $J_h\subset I_h$. Then, there exists an interval $ \hat J_h$ such that $J_h\subset \hat J_h\subset I_h$ with $\mathrm{d}_{\mathrm{H}}(J_h,\hat J_h)=\mathscr{O}(h^\infty)$ and
\begin{equation}\label{eq.distmult}
\mathrm{rank}\,\mathds{1}_{J_h}(\mathscr{L}_h)\leq \mathrm{rank}\,\mathds{1}_{\hat J_h}(h\mathscr{N}_\hbar)\,.
\end{equation}
Moreover, for all $\lambda\in I_h\cap\mathrm{sp}(\mathscr{L}_h)$,
\begin{equation}\label{eq.distspec}
\mathrm{dist}(\lambda,h\mathrm{sp}({\mathscr{N}}_\hbar))=\mathscr{O}(h^\infty)\,.
\end{equation}
\end{proposition}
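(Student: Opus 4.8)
The plan is to transfer the spectral information from $\mathscr{L}_h$ to $h\mathscr{N}_\hbar$ by using the eigenfunctions of $\mathscr{L}_h$ as quasimodes, exploiting the microlocalization already established. First I would start with $\lambda \in I_h \cap \mathrm{sp}(\mathscr{L}_h)$ and an associated eigenfunction $\psi$. By Proposition \ref{prop.Agmon}, $\psi$ is exponentially localized in an $h^{1/2}$-neighborhood of $\partial\Omega$, so after passing to tubular coordinates $(s,t)$, rescaling $t = \hbar\tau$, and inserting the cutoff $\hat\chi_\hbar$ as in Proposition \ref{prop.agmonmicro}, the rescaled-and-cut function $\varphi = \hat\chi_\hbar\hat\psi$ satisfies $\widehat{\mathscr{L}}_h \varphi = \hat\lambda \varphi + \mathscr{O}(h^\infty)\|\psi\|$ with $\hat\lambda = h^{-1}\lambda$, and loses no mass (i.e. $\|\varphi\| \geq c\|\psi\|$ for some $c>0$). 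Then I would invoke \eqref{eq.cutmicro}: $\varphi$ is microlocalized, modulo $\mathscr{O}(\hbar^\infty)$, in the compact set $\tilde K$ with respect to $\sigma + \hbar^{-1}\mathfrak{f}_0$. Since $\Xi_0$ was built to equal the identity on a neighborhood of $\mu_k^{-1}([a,b])$ — hence on a neighborhood of $K$, and in particular of $\tilde K$ if we choose $\sigma_{\min},\sigma_{\max}$ appropriately — the pseudodifferential replacement $p_{s,\hbar} \rightsquigarrow p_{s,\hbar}^{\Xi_0}$ acts as the identity on $\varphi$ up to $\mathscr{O}(\hbar^\infty)\|\psi\|$. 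Therefore $\mathscr{N}_\hbar \varphi = \widehat{\mathscr{L}}_h \varphi + \mathscr{O}(h^\infty)\|\psi\| = \hat\lambda \varphi + \mathscr{O}(h^\infty)\|\psi\|$, and the spectral theorem gives $\mathrm{dist}(\hat\lambda, \mathrm{sp}(\mathscr{N}_\hbar)) = \mathscr{O}(h^\infty)$, which is exactly \eqref{eq.distspec}.

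For the multiplicity estimate \eqref{eq.distmult} I would upgrade this quasimode argument from a single eigenfunction to the whole spectral subspace. Let $m = \mathrm{rank}\,\mathds{1}_{J_h}(\mathscr{L}_h)$ and let $E_h \subset \mathrm{Dom}(\mathscr{L}_h)$ be the corresponding $m$-dimensional eigenspace. Applying the construction above to every $\psi \in E_h$ (the constants in Propositions \ref{prop.Agmon} and \ref{prop.agmonmicro} are uniform over normalized eigenfunctions associated with eigenvalues in $I_h$), I obtain a linear map $\psi \mapsto \varphi$ from $E_h$ into $\mathrm{Dom}(\mathscr{N}_\hbar)$ which is \emph{almost isometric}: $\|\varphi\| = \|\psi\|(1 + \mathscr{O}(h^\infty))$ and, by the same localization/microlocalization estimates applied to inner products, $\langle \varphi_1, \varphi_2\rangle = \langle\psi_1,\psi_2\rangle + \mathscr{O}(h^\infty)\|\psi_1\|\|\psi_2\|$. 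Hence for $h$ small this map is injective and its image $F_h$ is an $m$-dimensional subspace on which $(\mathscr{N}_\hbar - \hat\lambda)$ is small in the appropriate sense: more precisely $\|(\mathscr{N}_\hbar - E)u\| \leq (\mathrm{d}_{\mathrm{H}}(J_h,\cdot)\text{-type error} + \mathscr{O}(h^\infty))\|u\|$ for $u \in F_h$ and $E$ the midpoint (or any point) of $J_h$. A standard spectral-projection argument (as in \cite[Section 1]{HS84} or the min-max characterization of rank of spectral projectors) then shows that $h\mathscr{N}_\hbar$ has at least $m$ eigenvalues in a set $\hat J_h$ obtained by inflating $J_h$ by $\mathscr{O}(h^\infty)$, and that $\hat J_h$ can be taken inside $I_h$ since the endpoints of $I_h$ are at distance $\geq c h$ from $J_h$'s neighborhood; this is \eqref{eq.distmult} together with $\mathrm{d}_{\mathrm{H}}(J_h,\hat J_h) = \mathscr{O}(h^\infty)$.

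The main obstacle, I expect, is the bookkeeping in two places. First, one must check carefully that \emph{no mass is lost} when passing from $\psi$ to $\varphi$: the cutoffs $\hat\chi_\hbar$ and the change of variables must be shown to be almost-unitary on the relevant subspace, which relies crucially on the Agmon estimate \eqref{eq.Agmona} controlling $\psi$ away from the boundary and on \eqref{eq.cutmicro} controlling the region where $\Xi_0 \neq \mathrm{Id}$. Second, the $\mathscr{O}(h^\infty)$ remainder in \eqref{eq.cutagmon} and the $\mathscr{O}(\hbar^\infty)$ in \eqref{eq.cutmicro} come with the factor $\|\psi\|$, not the $\widehat{\mathscr{L}}_h$- or $\mathscr{N}_\hbar$-graph norm, so I would need to verify that the graph norm of $\varphi$ is also controlled (again via the eigenvalue equation and the Agmon estimate \eqref{eq.Agmonb}) before feeding everything into the spectral-gap / min-max argument. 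Everything else — the identification $\mathscr{N}_\hbar\varphi = \widehat{\mathscr{L}}_h\varphi + \mathscr{O}(h^\infty)$, the injectivity of the quasimode map, and the passage from approximate eigenvectors to a lower bound on the rank of a spectral projection — is by now routine in this circle of ideas.
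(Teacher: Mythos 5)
Your proposal follows the same strategy as the paper: use the Agmon localization (Proposition \ref{prop.Agmon}) together with the microlocalization (Proposition \ref{prop.agmonmicro}) to turn eigenfunctions of $\mathscr{L}_h$ into $\mathscr{O}(h^\infty)$-quasimodes for $\mathscr{N}_\hbar$, then apply the spectral theorem for \eqref{eq.distspec} and a dimension-counting / spectral-projection argument for \eqref{eq.distmult}. The one detail you should make explicit is where the rough Weyl estimate \eqref{eq.roughNLhIh} enters: the ``almost isometry'' of the quasimode map $\psi\mapsto\hat\chi_\hbar\hat\psi$ is proved on individual eigenfunctions with an $\mathscr{O}(h^\infty)$ error, and upgrading this to injectivity on the full subspace $\bigoplus_j E_j$ (or equivalently to the approximate preservation of Gram matrices you invoke) requires knowing that $\dim\bigoplus_j E_j = \mathscr{O}(h^{-2})$, i.e.\ only polynomially large, so that a sum of $\mathscr{O}(h^{-2})$ errors of size $\mathscr{O}(h^\infty)$ is still $\mathscr{O}(h^\infty)$; the paper calls attention to this point explicitly. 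A small organizational difference: to obtain the rank inequality, the paper works with a direct sum $\oplus_{j=1}^p h\mathscr{N}_\hbar$ paired with the tuple $\lambda=(\lambda_1,\dots,\lambda_p)$, whereas you fix a single reference point $E$ in $J_h$ and accept an error of size $\tfrac12|J_h|+\mathscr{O}(h^\infty)$; both routes deliver $\mathrm{d}_{\mathrm{H}}(J_h,\hat J_h)=\mathscr{O}(h^\infty)$, the paper's version simply avoids the approximate-orthogonality bookkeeping when combining quasimodes with distinct reference eigenvalues.
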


\begin{proof}
Let us start by proving \eqref{eq.distspec}. Let us consider an eigenvalue $\lambda\in I_h$ of $\mathscr{L}_h$.	We write the eigenvalue equation $\mathscr{L}_h\psi=\lambda\psi$. 

With Proposition \ref{prop.agmonmicro}, we can write \eqref{eq.cutagmon}. Then, with \eqref{eq.cutmicro}, we deduce that
\[h\mathscr{N}_\hbar\varphi=\lambda\varphi+\mathscr{O}(\hbar^{\infty})\|\psi\|\,.\]
Thus, \eqref{eq.distspec} follows from the spectral theorem.

Let us now consider \eqref{eq.distmult}, which deals with
multiplicities. Let us write
$\mathrm{sp}(\mathscr{L}_h)\cap J_h=\{\lambda_1,\ldots,\lambda_p\}$
(where the $\lambda_j$ are distinct) and underline that these
eigenvalues depend on $h$ as well as $p$. Consider the associated
eigenspaces $(E_j)_{1\leq j\leq p}$ and note that
$\dim\bigoplus_{j=1}^p E_j=\mathscr{O}(h^{-2})$ thanks to the Weyl
estimate \eqref{eq.roughNLhIh}. With the same notation as above, we
consider the spaces of quasimodes
$(\hat\chi_\hbar \hat E_j)_{1\leq j\leq p}$. Thanks to Proposition
\ref{prop.agmonmicro} (and the rough Weyl estimate),
$\dim (\hat\chi_\hbar \hat E_j)=\dim E_j$, as soon as $h$ is small
enough. Moreover, we have
\[\|(\oplus_{j=1}^ph\mathscr{N}_\hbar-\lambda)\varphi\|\leq\varepsilon_h\|\varphi\|\,,\quad \varepsilon_h=\mathscr{O}(h^\infty)\,,\]
for all $\varphi=(\varphi_1,\ldots,\varphi_p)\in \bigoplus_{j=1}^p \hat\chi_\hbar \hat E_j$ and where $\lambda=(\lambda_1,\ldots,\lambda_p)$. 

We set $J_h=[a_h,b_h]$ and $\hat J_h=[a_h-\varepsilon_h,b_h+\varepsilon_h]$. If $\mathrm{rank}\,\mathds{1}_{\hat J_h}(h\mathscr{N}_\hbar)<\mathrm{rank}\,\mathds{1}_{J_h}(\mathscr{L}_h)$, then the projection $\Pi : \bigoplus_{j=1}^p \hat\chi_\hbar \hat E_j\to \mathrm{ran}\,\mathds{1}_{\hat J_h}(h\mathscr{N}_\hbar)$ could not be injective. Considering a non-zero $\varphi$ in its kernel, the spectral theorem would give $\|(\oplus_{j=1}^p h\mathscr{N}_\hbar-\lambda)\varphi\|>\varepsilon_h\|\varphi\|$, which is a contradiction when $\varphi\neq 0$. Therefore, \eqref{eq.distmult} follows.
\end{proof}

\section{A Grushin problem}\label{sec.4}

\subsection{A pseudodifferential operator with operator-valued symbol}
Recalling Remark \ref{rem.period}, we notice that the operator $\mathscr{N}_\hbar$ can be seen as a pseudo-differential operator acting as
\begin{equation*}
\mathfrak{N}_\hbar=\hat a_\hbar(s,\tau)^{-1}	\mathcal T_{\hbar}\hat a_\hbar(s,\tau)^{-1}	\mathcal T_{\hbar}
	-\hat a_\hbar(s,\tau)^{-1}\partial_\tau\hat a_\hbar(s,\tau)\partial_\tau\,,
\end{equation*}
on functions of the form $e^{is\mathfrak{f}_0/h}L^2( \T_{2L}\times\R_+)$ and where
\begin{equation*}
	\mathcal T_{\hbar}= \Xi_0^\mathrm{W}-\tau+\hbar\frac{\kappa}{2}\hat\tau^2\,.
\end{equation*}

In fact, it will be convenient to see $\mathfrak{N}_\hbar$ as a
pseudo-differential operator with operator-valued symbol.  At a formal
level, the principal symbol of $\mathfrak{N}_\hbar$ is
$n_0(s,\sigma)=-\partial^2_\tau+(\Xi_0(\sigma)-\tau)^2$ equipped with
the domain
\[
  \mathrm{Dom}(n_0)=\{\psi\in B^2(\R_+) : \psi'(0)=c\psi(0)\}\,.
\]
The vector space $B^2(\R_+)$ is equipped with the  $(s,\sigma)$-independent norm
\[\|\psi\|^2_{B^2(\R_+)}=\|\psi''\|^2+\|\psi'\|^2+\|\langle t\rangle^2\psi\|^2\,.\]
With this convention, we may write that $n_0\in S(\R^2,\mathscr{L}(B^2(\R_+),L^2(\R_+)))$.

We say that $\Psi\in S(\R^2,\mathscr{L}(B^2(\R_+),L^2(\R_+)))$ when, for all $\alpha\in\N^2$, there exists $C_\alpha>0$ such that for all $(s,\sigma)\in\R^2$,
\[\|\partial^\alpha\Psi\|_{\mathscr{L}(B^2(\R_+),L^2(\R_+))}\leq C_\alpha\,.\]
Such symbols might also depend on $\hbar$; in this case, the constant $C_\alpha$ is uniform in $\hbar$.

\begin{lemma}\label{lem.Nhbarpseudo}
The operator $\mathfrak{N}_\hbar$ can be written as the Weyl quantization of a symbol  in $S(\R^2,\mathscr{L}(B^2(\R_+),L^2(\R_+)))$.
\end{lemma}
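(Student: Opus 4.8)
The plan is to write $\mathfrak{N}_\hbar$ explicitly as a finite sum of terms of the form $\mathrm{Op}^{\mathrm{W}}_\hbar(b)\circ\mathrm{Op}^{\mathrm{W}}_\hbar(\tau^j)$ (multiplication by powers of $\tau$ and $\partial_\tau$ being treated as fixed unbounded operators on $L^2(\R_+)$, independent of $(s,\sigma)$), and then to check that each scalar amplitude $b$, together with its $(s,\sigma)$-derivatives, lies in the right symbol class and acts boundedly $B^2(\R_+)\to L^2(\R_+)$ after composition with the relevant power of $\tau$ or with $\partial_\tau$. First I would expand the product $\hat a_\hbar^{-1}\mathcal T_\hbar\hat a_\hbar^{-1}\mathcal T_\hbar - \hat a_\hbar^{-1}\partial_\tau\hat a_\hbar\partial_\tau$, recalling that $\hat a_\hbar(s,\tau)=1-\hbar\hat\tau\kappa(s)$ with $\hat\tau=\zeta(\hbar^{2\eta}\tau)\tau$, and that $\mathcal T_\hbar=\Xi_0^{\mathrm{W}}-\tau+\hbar\frac{\kappa}{2}\hat\tau^2$. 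The crucial structural point is that the cutoff $\zeta(\hbar^{2\eta}\tau)$ confines $\hat\tau$ to a region where $|\hbar\hat\tau\kappa|\le \tfrac12$ for $\hbar$ small, so $\hat a_\hbar^{-1}=(1-\hbar\hat\tau\kappa)^{-1}$ is a smooth bounded function of $(s,\tau)$ together with all $s$-derivatives, uniformly in $\hbar$; moreover $\hat\tau^j = \zeta(\hbar^{2\eta}\tau)^j\tau^j$, and on the support of $\zeta(\hbar^{2\eta}\tau)$ one has $|\tau|\le C\hbar^{-2\eta}$, so each such factor is a bounded (though $\hbar$-dependent, with controlled growth) operator on $L^2(\R_+)$.

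Next I would organize the operator-valued symbol $\mathfrak{n}_\hbar(s,\sigma)$ so that its principal part is $n_0(s,\sigma)=-\partial_\tau^2+(\Xi_0(\sigma)-\tau)^2$, which by construction of $\Xi_0$ (bounded with all derivatives, see \eqref{equ:circle}) and of $B^2(\R_+)$ (whose norm involves $\|\langle t\rangle^2\psi\|$) maps $B^2(\R_+)\to L^2(\R_+)$ boundedly, with all $\sigma$-derivatives bounded since $\partial_\sigma^k\big((\Xi_0(\sigma)-\tau)^2\big)$ is polynomial in $(\Xi_0(\sigma)-\tau)$ of degree $\le 2$ with coefficients bounded in $\sigma$. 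All remaining terms carry at least one power of $\hbar$ (from $\hat a_\hbar-1$ or from the $\hbar\frac{\kappa}{2}\hat\tau^2$ piece in $\mathcal T_\hbar$); grouping them, the worst ones are of the type $\hbar^{1-c\eta}\,(\text{bounded in }(s,\sigma)\text{ with derivatives})\cdot\tau^m\cdot n_0$ or $\cdot\tau^m\partial_\tau$, and since $\eta\in(0,\tfrac12)$ can be chosen so that $1-c\eta>0$ for the finitely many exponents $c$ that occur, these are $o(1)$ in $\mathscr{L}(B^2(\R_+),L^2(\R_+))$ uniformly. One should be careful that the cross terms produced by moving $\hat a_\hbar^{-1}$ past $\mathcal T_\hbar$ generate commutators $[\partial_\tau,\hat a_\hbar^{-1}]$ and $[\tau^j,\partial_\tau]$; the first is again $\hbar$ times a smooth bounded function of $(s,\tau)$ on the cutoff support, the second is a constant times a lower power of $\tau$ — both harmless.

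The main obstacle, I expect, is bookkeeping rather than anything deep: one must make sure that after expansion \emph{every} term, including the ones coming from the pseudodifferential part $\Xi_0^{\mathrm{W}}$ interacting with the $\tau$-dependent factors, fits the template $\mathrm{Op}^{\mathrm{W}}_\hbar(\text{scalar symbol in }S_{\R^2}(1))$ composed with a fixed element of $\mathscr{L}(B^2(\R_+),L^2(\R_+))$, and that the $\hbar$-dependence is controlled (bounded, not merely finite) in the operator norm on $B^2(\R_+)\to L^2(\R_+)$. In particular one must verify that the quadratic term $(\Xi_0(\sigma)-\tau)^2$ is handled as a genuine operator-valued symbol (using the $\langle t\rangle^2$ weight built into the $B^2$-norm) and that no term requires more than two $\tau$-weights or two $\tau$-derivatives, so that $B^2(\R_+)$ is indeed the correct source space. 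Once every piece is matched against the definition of $S(\R^2,\mathscr{L}(B^2(\R_+),L^2(\R_+)))$ recalled just before the statement, the lemma follows.
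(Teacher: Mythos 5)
Your plan is essentially the same approach as the paper: identify the cutoff $\zeta(\hbar^{2\eta}\tau)$ as the mechanism that keeps $\hat a_\hbar^{\pm1}$ and the excess $\tau$-powers uniformly bounded, and use the $\langle t\rangle^2$-weight built into $B^2(\R_+)$ to absorb the quadratic growth of $(\Xi_0(\sigma)-\tau)^2$. The paper, however, packages the argument more cleanly by (a) observing that $\mathcal T_\hbar$ has symbol not just in $S(\R^2,\mathscr L(B^2(\R_+),L^2(\R_+)))$ but in the \emph{intermediate} class $S(\R^2,\mathscr L(B^1(\R_+),L^2(\R_+)))\cap S(\R^2,\mathscr L(B^2(\R_+),B^1(\R_+)))$, so that composing two copies of it lands in $S(\R^2,\mathscr L(B^2,L^2))$ for trivial scale reasons, and (b) citing the operator-valued composition theorem (Keraval, Theorem 2.1.12) rather than expanding by hand. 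Your sentence about \enquote{no term requires more than two $\tau$-weights or two $\tau$-derivatives} is the right idea, but identifying the $B^1$-step explicitly is what makes the composition argument mechanical.

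The one place your sketch glosses over something that genuinely needs the composition theorem: when you expand $\hat a_\hbar^{-1}\mathcal T_\hbar\hat a_\hbar^{-1}\mathcal T_\hbar$, you get products such as $\hat a_\hbar^{-1}(s,\tau)\,\Xi_0^{\mathrm W}\,\hat a_\hbar^{-1}(s,\tau)\,\Xi_0^{\mathrm W}$, in which a multiplication operator depending on $s$ is composed with an $\hbar$-pseudo-differential operator depending on $\sigma$. This composition is \emph{not} pointwise multiplication of symbols; it is given by the Moyal product, and establishing that the result is again $\mathrm{Op}^{\mathrm W}_\hbar$ of a symbol in the right class, uniformly in $\hbar$, is exactly the content of the operator-valued composition theorem. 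So your phrase \enquote{bookkeeping rather than anything deep} slightly undersells the tool you are implicitly relying on — but once you make that reliance explicit the proof closes. A last minor remark: the $o(1)$ smallness you argue for the $\hbar$-dependent remainders is not needed for this lemma, which only asks for \emph{bounded} symbol estimates uniformly in $\hbar$; that smallness is what feeds Proposition~\ref{prop.descriptionNhbar} in the next step.
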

\begin{proof}
We can write
\[\mathcal{T}_\hbar=\mathrm{Op}^{\mathrm{W}}_\hbar\left(\Xi_0(\sigma)-\tau+\hbar\frac{\kappa}{2}\hat\tau^2\right)\,,\]
the symbol ($2L$-periodic with respect to $s$) belonging to the class  $S(\R^2,\mathscr{L}(B^1(\R_+),L^2(\R_+)))\cap S(\R^2,\mathscr{L}(B^2(\R_+),B^1(\R_+)))$. The functions $a_\hbar(s,\tau)$ and $a_\hbar(s,\tau)^{-1}$ are bounded uniformly with respect to $\hbar$ (and so are all their derivatives). Then, the conclusion follows from the composition theorem for pseudo-differential operators, see \cite[Theorem 2.1.12]{Keraval}.
\end{proof}

In the following, we let $\mu=\hbar^{2\eta}$ and
$\zeta_\mu(\tau)=\zeta(\mu\tau)$. This is convenient when expanding
the operator in powers of $\hbar$ ($\mu$ will be considered a
parameter). This expansion allows to describe rather accurately the
symbol of $\mathfrak{N}_\hbar$ by expanding it in powers of
$\hbar$. An analogous description for a very similar operator can be
found in great detail in \cite[Section 4.2]{FHKR22}.

\begin{proposition}\label{prop.descriptionNhbar}
The operator $\mathfrak{N}_\hbar$ can be written as follows:
\begin{equation}\label{eq:ex-Nhc}
	\mathfrak{N}_{\hbar}=\mathfrak n_0+\hbar\mathfrak n_1+\mathcal \hbar^2\mathscr R_\hbar^{(2)}+\hbar w_{\hbar}\partial_\tau\,,
\end{equation}
where, for some $N\in\mathbb {N}$, $C, \hbar_0>0$, we have, for all $\hbar\in(0,\hbar_0)$,

\begin{enumerate}[\rm (i)]
	\item\label{eq.i} $w_{\hbar}$ is a smooth function supported in $\{(s,\tau) : C^{-1}\hbar^{-2\eta}\leq \langle \tau\rangle\leq C\hbar^{-2\eta}\}$ and such that $w_{\hbar}=\mathscr{O}(\langle \tau\rangle)$ ,
	\item \label{eq.ii}$\mathscr R_\hbar^{(2)}$  is a pseudodifferential operator whose symbol belongs to a bounded set in the space of symbols $S(\R^2,\mathscr{L}(B^2(\R_+),L^2(\R_+,\langle \tau\rangle^{-N}\mathrm{d}\tau)))$.
\end{enumerate}
Moreover, the $\mathfrak{n}_j$ are given by $\mathfrak{n}_j=\mathrm{Op}^{\mathrm{W}}_\hbar n_j$ with
\begin{equation}
	\begin{split}
		n_0&=-\partial^2_\tau+(\Xi_0(\sigma)-\tau)^2\,,\\
		n_1&=\kappa(s)\left[(\Xi_0(\sigma)-\tau) \zeta_\mu^2\tau^2+\zeta_\mu\partial_\tau+2\zeta_{\mu}\tau(\Xi_0(\sigma)-\tau)^2\right]\,.
	\end{split}
\end{equation}

In particular, we can write $\mathfrak{N}_\hbar=\mathrm{Op}^{\mathrm{W}}_\hbar(n_\hbar)$ with a symbol $n_\hbar$ satisfying
\[n_\hbar=n_0+\hbar n_1+\hbar^2 r_\hbar^{(2)}+\hbar w_{\hbar}\partial_\tau\,,\]
where  $r_\hbar^{(2)}$ belongs to the class of operator symbols $S(\R^2,\mathscr{L}(B^2(\R_+),L^2(\R_+,\langle \tau\rangle^{-N}\mathrm{d}\tau)))$ uniformly in $\hbar$.
\end{proposition}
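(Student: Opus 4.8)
The plan is to obtain the expansion \eqref{eq:ex-Nhc} by a Taylor expansion in $\hbar$ of the explicit expression defining $\mathscr{N}_\hbar$, keeping careful track of the cutoff $\hat\tau=\zeta_\mu\tau$ and of which function spaces (weighted by powers of $\langle\tau\rangle$) the various terms live in. First I would recall from Lemma~\ref{lem.Nhbarpseudo} that $\mathfrak{N}_\hbar$ is already known to be a Weyl quantization of a symbol in $S(\R^2,\mathscr{L}(B^2(\R_+),L^2(\R_+)))$; the point here is to extract the first two terms of its expansion. The key observation is that $\hat a_\hbar=1-\hbar\hat\tau\kappa$, so $\hat a_\hbar^{-1}=1+\hbar\hat\tau\kappa+\hbar^2\hat\tau^2\kappa^2\hat a_\hbar^{-1}$; substituting this into \eqref{eq.Nhbar} and also expanding $p^{\Xi_0}_{s,\hbar}=\mathcal T_\hbar=\Xi_0^{\mathrm{W}}-\tau+\hbar\frac{\kappa}{2}\hat\tau^2$, one collects the $\hbar^0$, $\hbar^1$, and $\hbar^{\geq 2}$ contributions.

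Concretely, the $\hbar^0$ term comes from $\mathcal T_\hbar^2-\partial_\tau^2$ with $\hat a_\hbar$ replaced by $1$: its symbol is $(\Xi_0(\sigma)-\tau)^2-\partial_\tau^2=n_0$. The $\hbar^1$ term receives three contributions: (a) two copies of $\hbar\hat\tau\kappa$ inserted between the two factors of $\mathcal T_\hbar\approx(\Xi_0(\sigma)-\tau)$ and in front, giving $\kappa\,\zeta_\mu\tau(\Xi_0(\sigma)-\tau)^2$-type terms --- here I would use that $\hat a_\hbar^{-1}\mathcal T_\hbar\hat a_\hbar^{-1}\mathcal T_\hbar$ expands, at order $\hbar$, as $(\Xi_0-\tau)^2+\hbar\kappa\big[\hat\tau(\Xi_0-\tau)^2+(\Xi_0-\tau)\hat\tau(\Xi_0-\tau)\big]+\cdots$, symmetrized this gives the $2\zeta_\mu\tau(\Xi_0(\sigma)-\tau)^2$ piece plus a commutator contribution, (b) the $\hbar\frac{\kappa}{2}\hat\tau^2$ term inside $\mathcal T_\hbar$ acting once against $(\Xi_0(\sigma)-\tau)$, contributing $\kappa(\Xi_0(\sigma)-\tau)\zeta_\mu^2\tau^2$ (symmetrized), and (c) the first-order term from $-\hat a_\hbar^{-1}\partial_\tau\hat a_\hbar\partial_\tau=-\partial_\tau^2-\hbar\kappa\zeta_\mu\partial_\tau+\cdots$ (using $\partial_\tau\hat a_\hbar=-\hbar\kappa\partial_\tau(\zeta_\mu\tau)$ and $\partial_\tau(\zeta_\mu\tau)=\zeta_\mu+O(\mu)$), which yields the $\zeta_\mu\partial_\tau$ term. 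Summing these reproduces exactly $n_1=\kappa(s)\big[(\Xi_0(\sigma)-\tau)\zeta_\mu^2\tau^2+\zeta_\mu\partial_\tau+2\zeta_\mu\tau(\Xi_0(\sigma)-\tau)^2\big]$. The leftover term $\hbar w_\hbar\partial_\tau$ absorbs the genuinely first-order first-derivative contribution coming from $\partial_\tau\hat a_\hbar$ where $\partial_\tau$ falls on $\zeta_\mu=\zeta(\mu\tau)$, producing $\mu\zeta'(\mu\tau)$ times $\tau$-factors; since $\zeta'$ is supported where $\mu\tau\asymp 1$, i.e.\ $\langle\tau\rangle\asymp\hbar^{-2\eta}$, this gives property~(i), with the $\mathscr{O}(\langle\tau\rangle)$ bound from the explicit $\hat\tau^2$-type prefactors evaluated on that support.

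The remainder $\hbar^2\mathscr{R}_\hbar^{(2)}$ collects everything of order $\hbar^2$ or higher: the $\hbar^2\hat\tau^2\kappa^2\hat a_\hbar^{-1}$ terms from expanding $\hat a_\hbar^{-1}$, the cross terms $\hbar\cdot\hbar$ between $\hat\tau\kappa$ and $\frac{\kappa}{2}\hat\tau^2$, the $\hbar^2$ term $\frac{\kappa^2}{4}\hat\tau^4$ from squaring, and so on. Each such term carries extra powers of $\hat\tau=\zeta_\mu\tau$, hence of $\langle\tau\rangle$, but these are harmless after the weight: because $\zeta_\mu$ is supported on $\langle\tau\rangle\lesssim\hbar^{-2\eta}=\mu^{-1}$, one has $\hat\tau^k=\mathscr{O}(\langle\tau\rangle^k)$ and, crucially, the operator $\mathscr{R}_\hbar^{(2)}$ maps $B^2(\R_+)$ into $L^2(\R_+,\langle\tau\rangle^{-N}\mathrm{d}\tau)$ boundedly and uniformly in $\hbar$ for $N$ large enough (chosen to dominate the maximal power of $\langle\tau\rangle$ appearing). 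One then invokes the composition theorem for operator-valued pseudodifferential symbols (\cite[Theorem 2.1.12]{Keraval}) to turn the formal symbol identities into honest operator identities, exactly as in \cite[Section 4.2]{FHKR22}, and reads off $n_\hbar=n_0+\hbar n_1+\hbar^2 r_\hbar^{(2)}+\hbar w_\hbar\partial_\tau$ with $r_\hbar^{(2)}$ in the claimed symbol class. The main obstacle is bookkeeping rather than anything deep: one must be scrupulous about (1) Weyl symmetrization when turning products like $(\Xi_0(\sigma)-\tau)\,\hat\tau\,(\Xi_0(\sigma)-\tau)$ into symbols (the ordering-induced commutators contribute only at order $\hbar^2$, hence land in $\mathscr{R}_\hbar^{(2)}$), and (2) the precise support and growth properties of the $\zeta_\mu$-cutoff, so that the $\mathscr{R}_\hbar^{(2)}$ and $w_\hbar\partial_\tau$ remainders genuinely satisfy (i)--(ii); the $\eta<\tfrac12$ constraint is what keeps $\hbar^2\langle\tau\rangle^{k}$-type terms from blowing up after the weighted estimate.
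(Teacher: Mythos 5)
Your proposal is correct and follows exactly the approach the paper intends: the paper itself gives no detailed proof of Proposition~\ref{prop.descriptionNhbar}, instead referring to the analogous computation in \cite[Section 4.2]{FHKR22}, and what you describe — Taylor-expanding $\hat a_\hbar^{-1}=1+\hbar\hat\tau\kappa+\hbar^2\hat\tau^2\kappa^2\hat a_\hbar^{-1}$ and $p^{\Xi_0}_{s,\hbar}$, collecting the $\hbar^0$ and $\hbar^1$ coefficients to recover $n_0$ and $n_1$, isolating the $\hbar w_\hbar\partial_\tau$ term coming from $\partial_\tau\zeta_\mu$, and pushing commutators and higher powers of $\hat\tau$ into the weighted remainder class — is precisely that bookkeeping. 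The only remark worth adding is that the ordering corrections you mention are even milder than stated: since $\Xi_0^{\mathrm{W}}$ commutes with multiplication by $\tau$, and the Weyl--Moyal product of two $s$--$\sigma$ symbols has vanishing first-order term, the commutator contributions are automatically $\mathscr{O}(\hbar^2)$ without any further argument, and the constraint $\eta<\tfrac12$ does indeed control the $\hbar^{2-8\eta}$-type remainders (cf.~\eqref{eq.Rhcut}).
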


\subsection{Dimensional reduction}
The aim of this section is to analyse the spectrum of ${\mathscr{N}}_\hbar$. This can be done thanks to a Grushin reduction. The principal symbol of ${\mathfrak{N}}_\hbar$ is the "de Gennes operator" with Robin  boundary conditions. Explicitly, we have
\[n_0(s,\sigma)=-\partial^2_\tau+(\Xi_0(\sigma)-\tau)^2\,.\] The
increasing sequence of its (simple) eigenvalues is
$(\mu_k(\Xi_0(\sigma)))_{k\geq 1}$. We recall that the functions
$\mu_k$ are described in Proposition \ref{prop.dispersion}.

Now, consider the window $[a,b]\subset (2n-3,2n-1)$. For simplicity,
let us denote
$\overset{\circ}{u}_k := \overset{\circ}{u}_k^{[\gamma,\sigma]}$, see
Section~\ref{sec:de-gennes-operator} and \eqref{equ:circle}. Let $N$
be defined as in~\eqref{equ:N}.

\begin{lemma}\label{lem.inv0}
For all $z\in[a,b]$, let us consider the matrix operator
\[\mathscr{P}_0(z)=\begin{pmatrix}
n_0(s,\sigma)-z&\Pi^*\\
\Pi&0
\end{pmatrix} : B^2(\R_+)\times \C^N\longrightarrow L^2(\R_+)\times\C^N\,,\]
where $\Pi^*(\alpha)=\sum_{j=1}^N\alpha_j \overset{\circ}{u}_j$ and $\Pi\psi=(\langle\psi,\overset{\circ}{u}_j\rangle)_{1\leq k\leq N}$.

Then, $\mathscr{P}_0(z)$ is bijective with inverse 
	\[\mathscr{Q}_0(z)=\begin{pmatrix}
	q_0&\Pi^*\\
	\Pi&z-M_0(\sigma)
	\end{pmatrix}\,,\qquad q_0=(n_0(s,\sigma)-z)^{-1}(\Pi^*\Pi)^{\perp}\,,\]
	where $M_0(\sigma)$ is the diagonal $N\times N$ matrix whose diagonal is $(\overset{\circ}{\mu}_1,\ldots,\overset{\circ}{\mu}_N)$.
\end{lemma}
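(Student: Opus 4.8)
The plan is to prove bijectivity by exhibiting the explicit two‑sided inverse $\mathscr{Q}_0(z)$ and checking it by the standard Grushin/Schur‑complement computation, once it has been verified that the entry $q_0$ really is a well‑defined bounded operator. Recall that $n_0(s,\sigma)$ in fact depends only on $\sigma$ (through $\Xi_0(\sigma)$), that its increasing sequence of simple eigenvalues is $\bigl(\overset{\circ}{\mu}_k(\sigma)\bigr)_{k\geq 1}$ with orthonormal eigenfunctions $\overset{\circ}{u}_k=u_k^{[\gamma,\Xi_0(\sigma)]}$, and set $P:=\Pi^*\Pi$, which is the orthogonal projector onto $V(\sigma):=\mathrm{span}(\overset{\circ}{u}_1,\dots,\overset{\circ}{u}_N)$; thus $\Pi\Pi^*=\mathrm{Id}_{\C^N}$, $(\Pi^*\Pi)^\perp=\mathrm{Id}-P$, the operator $n_0(s,\sigma)$ preserves $V(\sigma)$ and $V(\sigma)^\perp$, and on $\mathrm{Dom}(n_0)$ one has $n_0(s,\sigma)\Pi^*=\Pi^*M_0(\sigma)$ and $\Pi\,n_0(s,\sigma)=M_0(\sigma)\Pi$.

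First I would establish a uniform spectral gap. By Proposition~\ref{prop.dispersion} and Corollary~\ref{coro:dispersion} (together with the fact that $\Xi_0$ has bounded range, so that the finitely many $\overset{\circ}{\mu}_1,\dots,\overset{\circ}{\mu}_N$ stay in a fixed bounded set), there is $\delta_0>0$ with $\overset{\circ}{\mu}_k(\sigma)\geq b+\delta_0$ for all $k\geq N+1$ and all $\sigma\in\R$: it suffices to treat $k=N+1$, for which $\overset{\circ}{\mu}_{N+1}(\sigma)\geq \inf_\sigma\mu_{N+1}(\gamma,\sigma)=\Theta^{[N]}(\gamma)>b$ by~\eqref{eq.lubtheta} and the definition~\eqref{equ:N} of $N$, and then to use monotonicity in $k$. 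Hence, for every $z\in[a,b]$, the restriction of $n_0(s,\sigma)-z$ to $V(\sigma)^\perp$ is invertible with inverse bounded by $\delta_0^{-1}$, uniformly in $(s,\sigma)$ and $z$, and by elliptic regularity for the de Gennes operator it maps $L^2(\R_+)$ into $\mathrm{Dom}(n_0)\subset B^2(\R_+)$ with uniform bounds. Composing with the orthogonal projection $(\Pi^*\Pi)^\perp$ onto $V(\sigma)^\perp$ defines $q_0$ as a bounded operator $L^2(\R_+)\to B^2(\R_+)$, uniformly in $(s,\sigma)\in\R^2$, and by construction
\[
q_0\Pi^*=0,\qquad \Pi q_0=0,\qquad q_0(n_0(s,\sigma)-z)=(n_0(s,\sigma)-z)q_0=(\Pi^*\Pi)^\perp,
\]
the last identity holding on $\mathrm{Dom}(n_0)$. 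Since the eigenfunctions $\overset{\circ}{u}_j$ belong to $B^2(\R_+)$ and satisfy the (linear, homogeneous) Robin condition — the Dirichlet condition $u(0)=0$ when $\gamma=+\infty$ — the operator $\Pi^*$ maps $\C^N$ boundedly into $\mathrm{Dom}(n_0)$ and $\Pi$ is bounded $L^2(\R_+)\to\C^N$; together with the smooth, uniformly bounded dependence of $\overset{\circ}{u}_k(\sigma),\overset{\circ}{\mu}_k(\sigma)$ on $\sigma$ over the bounded range of $\Xi_0$, this shows $\mathscr{P}_0(z)$ and $\mathscr{Q}_0(z)$ are bounded in the claimed directions between $B^2(\R_+)\times\C^N$ and $L^2(\R_+)\times\C^N$.

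It then remains to multiply out the block matrices. Using $P+(\Pi^*\Pi)^\perp=\mathrm{Id}$, $\Pi\Pi^*=\mathrm{Id}_{\C^N}$, $(n_0-z)\Pi^*=\Pi^*(M_0-z)$, $\Pi(n_0-z)=(M_0-z)\Pi$ and the identities for $q_0$ above, one computes
\[
\mathscr{Q}_0(z)\,\mathscr{P}_0(z)=\begin{pmatrix}q_0(n_0-z)+\Pi^*\Pi & q_0\Pi^*\\ \Pi(n_0-z)+(z-M_0)\Pi & \Pi\Pi^*\end{pmatrix}=\begin{pmatrix}\mathrm{Id}&0\\0&\mathrm{Id}_{\C^N}\end{pmatrix},
\]
and symmetrically $\mathscr{P}_0(z)\,\mathscr{Q}_0(z)=\mathrm{Id}$, the surviving off‑diagonal term $(n_0-z)\Pi^*+\Pi^*(z-M_0)$ vanishing for the same reason as the bottom‑left one. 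Hence $\mathscr{Q}_0(z)=\mathscr{P}_0(z)^{-1}$.

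The only point requiring genuine analysis, and the one I would treat most carefully, is the uniform spectral gap $\delta_0>0$ and the resulting uniform control of $q_0$ — and, more precisely, of $q_0$, $M_0$, $\Pi$, $\Pi^*$ as operator‑valued symbols, i.e.\ with all $(s,\sigma)$‑derivatives bounded — since this is exactly what later allows $\mathscr{Q}_0(z)$ to be used as a parametrix in the Grushin reduction. Once the gap is secured, these symbol estimates are routine: they follow by differentiating the resolvent of $n_0$, which depends smoothly and boundedly on the single parameter $\Xi_0(\sigma)$, and from the analytic (type (B)) dependence of the de Gennes eigenprojections recalled in Section~\ref{sec:de-gennes-operator}. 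Everything else in the statement is the algebra displayed above.
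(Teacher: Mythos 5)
Your proof is correct and uses essentially the same ingredients as the paper's: the uniform spectral gap $\overset{\circ}{\mu}_{N+1}>b$ to invert $n_0-z$ on $\mathrm{span}(\overset{\circ}{u}_1,\dots,\overset{\circ}{u}_N)^{\perp}$, together with the intertwining relations $n_0\Pi^*=\Pi^*M_0$ and $\Pi n_0=M_0\Pi$. The paper organizes the algebra by solving $\mathscr{P}_0(f\oplus\alpha)=g\oplus\beta$ directly for $(f,\alpha)$, whereas you verify the two-sided inverse by block multiplication; these are equivalent presentations of the same Schur-complement computation.
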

\begin{proof}
Let $g \in  L^{2}(\R^+)$ and $\beta \in \C^N\,.$ Let us look for $f\in \mathrm{Dom}(n_{0})$ and $\alpha\in \C^N $ such that $$\mathscr{P}_0(z)( f \oplus \alpha)=g \oplus \beta\,. $$
 In other words, 
 \[ (n_0(s,\sigma)-z)f + \Pi^* \alpha=g ,\quad  \Pi f=\beta\,. \]
 Let $E= \mathrm{span}(\overset{\circ}{u}_1, \ldots, \overset{\circ}{u}_N),  $ and $F=E^\perp\,.$
 We can write $f=f_E+f_F$
 where 
 \[f_E= \sum _{j=1}^{N} \langle f, \overset{\circ}{u}_j \rangle \overset{\circ}{u}_j=\Pi^*\Pi f\,,\quad  f_F=   (\Pi ^* \Pi )^{\perp}f  \,.\]
We have 
 \begin{align*}
      (n_0(s,\sigma)-z)f_F&= - (n_0(s,\sigma)-z)f_E   - \Pi^* \alpha+g \\&= - (n_0(s,\sigma)-z)\sum _{j=1}^{N} \beta_j \overset{\circ}{u}_j   - \Pi^* \alpha+g \,,
 \end{align*} 
so that
 \begin{equation}
 \label{500}
    (n_0(s,\sigma)-z)f_F =  - \sum _{j=1}^{N} \beta_j (\overset{\circ}{\mu}_j-z)  \overset{\circ}{u}_j   - \Pi ^*\alpha+g  \,. 
\end{equation}
 The space $F$  is stable by $ n_0(s,\sigma)-z$.
 
Moreover, thanks to the self-adjointness of $n_0$, the min-max principle and the fact that $\min \overset{\circ}{\mu}_{N+1} \geq 2N+1 > z$,  there exists $c>0$ such that, for all $u\in \mathrm{Dom}(n_0)\cap F$,
\[\langle (n_0-z) u, u\rangle=  \langle n_0 u, u \rangle -z\|u\|^2 \geq  (\overset{\circ}{\mu}_{N+1}-z)\|u\|^2 \geq  c \|u\|^2\,.\]
Thus, the operator $(n_0-z)_{|F} $ is injective with closed range and,
by considering the adjoint, we deduce that it is bijective. We also
notice that
 \[\| (n_0-z)^{-1}_{|F}\|  \leq ( \overset{\circ}{\mu}_{N+1}-z)^{-1}\leq c^{-1}\,.\]
 Then \eqref{500} has a solution if and only if the right-hand-side belongs to $F$, that is
 \begin{equation*}
  - \sum _{j=1}^{N} \beta_j (\overset{\circ}{\mu}_j-z)  \overset{\circ}{u}_j   - \Pi^* \alpha+g \in F 
 \end{equation*}
 which means that, for all $k\in\{1,\ldots,N\}$,
 \[-  \beta_k (\overset{\circ}{\mu}_k-z)    -  \alpha_{k} + \langle g  , \overset{\circ}{u}_k \rangle =0\,.\]
 We deduce that
 \[ \alpha= \Pi g+( z-M_0(\sigma) )\beta.\, 
 \]
 This unique solution is given by 
 \begin{equation*}
  f_F =(\Pi^*  \Pi )^{\perp} (n_0(s,\sigma)-z)^{-1} g    \,.     
 \end{equation*}
 Therefore, 
 \[f=   f_E+ f_F=\Pi \beta + (\Pi^*  \Pi )^{\perp} (n_0(s,\sigma)-z)^{-1} g \,. \]
\end{proof}

Let us now consider the full symbol
\[\mathscr{P}_\hbar(z):=\begin{pmatrix}
{n}_\hbar-z&\Pi^*\\
\Pi&0
\end{pmatrix}\,,\]
which may be expanded in powers of $\hbar$ as
\[\mathscr{P}_\hbar(z)=\mathscr{P}_0(z)+\hbar\mathscr{P}_1(z)+ \mathscr{R}_\hbar\,,\]
with 
\[\mathscr{P}_0(z)=\begin{pmatrix}
n_0(s,\sigma)-z&\Pi^*\\
\Pi&0
\end{pmatrix}  \,,\quad \mathscr{P}_1(z)=\begin{pmatrix}
n_{1}&0\\
0&0
\end{pmatrix} \,, \quad \mathscr{R}_\hbar(z)=\begin{pmatrix}
r_{\hbar}&0\\
0&0
\end{pmatrix}\,,
\]
where $r_\hbar=\hbar^2 r_\hbar^{(2)}+\hbar \tilde w_{\hbar,1}+\hbar w_{\hbar,2}\partial_\tau$, see Proposition \ref{prop.descriptionNhbar}.

We notice that
\begin{equation}\label{eq.PWhbar}
\mathscr{P}^{\mathrm{W}}_\hbar=\begin{pmatrix}
\mathfrak{N}_\hbar-z&\mathfrak{P}^*\\
\mathfrak{P}&0
\end{pmatrix}\,,\qquad \mathfrak{P}=\Pi^{\mathrm{W}}\,.
\end {equation}
Since the principal symbol of $\mathscr{P}^{\mathrm{W}}_\hbar$ is
bijective, it is natural to try to construct an approximate inverse in
the semiclassical limit. Let us look for an approximate inverse whose
symbol is in the form
\begin{equation}\label{eq.Qhbar,1}
\mathscr{Q}_{\hbar,1}=\mathscr{Q}_0(z)+\hbar \mathscr{Q}_1(z)\,.
\end{equation}
As in \cite{Keraval}, we are led to choose
\[\mathscr{Q}_1=-\mathscr{Q}_0\mathscr{P}_1\mathscr{Q}_0=-\begin{pmatrix}
	q_0n_1 q_0&q_0 n_1 \Pi^*\\
	\Pi n_1 q_0&\Pi n_1 \Pi^*
\end{pmatrix}\,.\]
This choice is convenient since the composition theorem for pseudo-differential operators (see \cite{Keraval}) implies that
\begin{multline*}
\mathscr{Q}^{\mathrm{W}}_{\hbar,1}\left(\mathscr{P}_0(z)+\hbar\mathscr{P}_1\right)^{\mathrm{W}}=\mathrm{Id}+\hbar\left(\frac{1}{i}\{\mathscr{Q}_0,\mathscr{P}_0\}+\mathscr{Q}_0\mathscr{P}_1+\mathscr{Q}_1\mathscr{P}_0\right)^{\mathrm{W}}\\
+\mathscr{O}_{L^2( \T_{2L}\times\R_+,\langle\tau\rangle^N\mathrm{d}s\mathrm{d}\tau)\times L^2( \T_{2L})\to L^2( \T_{2L}\times\R_+)\times L^2( \T_{2L})}(\hbar^2)\,,
\end{multline*}
where the remainder is estimated thanks to the Calder\'on-Vaillancourt
theorem (see \cite[Theorem 2.1.16]{Keraval}) and the resolvent
estimate in Lemma \ref{lem.sandwich} (applied with an appropriate
$\alpha>0$). The $\hbar$-term vanishes due the choice of
$\mathscr{Q}_1$ and that fact that the Poisson bracket is actually $0$
since the principal symbol does not depend on $s$.  With this choice,
the bottom right coefficient, denoted by $\mathscr{Q}_{\hbar,1}^\pm$,
of the matrix $\mathscr{Q}_{\hbar,1}$ is
\[\mathscr{Q}_{\hbar,1}^\pm=z-M_0(\sigma)-\hbar\Pi n_1 \Pi^*\,.\]
This invites to consider the effective matrix pseudo-differential operator whose symbol is
\[M_\hbar=M_0(\sigma)+\hbar M_1(s,\sigma)\,,\]
with 
\[M_1(s,\sigma)=\kappa(s)\Pi\mathscr{C}(\tau,\Xi_0(\sigma))\Pi^*\,,\quad \mathscr{C}(\tau,\xi)=(\xi-\tau) \zeta_\mu^2\tau^2+\zeta_\mu\partial_\tau+2\zeta_\mu\tau(\xi-\tau)^2\,.\]
Using again the composition theorem  to deal with the remainder $\mathscr{R}_\hbar$, we get
\begin{equation}\label{eq.invgauche}
\mathscr{Q}^{\mathrm{W}}_{\hbar,1}\mathscr{P}_\hbar^{\mathrm{W}}=\mathrm{Id}+\mathscr{O}_{L^2( \T_{2L}\times\R_+,\langle\tau\rangle^N\mathrm{d}s\mathrm{d}\tau)\times L^2( \T_{2L})\to L^2( \T_{2L}\times\R_+)\times L^2( \T_{2L})}(\hbar^2)+\mathscr{Q}^{\mathrm{W}}_{\hbar,1}\begin{pmatrix}
	\hbar w_{\hbar}\partial_\tau&0\\
	0&0
\end{pmatrix}^\mathrm{W}\,.
\end{equation}
Moreover, similar arguments show that $\mathscr{Q}^{\mathrm{W}}_{\hbar,1}$ is also an approximate right inverse of $\mathscr{P}_\hbar^{\mathrm{W}}$ in the sense that
\begin{equation}\label{eq.invdroite}
\mathscr{P}_\hbar^{\mathrm{W}}	\mathscr{Q}^{\mathrm{W}}_{\hbar,1}=\mathrm{Id}+\mathscr{O}_{L^2( \T_{2L}\times\R_+,\langle\tau\rangle^N\mathrm{d}s\mathrm{d}\tau)\times L^2( \T_{2L})\to L^2( \T_{2L}\times\R_+)\times L^2( \T_{2L})}(\hbar^2)+\begin{pmatrix}
		\hbar w_{\hbar}\partial_\tau&0\\
		0&0
	\end{pmatrix}^\mathrm{W}	\mathscr{Q}^{\mathrm{W}}_{\hbar,1}\,.
\end{equation}

\begin{proposition}\label{prop.LhtoMh}
The spectrum of ${\mathscr{L}}_h$ in $[h a,h b]$ coincides (with multiplicity) with that of  $h\mathrm{Op}^{\mathrm{W}} _\hbar M_\hbar$ modulo $\mathscr{O}(h^2)$. 
\end{proposition}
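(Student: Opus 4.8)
The plan is to establish the two rank inequalities of Definition~\ref{defi:coincide} (with $\alpha=2$) by composing two reductions. First, the localization estimates of Sections~\ref{sec.2}--\ref{sec.3} identify the spectral problem for $\mathscr{L}_h$ in $[ha,hb]$ with that of $h\mathfrak{N}_\hbar$ (equivalently $h\mathscr{N}_\hbar$, by Remark~\ref{rem.period}) \emph{modulo $\mathscr{O}(h^\infty)$}. Second, the Grushin problem of this section identifies the spectrum of $\mathfrak{N}_\hbar$ in $[a,b]$ with that of $\mathrm{Op}^{\mathrm{W}}_\hbar M_\hbar$ \emph{modulo $\mathscr{O}(\hbar^2)=\mathscr{O}(h)$}; since $\mathscr{L}_h\leftrightarrow h\mathscr{N}_\hbar$, rescaling turns this into $\mathscr{O}(h^2)$, and transitivity of the equivalence relation of Definition~\ref{defi:coincide} (together with the fact that coincidence modulo $\mathscr{O}(h^\infty)$ implies coincidence modulo $\mathscr{O}(h^2)$) then yields the statement. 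Let me stress from the start that naively using eigenfunctions of $\mathscr{L}_h$ as quasimodes for $\mathrm{Op}^{\mathrm{W}}_\hbar M_\hbar$ only produces coincidence modulo $\mathscr{O}(h)$: one loses a factor $\sqrt{\mathrm{rank}}=\mathscr{O}(h^{-1})$ coming from the crude Weyl bound~\eqref{eq.roughNLhIh}. The sharp power $h^2$ will instead be recovered through the Grushin/Schur-complement structure, in which one compares two $z$-parametrized self-adjoint families rather than isolated quasimodes.

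\textbf{From $\mathscr{L}_h$ to $h\mathfrak{N}_\hbar$, modulo $\mathscr{O}(h^\infty)$.} In one direction this is Proposition~\ref{prop.firstspectralred}: from a spectral subspace of $\mathscr{L}_h$ in $I_h$, Propositions~\ref{prop.Agmon} and~\ref{prop.agmonmicro} produce, after tubular coordinates, the rescaling $t=\hbar\tau$ and a cut-off, functions $\varphi$ in $e^{is\mathfrak{f}_0/h}L^2(\T_{2L}\times\R_+)$ that are exponentially localized in $\tau$ at a fixed rate (the rescaled~\eqref{eq.Agmona}), microlocalized in $\tilde K$ (so the $\Xi_0$-modification is invisible), and satisfy $(\mathfrak{N}_\hbar-\hbar^{-2}\lambda)\varphi=\mathscr{O}(\hbar^\infty)\|\psi\|$ with $\|\varphi\|=(1+\mathscr{O}(\hbar^\infty))\|\psi\|$; this gives $\mathrm{rank}\,\mathds{1}_{J_h}(\mathscr{L}_h)\leq\mathrm{rank}\,\mathds{1}_{\hat J_h}(h\mathfrak{N}_\hbar)$ with $\mathrm{d}_{\mathrm H}(\hat J_h,J_h)=\mathscr{O}(h^\infty)$. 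In the reverse direction one turns a spectral subspace of $\mathfrak{N}_\hbar$ in $[a,b]$ into one of $\mathscr{L}_h$: since the principal symbol $n_0(s,\sigma)$ is the de Gennes operator, whose eigenvalues meet $[a,b]$ only for $\sigma\in\tilde K$ and whose eigenfunctions $\overset{\circ}{u}_k$ decay exponentially in $\tau$, such quasimodes are automatically microlocalized in $\tilde K$ and localized near $\tau=0$; transplanting them back through $\Gamma$, undoing the rescaling and inserting a cut-off at distance $\lesssim h^{1/2}$ from $\partial\Omega$ costs only $\mathscr{O}(h^\infty)$, which yields the opposite inequality. As the errors here are $\mathscr{O}(h^\infty)$, the $\sqrt{\mathrm{rank}}$ loss in passing from single quasimodes to subspaces is harmless.

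\textbf{From $\mathfrak{N}_\hbar$ to $\mathrm{Op}^{\mathrm{W}}_\hbar M_\hbar$, modulo $\mathscr{O}(\hbar^2)$.} Here I would first upgrade the quasi-inverse $\mathscr{Q}^{\mathrm{W}}_{\hbar,1}$ of~\eqref{eq.Qhbar,1} to a genuine inverse of $\mathscr{P}^{\mathrm{W}}_\hbar(z)$ for $z\in[a,b]$, $\hbar$ small: the $\mathscr{O}(\hbar^2)$ term in~\eqref{eq.invgauche}--\eqref{eq.invdroite} is removed by a Neumann series, while the remainder $\hbar w_\hbar\partial_\tau$, which is \emph{not} small in operator norm, is handled using that $w_\hbar$ is supported in $\{\langle\tau\rangle\sim\hbar^{-2\eta}\}$ whereas $q_0=(n_0-z)^{-1}(\Pi^*\Pi)^\perp$ maps into functions decaying exponentially in $\tau$ (Combes--Thomas/de Gennes decay), so that the corresponding composed operator is $\mathscr{O}(\hbar^\infty)$ on the weighted spaces of~\eqref{eq.invgauche}. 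The resulting exact inverse $\mathscr{E}_\hbar(z)=\begin{pmatrix}E&E_+\\E_-&E_{-+}\end{pmatrix}(z)$ of the Grushin matrix $\mathscr{P}^{\mathrm{W}}_\hbar(z)=\begin{pmatrix}\mathfrak{N}_\hbar-z&\mathfrak{P}^*\\\mathfrak{P}&0\end{pmatrix}$ is self-adjoint for real $z$, and the Schur-complement identity gives that $\mathfrak{N}_\hbar-z$ is invertible iff $E_{-+}(z)$ is, with $\dim\ker(\mathfrak{N}_\hbar-z)=\dim\ker E_{-+}(z)$; hence $\mathrm{sp}(\mathfrak{N}_\hbar)\cap[a,b]$, with multiplicities, equals \emph{exactly} the zero-locus of $E_{-+}$. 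Using $\Pi n_1\Pi^*=M_1$, the formula $\mathscr{Q}^\pm_{\hbar,1}=z-M_0-\hbar\Pi n_1\Pi^*$ and the Neumann correction, one has $E_{-+}(z)=-(\mathrm{Op}^{\mathrm{W}}_\hbar M_\hbar-z)+\hbar^2 R_\hbar(z)$ with $R_\hbar$ self-adjoint and uniformly bounded; the eigenvalue branches $\nu_j(z)$ of $E_{-+}(z)$ then satisfy $|\nu_j(z)-(z-\lambda_j)|\leq C\hbar^2$ and $\partial_z\nu_j(z)=1+\mathscr{O}(\hbar^2)>0$, where $\lambda_j=\lambda_j(\mathrm{Op}^{\mathrm{W}}_\hbar M_\hbar)$, so each branch vanishes exactly once, at $z_j^\ast$ with $|z_j^\ast-\lambda_j|\leq C\hbar^2$, and $\lambda_j\mapsto z_j^\ast$ is a multiplicity-preserving bijection between $\mathrm{sp}(\mathrm{Op}^{\mathrm{W}}_\hbar M_\hbar)$ and $\mathrm{sp}(\mathfrak{N}_\hbar)$ in $[a,b]$. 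This is precisely coincidence modulo $\mathscr{O}(\hbar^2)$, with no $\sqrt{\mathrm{rank}}$ loss since the branches $\nu_j(\cdot)$ are monotone and do not accumulate.

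\textbf{Main obstacle.} The genuinely delicate step is the control of the remainder $\hbar w_\hbar\partial_\tau$ in~\eqref{eq.invgauche}--\eqref{eq.invdroite}: being unbounded in the naive sense, its negligibility requires the quantitative decay in $\tau$ of the de Gennes resolvent (and, on the $\mathscr{L}_h$-side, of the functions produced by the localization step), played against the support of $w_\hbar$ at the large scale $\hbar^{-2\eta}$. The second, more conceptual point is to route the multiplicity count through the exact Schur complement $E_{-+}(z)$ rather than through direct quasimode comparison, so as to realize the comparison with $M_\hbar^{\mathrm W}$ as a $\mathscr{O}(\hbar^2)$ perturbation of a $z$-parametrized self-adjoint family and thereby reach the sharp order $h^2$. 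The remaining ingredients — the composition theorem for operator-valued symbols, the Calder\'on--Vaillancourt bounds, the resolvent estimate of Lemma~\ref{lem.sandwich}, and the algebraic identities for $\mathscr{Q}_{\hbar,1}$ — are already in place in Section~\ref{sec.4}.
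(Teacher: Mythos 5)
Your route differs from the paper's in two structural ways, and one of them opens a genuine gap. You propose to close the proof by transitivity, chaining the two-sided equivalence $\mathscr{L}_h\leftrightarrow h\mathfrak{N}_\hbar$ modulo $\mathscr{O}(h^\infty)$ with the Schur-complement equivalence $\mathfrak{N}_\hbar\leftrightarrow\mathrm{Op}^{\mathrm{W}}_\hbar M_\hbar$ modulo $\mathscr{O}(\hbar^2)$. But the first link is \emph{not} two-sided in the paper: Proposition~\ref{prop.firstspectralred} only gives $\mathrm{rank}\,\mathds{1}_{J_h}(\mathscr{L}_h)\leq\mathrm{rank}\,\mathds{1}_{\hat J_h}(h\mathscr{N}_\hbar)$, and the remark after Proposition~\ref{prop.LhtoMh} is explicit that this inclusion was deliberately left one-sided so as to avoid the spectral analysis of $\mathscr{N}_\hbar$ (discreteness of spectrum, Agmon/decay estimates in $\tau$, microlocalization in $\sigma$). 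Your reverse claim --- that spectral subspaces of $\mathfrak{N}_\hbar$ in $[a,b]$ are ``automatically microlocalized in $\tilde K$ and localized near $\tau=0$'' --- is precisely the set of a priori estimates for $\mathfrak{N}_\hbar$ that would have to be proved. The decay of the de Gennes eigenfunctions $\overset{\circ}{u}_k$ is a decay of the \emph{principal} operator symbol $n_0$; it does not, by itself, yield decay of the eigenfunctions of $\mathfrak{N}_\hbar$ without a separate argument. Moreover, the mechanism you invoke to obtain such decay ($E_+(z)$ from the Grushin block structure) belongs to your step~2, creating a circularity in the ordering of your two steps. The paper sidesteps all of this by never touching the eigenfunctions of $\mathfrak{N}_\hbar$: in the reverse direction it takes $f$ an eigenfunction of $M_\hbar^{\mathrm{W}}$ and directly forms the quasimode $(\mathscr{Q}^+_{\hbar,1})^{\mathrm{W}}f$ for $\mathscr{L}_h$, whose $\tau$-decay is inherited from the explicit $\overset{\circ}{u}_k$ appearing in the symbol of $\mathscr{Q}^+_{\hbar,1}$, not from any spectral property of $\mathfrak{N}_\hbar$.

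Your concern about a $\sqrt{\mathrm{rank}}$ loss is misplaced. The paper's multiplicity argument does \emph{not} superpose quasimodes associated with different eigenvalues: it passes to the direct sum $\bigoplus_{j=1}^p\hat\chi_\hbar\hat E_j$ with the block operator $\bigoplus_{j=1}^p(hM_\hbar^{\mathrm{W}})$ and checks the quasimode estimate componentwise; the Pythagorean identity then gives $\|(\oplus_j hM_\hbar^{\mathrm{W}}-\lambda)\varphi\|\leq C\hbar^2\|\varphi\|$ without any dimension-dependent factor. So the sharp order $h^2$ is already attained by the paper's direct-sum argument. Your alternative --- rank-counting via the monotone branches $\nu_j(z)$ of the exact Schur complement $E_{-+}(z)$ --- is a valid and in some respects cleaner mechanism, and it is essentially the same algebra that underlies~\eqref{eq.invgauche}--\eqref{eq.invdroite}; it would be a fine substitute for the spectral-theorem step \emph{provided} you keep the paper's routing (comparing $\mathscr{L}_h$ with $M_\hbar^{\mathrm{W}}$ directly, and using $\mathfrak{N}_\hbar$ only through the Grushin blocks, never through its own eigenfunctions). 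As it stands, your step~1 in the reverse direction is the missing piece, and it cannot be dismissed with $\mathscr{O}(h^\infty)$.
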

\begin{proof}
First, we consider $\psi$ an eigenfunction of $\mathscr{L}_h$  associated with $\lambda\in[h a,h b]$. We use \eqref{eq.invgauche} with $z=h^{-1}\lambda$ o get that
\[\mathscr{Q}^{\mathrm{W}}_{\hbar,1}\mathscr{P}_\hbar^{\mathrm{W}}\begin{pmatrix}
\varphi\\
0
\end{pmatrix}=\begin{pmatrix}
\varphi\\
0
\end{pmatrix}+\mathscr{O}(\hbar^2)\|\varphi\|\,,\] where $\varphi$
denotes the function $\psi$ after multiplication by a cutoff function
in $t$ and rescaling as in Proposition \ref{prop.agmonmicro}. Note
that we used the exponential decay in $\tau$ of our quasimode
$\varphi$ (which comes from that of $\psi$) to control the remainder
term in \eqref{eq.invgauche}. We infer that
\begin{equation}\label{eq.PP*}
\mathfrak{P}^*\mathfrak{P}\varphi=\varphi+\mathscr{O}(\hbar)\|\varphi\|\,,\quad (\hat\lambda-\mathrm{Op}^\mathrm{W}_\hbar M_\hbar)\mathfrak{P}\varphi=\mathscr{O}(\hbar^2)\|\varphi\|\,,
\end{equation}
where we used that the principal symbol of the top right coefficient of $\mathscr{Q}_{\hbar,1}$ is $\Pi^*$.
Since $\mathfrak{P}^*$ is bounded uniformly in $\hbar$ (as the quantization of a bounded symbol), the first relation implies that
\[\|\varphi\|\leq C\|\mathfrak{P}\varphi\|\,.\]
Then, from the second relation and the spectral theorem, we deduce that
\[\mathrm{dist}(\hat\lambda,\mathrm{sp}(\mathrm{Op}^\mathrm{W}_\hbar M_\hbar))\leq C\hbar^2\,.\]
This means that the spectrum of $h^{-1}\mathscr{L}_h$ in the window $[h a,h b]$ is at a distance of order $\hbar^2$ to the spectrum of the effective operator $\mathrm{Op}^\mathrm{W}_\hbar M_\hbar$.

Let us now proceed as in the proof of Proposition \ref{prop.firstspectralred} and keep the same notation. We have
\[\|(\oplus_{j=1}^p\mathscr{N}_\hbar-\hat\lambda)\varphi\|\leq\varepsilon_h\|\varphi\|\,,\quad \varepsilon_h=\mathscr{O}(h^\infty)\,,\]
for all $\varphi=(\varphi_1,\ldots,\varphi_p)\in \bigoplus_{j=1}^p \hat\chi_\hbar \hat E_j$ and where $\hat\lambda=(\hat\lambda_1,\ldots,\hat\lambda_p)$. 

Similarly as \eqref{eq.PP*}, we have
\begin{equation}\label{eq.injective}
\|\varphi\|\leq C\|\mathfrak{P}\varphi\|\,,
\end{equation}
and
\[(\oplus_{j=1}^p\mathrm{Op}^\mathrm{W}_\hbar M_\hbar-\hat\lambda)\mathfrak{P}\varphi=\mathscr{O}(\hbar^2)\|\mathfrak{P}\varphi\|\,,\]
where $\mathfrak{P}\varphi=(\mathfrak{P}\varphi_1,\ldots,\mathfrak{P}\varphi_p)$. Due to \eqref{eq.injective}, the action of the map $\mathfrak{P}$ is injective on $\bigoplus_{j=1}^p \hat\chi_\hbar \hat E_j$. Therefore, as in the proof of Proposition \ref{prop.firstspectralred}, the spectral theorem provides us with
\[\mathrm{rank}\,\mathds{1}_{J_h}(\mathscr{L}_h)\leq \mathrm{rank}\,\mathds{1}_{K_h}(hM^{\mathrm{W}}_\hbar)\,,\]
where $J_h\subset I_h$ and $K_h$ is an interval such that $J_h\subset K_h$ and $\mathrm{d}_{\mathrm{H}}(J_h,K_h)=\mathscr{O}(h^2)$.

Let us now prove the converse estimate. We use \eqref{eq.invdroite} with an eigenvalue $z=\hat\lambda$ of $M_\hbar^{\mathrm{W}}$ and for $f$ a corresponding eigenfunction. We have
\begin{equation}\label{eq.invdroite0f}
\mathscr{P}_\hbar^{\mathrm{W}}	\mathscr{Q}^{\mathrm{W}}_{\hbar,1}\begin{pmatrix}
0\\
f
\end{pmatrix}=\begin{pmatrix}
0\\
f
\end{pmatrix}+\mathscr{O}(\hbar^2)\|f\|\,,
\end{equation}
where the remainder term involving $w_\hbar$ has been controlled by
using the exponential decay of the eigenfunctions of the de Gennes -
Robin operator $n_0$.

Then, the first line in \eqref{eq.invdroite0f} gives
\begin{equation}\label{eq.quasimodesMhtoNh}
(\mathscr{N}_\hbar-\hat\lambda)\left(\mathscr{Q}^+_{\hbar,1}\right)^\mathrm{W} f=\mathscr{O}(\hbar^2)\|f\|\,,
\end{equation}
whereas the second line gives
\[\mathfrak{P}\left(\mathscr{Q}^+_{\hbar,1}\right)^\mathrm{W} f=\mathscr{O}(\hbar^2)\|f\|\,,\]
which leads to
\[\|f\|\leq C\left\|\left(\mathscr{Q}^+_{\hbar,1}\right)^\mathrm{W} f\right\|\,.\]
With \eqref{eq.quasimodesMhtoNh}, we get
\[(\mathscr{N}_\hbar-\hat\lambda)\left(\mathscr{Q}^+_{\hbar,1}\right)^\mathrm{W}
  f=\mathscr{O}(\hbar^2)\left\|\left(\mathscr{Q}^+_{\hbar,1}\right)^\mathrm{W}
    f\right\|\,.\] Now, by using the exponential decay of
$\left(\mathscr{Q}^+_{\hbar,1}\right)^\mathrm{W} f$ and the rough
microlocalization of $f$ in the support of $\Xi_0$ (since the
principal symbol of the scalar pseudodifferential operator $M_\hbar$
is $n_0$), we get the quasimode estimate
\[(\mathscr{L}_h-\lambda)\psi^{\mathrm{quasi}}=\mathscr{O}(h^2)\|\psi^{\mathrm{quasi}}\|\,,\]
with $\psi^{\mathrm{quasi}}(x)=\chi(t(x)/\hbar^{1-\gamma})\Psi^{\mathrm{quasi}}\circ\Gamma^{-1}(x)$
where
\[\Psi^{\mathrm{quasi}}(s,t)=\left(\mathscr{Q}^+_{\hbar,1}\right)^\mathrm{W}  f(s,\hbar^{-1}t)\,,\]
and $\chi$ is a smooth cutoff function equal to $1$ near $0$ and $0$
away from a neighborhood of $t=0$ and $\gamma\in(0,1)$ is chosen small
enough so that $t\zeta(h^{-\frac12+\eta}t)=t$ on the support of
$\chi(t/\hbar^{1-\gamma})$. Note that $\Psi^{\mathrm{quasi}}$
satisfies the Robin condition at $t=0$ since $\mathscr{Q}^+_{\hbar,1}$
(as well as $\mathscr{Q}_0$, see \eqref{eq.Qhbar,1}) takes values in a
space of functions satisfying the Robin condition. In particular,
$\psi^{\mathrm{quasi}}$ belongs to the domain of $\mathscr{L}_h$.

The spectral theorem shows that $\lambda$ is close to the spectrum of $\mathscr{L}_h$ at a distance or order at most $\mathscr{O}(h^2)$. The argument concerning the multiplicities can again be used (as above) by exchanging the roles of $\mathscr{N}_\hbar$ and $M_\hbar^{\mathrm{W}}$. The conclusion follows.
\end{proof}
\begin{remark}
  In Proposition \ref{prop.firstspectralred} we only proved one
  inclusion of spectra. In contrast, Proposition \ref{prop.LhtoMh} is
  stronger, since it provides an equality modulo $\O(h^2)$, in the
  sense of Definition~\ref{defi:coincide}. Indeed, in the proof of
  Proposition \ref{prop.LhtoMh}, we only have to use quasimodes for
  $\mathscr{N}_\hbar$ and not necessarily the true eigenfunctions of
  $\mathscr{N}_\hbar$ (whose existence in the spectral window of
  interest is not obvious). Our presentation avoids the spectral
  analysis of $\mathscr{N}_\hbar$ (existence of the discrete spectrum,
  Agmon estimates, etc.) by comparing directly the spectra of
  $\mathscr{L}_h$ and of the effective operator.
\end{remark}

\section{Analysis of the effective operator}\label{sec.5}
This section is devoted to the spectral study of
$M_\hbar^{\mathrm{W}}$ in $[a,b]$. Let us diagonalize this operator,
up to a remainder of order $\mathscr{O}(\hbar^2)$.  Note that, by
using the exponential decay of the eigenfunctions of $n_0$, we may
(and so do we) replace $\zeta_\mu$ by $1$.

\subsection{Asymptotic diagonalization and end of the proof of Theorem \ref{thm.main}}\label{sec.expA}
The end of the proof follows from classical arguments (see, for instance, \cite[Section 3.1]{HS90} where such arguments are used).
We notice that the spectrum of 
\[\mathscr{T}_\hbar=\exp(\hbar A^{\mathrm{W}}) M_\hbar^{\mathrm{W}} \exp(-\hbar A^{\mathrm{W}})\]
is the same as the one of $M_\hbar^{\mathrm{W}}$, as soon as $A$ belongs to $S(1)$ and is $2L$-periodic with respect to $s$. In this case, we recall that $A^{\mathrm{W}}$ is bounded from $L^2(\T_{2L})$ to $L^2(\R_{2L})$ (and thus its exponential is well-defined as an element of $\mathscr{L}(L^2(\T_{2L})$) thanks to the classical power series). Let us explain how to choose $A$. By expanding the exponential, we have
\[\mathscr{T}_\hbar=(\mathrm{Id}+\hbar A^{\mathrm{W}}) M_\hbar^{\mathrm{W}} (\mathrm{Id}-\hbar A^{\mathrm{W}})+\mathscr{O}(\hbar^2)\,,\]
and thus
\[\mathscr{T}_\hbar=M_\hbar^{\mathrm{W}}+\hbar[A^{\mathrm{W}},M_\hbar^{\mathrm{W}}]+\mathscr{O}(\hbar^2)\,,\]
so that
\[\mathscr{T}_\hbar=M_0^{\mathrm{W}}+\hbar\left(M_1+[A,M_0]\right)^{\mathrm{W}}+\mathscr{O}(\hbar^2)\,.\]
Therefore, $A$ should be chosen so that $M_1-[M_0,A]$ is diagonal. The map $\mathrm{Skew}_N(\R)\ni A\mapsto [M_0,A]\in\mathrm{Sym}^0_N(\R)$ is well-defined and an isomorphism since $M_0$ is diagonal with distinct real entries, where $\mathrm{Skew}_N(\R)$ is the vector space of skew-symmetric matrices and $\mathrm{Sym}^0_N(\R)$ the space of symmetric matrices with null diagonal. It is actually easy to compute its inverse. Consider $M$ a symmetric matrix with null diagonal. We want to find $A\in\mathrm{Skew}_N(\R)$ such that $[M_0,A]=M$. For all $j\in\{1,\ldots,N\}$, we have
\[(M_0-\overset{\circ}{\mu}_j)Ae_j=Me_j\,,\]
and then, for all $k\in\{1,\ldots,N\}$,
\[(\overset{\circ}{\mu}_k-\overset{\circ}{\mu}_j)\langle Ae_j,e_k\rangle=\langle Me_j,e_k\rangle\,.\]
Thus, for all $k\neq j$,
\[\langle Ae_j,e_k\rangle=(\overset{\circ}{\mu}_k-\overset{\circ}{\mu}_j)^{-1}\langle Me_j,e_k\rangle\,,\]
which determines a unique $A\in\mathrm{Skew}_N(\R)$. 

Since there is a uniform gap between the $\overset{\circ}{\mu}_j$ (with respect to $\sigma$), we get the existence of a skew-symmetric $A$ in $S(1)$ such that
\[\underbrace{M_1-\mathrm{diag}(M_1)}_{\in\mathrm{Sym}_N^0(\R)}+[A,M_0]=0\,.\]
With this choice, we get
\[\mathscr{T}_\hbar=M_0^{\mathrm{W}}+\hbar\mathrm{diag}(M_1^{\mathrm{W}})+\mathscr{O}(\hbar^2)\,.\]
Note that, for all $j\in\{1,\ldots,N\}$, we have $\langle M_1 e_j,e_j\rangle=\kappa(s)\langle \mathscr{C}(\tau,\Xi_0(\sigma))u^{[\Xi_0(\sigma),\gamma]}_j,u^{[\Xi_0(\sigma),\gamma]}_j\rangle$.
By the spectral theorem, we deduce that the spectra of $M_\hbar^{\mathrm{W}}$ and $M_0^{\mathrm{W}}+\hbar\mathrm{diag}(M_1^{\mathrm{W}})$ coincide modulo $\mathscr{O}(\hbar^2)$. This procedure can be continued at any order.

\subsection{Spectral consequences}
The aim of this last section is to prove Proposition
\ref{prop.pseudo1D} and Corollary \ref{cor.lowlying}.

\subsubsection{Proof of Proposition \ref{prop.pseudo1D}}
\label{sec:proof-BSreg}

We could not find this particular statement in the literature, because
a) we have to deal with non-connected level sets of the principal
symbol, and b) we have Floquet periodic conditions, with
$\h$-dependent Floquet exponent. The first issue is treated with usual
microlocal arguments: each connected component carries with itself a
Bohr-Sommerfeld asymptotic series, as in~\cite{HR84}, and the initial
spectrum is obtained, modulo $\O(\h^\infty)$, by the superposition
(with multiplicities) of all these series. The second one is easily
included in the general theory thanks to the ``sheaf'' approach
of~\cite{VN-Bohr-Sommerfeld00}. Indeed, near each point of the energy
level curve $\sigma=\textup{const}$, the operator $\mathsf{P}_\h$ is
microlocally a usual $\h$-pseudo-differential operator, and the
quantum Darboux-Carathéodory normal form holds. Therefore, the
Bohr-Sommerfeld cocycle of~\cite[Proposition
5.6]{VN-Bohr-Sommerfeld00} holds; the difference being that the
condition for a global section should include the Floquet exponent
$\theta$. This gives a Bohr-Sommerfeld rule for quantized energies $E$
(for each connected component) of the form
  \begin{equation}
    \mathcal{A}(E) + \h m(E)\tfrac{\pi}{2} + \h\mathcal{K}(E) +
    \O(\h^2) = 2\pi\h(\ell + \tfrac{L}{\pi}\theta)\,, \quad
    \ell\in\Z\,,
    \label{equ:BS}
  \end{equation}
  where $\mathcal{A}(E)$ is the action integral (here
  $\mathcal{A}(E)=2L\sigma$ when $E=\mu(\sigma)$), $m(E)$ the Maslov
  index (which vanishes here, because the curves
  $\sigma=\textup{const}$ project diffeomorphically on the $s$
  variable), and $\mathcal{K}(E)$ is the integral of the subprincipal
  form~\cite[Definition 3.2]{VN-Bohr-Sommerfeld00} along the energy
  level set. In order to compute $\mathcal{K}$, we notice that the
  Hamiltonian vector field of $\mu(\sigma)$ is
  $\mu'(\sigma)\tfrac{\partial}{\partial s}$ and hence the
  subprincipal form is $\frac{-r}{\mu'(\sigma)} \textup{d}s$, where
  $r$ is the subprincipal symbol of $\mathsf{P}_\h$ (here
  $r= -C(\sigma)\kappa(s)$).  Hence, for $E=\mu(\sigma)$, we have
  \[
    \mathcal{K}(E) = \frac{C(\sigma)}{\mu'(\sigma)}_{| \Sigma_q}
    \int_0^{2L}\kappa(s)\textup{d}s\,.
  \]
  Inverting the formal series~\eqref{equ:BS}, we get
  \[
    E = \mu(\sigma) -
    \mathcal{K}(\mu(\sigma))\tfrac{1}{2L}\mu'(\sigma) + \O(\h^2)\,, \quad \sigma 
 =  \tfrac{\pi}{L}\h(\ell + \tfrac{L}{\pi}\theta)  \]
  which gives~\eqref{equ:action} and~\eqref{equ:kappa}.

\subsubsection{Proof of Corollary \ref{cor.lowlying}}
\label{sec:proof-coroll-minipuits}

Thanks to Proposition \ref{prop.LhtoMh} and the considerations in Section \ref{sec.expA}, we know that the spectrum of $\mathscr{L}_h$ in $[h a,h b]$ coincides with that of $hM_\hbar^{\mathrm{W}}$ modulo $\mathscr{O}(h^2)$. In the present section, since we are interested in the low-lying eigenvalues, we take $a=-\infty$ and $b=\Theta_0(\gamma)+\varepsilon<1$ (for $\varepsilon>0$ small enough). Therefore, we have $N=1$ and the matrix symbol $M_\hbar$ reduces to a scalar symbol:
\[M_\hbar(s,\sigma)=\mu_1(\gamma,\sigma)+\hbar\kappa(s)\langle
  \mathscr{C}(\tau,\Xi_0(\sigma))u^{[\Xi_0(\sigma),\gamma]}_1,u^{[\Xi_0(\sigma),\gamma]}_1\rangle\,.\]
We are interested in the spectrum of $M_\hbar^{\mathrm{W}}$ (when
acting on $e^{is\mathfrak{f}_0/h}L^2(\T_{2L})$).  Hence, Corollary
\ref{cor.lowlying} can be obtained by~\cite{D-VN21} (see in particular
the Morse case, section 6.3.1) followed by a standard Birkhoff normal
form (here, the Floquet exponent $\mathfrak{f}_0/h$ plays no role
because the analysis is local near a point in the boundary
$\partial\Omega$). Here are the details.

Thanks to the Weyl asymptotic formula for pseudodifferential operators (see, for instance, \cite[Theorem 14.11]{Zworski}), the counting function $\mathrm{N}(M_\hbar^{\mathrm{W}},\Theta_0(\gamma)+\varepsilon)$ (giving the number of eigenvalues less than $\Theta_0(\gamma)+\varepsilon$) satisfies
\[\begin{split}\mathrm{N}(M_\hbar^{\mathrm{W}},\Theta_0(\gamma)+\varepsilon)&=\frac{1}{2\pi\hbar}\int_{\{(s,\sigma) : \mu_1(\sigma)\leq\Theta_0(\gamma)+\varepsilon\}}\mathrm{d}s\mathrm{d}\sigma+o(\hbar^{-1})\\
&=\frac{L}{\pi\hbar}|\{\sigma : \mu_1(\sigma)\leq\Theta_0(\gamma)+\varepsilon\}|(1+o(1))
	\,.\end{split}\]
Now, we take $\varepsilon=\hbar^{\eta}$, for some given $\eta>0$.

Due to the non-degeneracy of the minimum of $\sigma\mapsto\mu_1(\gamma,\Xi_0(\sigma))$, the eigenfunctions associated with eigenvalues less than $b$ are microlocalized in a neighborhood of $\xi_0(\gamma)$ of size $\hbar^{\eta/2}$ (and so are all the linear combinations of such eigenfunctions due to the Weyl estimate). This invites us to expand the symbol near $\xi_0(\gamma)$:
\begin{equation}\label{eq.TaylorMhbar}
M_\hbar(s,\sigma)=\Theta_0(\gamma)+\frac{\partial^2_\sigma\mu(\gamma,\xi_0(\gamma))}{2}(\sigma-\xi_0(\gamma))^2-\hbar\kappa(s)C_1(\xi_0(\gamma))+\mathscr{O}(|\sigma-\xi_0(\gamma)|^3+\hbar|\sigma-\xi_0(\gamma)|)\,.
\end{equation}
Therefore, $M_\hbar$ is relative perturbation of the symbol of a classical electric Schrödinger operator. The corresponding operator is 
\[\mathscr{M}_\hbar=\Theta_0(\gamma)+\frac{\partial^2_\sigma\mu(\gamma,\xi_0(\gamma))}{2}(\hbar D_s-\xi_0(\gamma))^2-\hbar\kappa(s)C_1(\xi_0(\gamma))\,.\]
Let us only consider the case when $\gamma<\gamma_0^{[0]}$ (\emph{i.e.}, $\epsilon=1$). The assumption that $\kappa$ has a unique maximum, which is non-degenerate, allows to use the harmonic approximation near the maximum of $\kappa$ (and even a Birkhoff normal form, see, for instance, \cite[Chapter 5]{Raymond17} or the original references \cite{Sj92, CVN08}). The eigenvalues of $\mathscr{M}_\hbar$ satisfy
\[\lambda_j(\mathscr{M}_\hbar)=\Theta_0(\gamma)-\kappa_{\max}C_1(\xi_0(\gamma))\hbar+\left((2j-1)\sqrt{\frac{k_2C_1(\xi_0(\gamma))\mu''_1(\gamma,\xi_0(\gamma))}{4}}\right)\hbar^{\frac32}+o(\hbar^{\frac32})\,,\]
uniformly in $j\geq 1$ such that $j\hbar^{\frac12}=o(1)$.

We recall that $\hbar=h^{\frac12}$. We get Corollary \ref{cor.lowlying} by noticing, thanks to a perturbation analysis using \eqref{eq.TaylorMhbar},  that the spectra of $hM_\hbar^{\mathrm{W}}$ and $h\mathscr{M}_\hbar$ below  $h(\Theta_0(\gamma)+\hbar^{\eta})$ coincide modulo $o(h^{\frac74})$.

\subsubsection{Proof of Theorem~\ref{theo:weyl}}
By Theorem~\ref{thm.main}, and Definition~\ref{defi:coincide} we have,
for $\epsilon=\O(h)$,
\begin{equation}
  N(h\mathfrak{M}_h, [h(a+\epsilon), h(b-\epsilon)]) \leq N(\mathscr{L}_h,[ha, hb]) \leq N(h\mathfrak{M}_h, [h(a-\epsilon), h(b+\epsilon)])\,.
  \label{equ:weyl_0}
\end{equation}
From Corollary~\ref{cor.reg}, for any interval $[a',b']$ disjoint from
$\Theta$ and $\Lambda$, the number of eigenvalues of $\mathfrak{M}_h$
in $[ha',hb']$ is bounded by $C\frac{b' - a'}{h^{1/2}}$ for some
constant $C>0$. Applying this with $(a',b')$ equal, respectively, to
the four intervals $(a,a+\epsilon)$, $(b-\epsilon,b)$,
$(a-\epsilon,a)$, and $(b,b+\epsilon)$, it follows
from~\eqref{equ:weyl_0} that
\[
  N(\mathscr{L}_h,[ha, hb]) = N(h\mathfrak{M}_h, [ha, hb]) +
  \O(\epsilon h^{-1/2})=N(h\mathfrak{M}_h, [ha, hb]) +\O(h^{1/2})\,.
\]
Therefore, it is enough to estimate $N(h\mathfrak{M}_h, [ha, hb])$, for
which we apply Corollary~\ref{cor.reg} (which is actually a
description of the spectrum of $\mathfrak{M}_h$). This corollary says
that the number of eigenvalues of $h\mathfrak{M}_h$ inside $[ha,hb]$,
including multiplicities, is given, modulo $\O(h^2)$, by the
number of integers $\ell\in\Z$ such that
\begin{equation}
  h^{\frac12}(\tfrac{\pi}{L}\ell + \theta (h)) \in
  f_{k,q,h}^{-1}([a,b])\,,\label{equ:preimage}
\end{equation}
for some admissible $(k,q)$, where
$f_{k,q,h}(\sigma) := f_{k,q}(\sigma,h^{1/2})$. 

To simplify notations, let us momentarily fix
$(k,q)$ and denote $f_h:=f_{k,q,h} =: f_0+h^{1/2}f_1+\O(h)$, where
$f_0$ and $f_1$ are defined in~\eqref{equ:fkq0} and~\eqref{equ:fkq1}.
By assumption, $f_0$ is monotonous on $\Sigma_{k,q}$, let us assume
that it is increasing; the decreasing case is obtained by swapping
$(a,b)$. For $h$ small enough, $f_h$ is also increasing and hence
$f_{h}^{-1}([a,b]) = [f_h^{-1}(a), f_h^{-1}(b)]$. Therefore, the
solutions to~\eqref{equ:preimage} are exactly the integers belonging
to the interval
\begin{equation}
  \frac{Lh^{-1/2}}{\pi}[f_h^{-1}(a), f_h^{-1}(b)]-\frac{L}{\pi}\theta(h)\,.
  \label{equ:interval}
\end{equation}
Let $\sigma = f_h^{-1}(a)$; of
course $\sigma$ depends on $h$, but since $\sigma\in\Sigma_{k,q}$, it
is bounded and we have $\sigma = f_0^{-1}(a) +
\O(h^{1/2})$. Therefore,
$f_1(\sigma) = f_1(f_0^{-1}(a)) + \O(h^{1/2})$. According to the
statement of Theorem~\ref{theo:weyl}, we denote $\alpha:=f_0^{-1}(a)$. Writing $f_0(\sigma) = a - h^{1/2}f_1(\alpha) + \O(h)$ we get, by Taylor expansion,
\[
\sigma = \alpha - h^{1/2}(f_0^{-1})'(a) f_1(\alpha) + \O(h) = \alpha + h^{1/2}\frac{ \langle \kappa \rangle C_k(\alpha)}{\mu_k'(\alpha)} + \O(h)\,.
\]
Using the analogous formula for $f_h^{-1}(b)$, we may compute the
difference $f_h^{-1}(b) - f_h^{-1}(a)$ and obtain the length of the
interval~\eqref{equ:interval}:
\[
  \frac{Lh^{-1/2}}{\pi}(f_h^{-1}(b) - f_h^{-1}(a)) =
  \frac{Lh^{-1/2}}{\pi}(\beta - \alpha) + \frac{L\langle \kappa
    \rangle}{\pi} \left(\frac{C_k(\beta)}{\mu_k'(\beta)} -
  \frac{C_k(\alpha)}{\mu_k'(\alpha)}\right) + \O(h^{1/2})
\]
which gives Theorem~\ref{theo:weyl} by summing over admissible
$(k,q)$.

\subsubsection{Proof of Theorem~\ref{theo:oscillate}}
\label{sec:proof-theor-oscillate}

We use the notation of Theorem~\ref{thm.main}. By
Proposition~\ref{prop.pseudo1D}, the self-adjoint operators $m_k^W$
acting on $e^{i\theta(h)\cdot}L^2(\T_{2L})$ satisfy the Gårding
inequality:
\[
  m_k^W \geq \min \mu_k - \O(h^{1/2}) = \Theta^{[k-1]} - \O(h^{1/2}) >
  b \qquad \forall k=2,\dots,N, \quad \forall h < h_0
\]
for $h_0$ small enough. Hence the spectrum of $\mathscr{L}_h$ in $I_h$
coincides, modulo $\O(h^2)$, with the spectrum of $h m_1^W$ in that
interval. In other words, for this choice of interval $I_h$, the
disjoint unions of Corollary~\ref{cor.reg} reduce to a union of the
two components $(k=1,q=1)$ and $(k=1,q=2)$, and the spectrum in $I_h$
coincides modulo $\O(h^2)$ with
 \[
   \bigsqcup_{q=1,2} \left\{ h f_{1,q} (\sigma,h^{\frac12}), \, \sigma
     \in h^{\frac12}(\tfrac{\pi}{L}\Z + \theta (h)) \cap \Sigma_{1,q}
   \right\}  \cap [ha, hb]\,.
\]
So eigenvalues $\lambda_j$ in $I_h$ are associated with integers
$\ell = \ell(h)\in\Z$ such that
$ h^{\frac12}(\tfrac{\pi}{L}\ell + \theta (h)) \in \Sigma_{1,1} \cup
\Sigma_{1,2}$; therefore there are constants $\alpha,\beta$,
independent on $h$, such that
\[
  \sigma_\ell(h) := h^{\frac12}(\tfrac{\pi}{L}\ell + \theta (h)) \in
  [\alpha,\beta]\,.
\]
Hence
$\tfrac{\pi}{L}\ell \in [\frac{\alpha}{h^{1/2}} - \theta(h),
\frac{\beta}{h^{1/2}} - \theta(h)] $. Recalling that
$\theta(h)=\frac{|\Omega|}{|\partial\Omega|h}$, we get that, for
$h< h_0:=\sqrt{\frac{|\Omega|}{\beta|\partial\Omega|}}$, $\ell$ must
be negative. Thus, for each fixed $\ell$, $\sigma_\ell(h)$ increases
when $h$ decreases to zero. In other words, because of the non-zero
flux term, the corresponding semiclassical eigenvalues
$ h f_{1,q} (\sigma_\ell(h),h^{\frac12})$ ``move to the right'' (in
the sense of Figure~\ref{fig:dispersion}) towards the Landau level
$\mu_1=1$.

Let us now describe the semiclassical branches, \emph{i.e.} the curves
\begin{equation}
  h \mapsto  h f_{1,q} (\sigma_\ell(h),h^{\frac12}), \qquad q=1,2, \quad \ell\in\Z_-.
  \label{equ:branches}
\end{equation}
We may assume that the intervals $\Sigma_{1,q}$ satisfy
$\Sigma_{1,1} \leq \Sigma_{1,2}$. Recall from
Proposition~\ref{prop.dispersion} (or Figure~\ref{fig:dispersion})
that there exists $c>0$ such that ${\mu_1'}_{| \Sigma_{1,1}} \leq - c$
while ${\mu_1'}_{| \Sigma_{1,2}} \geq c$. Hence, in view of the
semiclassical expansion of $f_{k,q}$, and up to reducing $c$, we get
\begin{equation}
f_{1,1}' \leq - c \qquad \text{ and } \qquad
f_{1,2}' \geq c
\label{equ:f_prime}
\end{equation}
uniformly for $h\leq h_0$ small enough. Thus, for each fixed
admissible $\ell\in\Z_-$, the branch~\eqref{equ:branches} generated by
$\Sigma_{1,1}$ (\emph{i.e.} corresponding to $q=1$) is an increasing
curve, while the branch corresponding to $q=2$ is decreasing.
Moreover, the semiclassical branches generated by $\Sigma_{1,1}$ and
associated with different integers $\ell_1\neq \ell_2$ will never
cross as $h$ varies, and their mutual vertical distance is bounded
below as
\begin{equation}
\abs{h f_{1,q} (\sigma_{\ell_1}(h),h^{\frac12}) - h f_{1,q} (\sigma_{\ell_2}(h),h^{\frac12})
} \geq h^{\frac{3}{2}}\frac{\pi c}{L}\,.
\label{equ:vertical_distance}
\end{equation}
Hence, in view of~\eqref{equ:f_prime}, we see that the horizontal
distance between these curves is $\O(h^2)$.  Of course, the same holds
for the branches associated with $\Sigma_{1,2}$, which thus form a
collection of disjoint decreasing curves. Therefore, the superposition
of all branches is a deformed grid intersected with the window
$(0,h_0]\times [ha, hb]$, see Figure~\ref{fig:branches}. In
particular, there are many crossing points, and the horizontal
distance between consecutive crossing points along a fixed branch is
$\O(h^2)$.
\begin{figure}[h]
  \centering
  \includegraphics[width=0.7\linewidth]{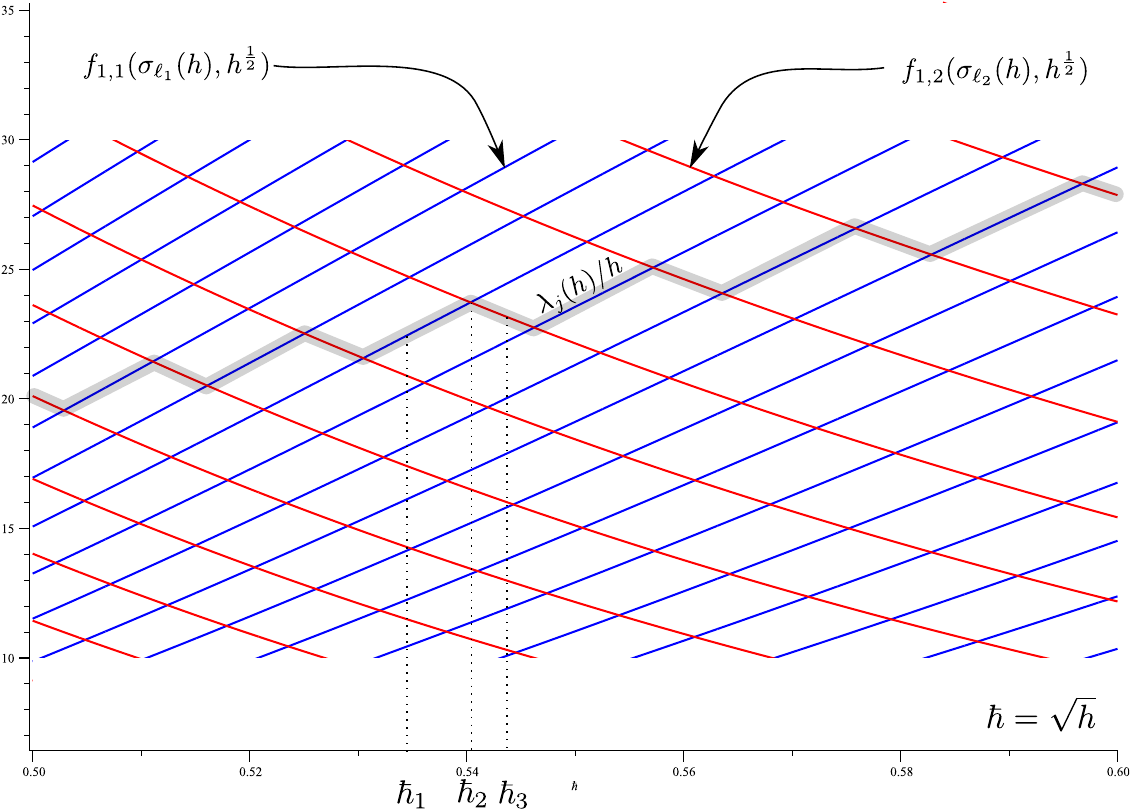}
  \caption[Semiclassical branches of eigenvalues]{Illustration of the
    collection of semiclassical branches of eigenvalues. Here we plot
    the graphs of $f_{1,q}(\sigma_\ell(h),h)$ with respect to the
    variable $\hbar=\sqrt{h}$, for $q=1$ (blue curves) and $q=2$ (red
    curves). The continuous curve of $\lambda_j(h)/h$, for fixed $j$,
    where $\lambda_j(h)$ is the exact eigenvalue of $\mathscr{L}_h$,
    lies within the greyed stair-case like curve (of vertical width
    $\O(h)$).}
  \label{fig:branches}
\end{figure}

Consider now the exact eigenvalues $\lambda_j\in [ha,hb]$. By
Corollary~\ref{cor.reg}, each $\lambda_j$ must be $\O(h^2)$-close
to one of the semiclassical branches. For fixed $\ell\in\Z_-$,
modifying the value of $h$ by an amount of order $\O(h^2)$ amounts to
shifting the abscissa $\sigma_\ell(h)$ by an amount proportional to
$h^{\frac12}$. Therefore, by suitably choosing $C_1$ and setting
$h_1:= h + C_1 h^2$ we may assume that $\lambda_j(h_1)$ corresponds to a
unique increasing branch (parameterized by $\Sigma_{1,1}$): when $h$
varies in an interval of size $\O(\varepsilon h^2)$ around $h_1$, with
$\varepsilon>0$ small enough, there is a unique and fixed
$\ell_1\in\Z$ such that
\[
  \abs{\lambda_j(h) - h f_{1,1} (\sigma_{\ell_1}(h),h^{\frac12})} =
  \O(h^2)\,.
\]
Next, we choose $C_2>C_1$ so that with $h_2:= h + C_2h^2$,
$\sigma_{\ell_1}(h_2)$ is $\O(h^3)$ close to the first crossing on the
right hand side of $\sigma_{\ell_1}(h_1)$. The exponent $3$ is not
important, any exponent $N\geq 3$ will work as well.  We have
\[
  \abs{\lambda_j(h_2) - h_2 f_{1,1}
    (\sigma_{\ell_1}(h_2),h_2^{\frac12})} = \O(h^{2})\,.\]
and since $\abs{f'_{1,1}}\geq c$, we obtain a constant $C>0$ such that
\[
  \lambda_j(h_2) \geq \lambda_j(h_1) + C h^{3/2}\,.
\]

On the right hand side of the crossing, the integer $\ell_1$, and the
increasing branch, do not longer correspond to the eigenvalue
$\lambda_j$ (this branch will now correspond to
$\lambda_{j+1}$). Instead, we have to select the branch parameterized
by $\Sigma_{1,2}$, labelled by some $\ell_2\in\Z_-$; then, as before,
with a suitable $C_3>C_2$, we have, with $h_3:= h + C_3 h^2$
\[
  \abs{\lambda_j(h_3) - h_3 f_{1,2} (\sigma_{\ell_2}(h_3),h_3^{\frac12})} =
  \O(h^2)\,,
\]
and hence, since the new branch is now decreasing,
\[
\lambda_j(h_3) \leq \lambda_j(h_2) - C h^{3/2}\,.
\]

Note that, in the above analysis, the constants $C_j$ depend on $h$,
but in a uniform way: they belong to a fixed compact interval
contained in $(0,+\infty)$. The above estimates are then uniform for
$h\leq h_0$ if $h_0$ is chosen small enough.

We now turn to the last statement of the theorem. We choose $h_2$ as
before, but with more precision: we can always select the exact
crossing point $h'$ between the semiclassical branches, \emph{i.e.}:
\[
  h' f_{1,1} (\sigma_{\ell_1}(h'),h'^{\frac12}) = h' f_{1,2}
  (\sigma_{\ell_2}(h'),h'^{\frac12})\,.
\]
This gives
\[
  \lambda_j(h') - \lambda_{j+1}(h') = \O(h^2)\,.
\]
Finally, for any value of $h$ sufficiently far from the crossing, for
instance $h''=h_1$ or $h_3$, the vertical
estimate~\eqref{equ:vertical_distance} ensures that
\[
  \lambda_{j+1}(h'') - \lambda_{j}(h'') \geq Ch^{3/2}\,,
\]
for some $C>0$, which finishes the proof of the theorem.
  
\subsection*{Acknowledgments}
The authors are deeply grateful to the CIRM, where this work was initiated. They also thank Philippe Briet and Ayman Kachmar for useful discussions. This work was conducted within the France 2030 framework programme, the Centre Henri Lebesgue  ANR-11-LABX-0020-01.

\appendix
\section{A rough Weyl estimate}\label{sec.roughWeyl}\label{sec.app}
The aim of this section is recall why \eqref{eq.roughNLhIh} holds. Thanks to the Young inequality, we have, for all $\psi\in H^1(\Omega)\,,$
	\begin{equation*}
		\mathcal{Q}_{h,\mathbf{A}}(\psi)
		\geq \dfrac{h^{2}}{2} \|\nabla \psi\|^{2}-2\|A\|_{\infty}^{2}\|\psi\|^{2}+\gamma\, h^{\frac32}\int_{\partial\Omega}|\psi|^2\mathrm{d}s\,.
	\end{equation*}
When $\gamma\geq 0$, we get that
\[\mathcal{Q}_{h,\mathbf{A}}(\psi)
\geq \dfrac{h^{2}}{2} \|\nabla \psi\|^{2}-2\|A\|_{\infty}^{2}\|\psi\|^{2}\,.\]
When $\gamma<0$, we use a classical trace theorem: there exists $C>0$ such that, for all $\varepsilon >0$,
\[\int_{\partial\Omega}|\psi|^2\mathrm{d}s \leq \varepsilon \|\nabla \psi\|^{2} +  C\varepsilon^{-1} \|\psi\|^{2} \,.\]
By choosing $\varepsilon=-\frac{\sqrt{h}}{4\gamma}>0$, we deduce that
\[\mathcal{Q}_{h,\mathbf{A}}(\psi)\geq \dfrac{h^{2}}{4} \|\nabla \psi\|^{2}-2\|A\|_{\infty}^{2}\|\psi\|^{2}-4\gamma^2C h  \|\psi\|^{2}\,.\]
In both cases, there exists $\Tilde{C}>0$ such that, for all $h\in(0,1)$ and all $\psi\in H^1(\Omega)$,
\[ \mathcal{Q}_{h,\mathbf{A}}(\psi) \geq \frac{h^{2}}{4} \|\nabla \psi\|^{2}-\tilde{C} \|\psi\|^{2}\,.\]
With the min-max principle, this shows that, for all $\lambda$,
\[
	N(\mathscr{L}_h,\lambda)  \leq N\left(- \Delta^{\mathrm{Neu}},4\, \frac{\lambda+\tilde{C}}{h^{2}}\right)\,.
\]
The conclusion follows from the Weyl asymptotics for the Neumann
Laplacian, which is the same at the main order as in the Dirichlet
case, see, for instance, \cite[Introduction]{NS05}.

\section{Spectral analysis of De Gennes operator}\label{sec.deGennes}

\begin{lemma}\label{lem.lbmun}
	For each  $\gamma \in \R, n\geq2, $ we have \[ \mu_{n}(\gamma,\sigma)>2n-3\,.\]
	In particular, we have \[\Theta^{[n-1]}(\gamma)>2n-3\,.\]
\end{lemma}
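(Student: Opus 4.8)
The plan is to compare $\mu_n(\gamma,\sigma)$ with the $(n-1)$-th eigenvalue of the full-line harmonic oscillator $H_\sigma:=-\partial_t^2+(t-\sigma)^2$ on $L^2(\R)$, whose spectrum is the set of \emph{simple} eigenvalues $\{2k-1:k\geq 1\}$; in particular $2n-3$ is exactly the $(n-1)$-th eigenvalue of $H_\sigma$. The bridge between the half-line Robin problem and this full-line operator will be the nodal structure of the $n$-th Robin eigenfunction. Concretely, I would fix $\gamma\in\R$, $\sigma\in\R$, $n\geq 2$, write $\mu_n:=\mu_n(\gamma,\sigma)$, and let $u:=u_n^{[\gamma,\sigma]}$. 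By Sturm--Liouville oscillation theory for a half-line operator with a Robin condition at $0$ and a confining potential (so that $+\infty$ is in the limit-point case), $u$ has exactly $n-1$ simple zeros $0<z_1<\dots<z_{n-1}$ in $(0,+\infty)$; this is the one external input I would cite, and it is what guarantees both the existence of these zeros and that there are no fewer than $n-1$ of them.

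Setting $z_0:=0$ and $z_n:=+\infty$, I would define, for $j=2,\dots,n$, the lobe $\phi_j:=u\,\mathbf{1}_{(z_{j-1},z_j)}$, extended by $0$ to $\R$. Since $u$ vanishes at $z_{j-1}$ and at $z_j$ (and $u,u'$ decay at $+\infty$), each $\phi_j$ is continuous and lies in the form domain $B^1(\R)=\{v\in H^1(\R):(t-\sigma)v\in L^2(\R)\}$ of $H_\sigma$; the $\phi_j$ have pairwise disjoint supports, hence are linearly independent and span an $(n-1)$-dimensional subspace $V\subset B^1(\R)$, all of whose elements vanish identically on $(-\infty,z_1)$. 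Integrating by parts on each interval $(z_{j-1},z_j)$ against the identity $-u''=(\mu_n-(t-\sigma)^2)u$, the boundary terms drop out and one gets $\int_{z_{j-1}}^{z_j}\bigl(|u'|^2+(t-\sigma)^2|u|^2\bigr)\,\mathrm{d}t=\mu_n\int_{z_{j-1}}^{z_j}|u|^2\,\mathrm{d}t$; by disjointness of supports this yields $\langle H_\sigma\psi,\psi\rangle=\mu_n\|\psi\|^2$ for every $\psi\in V$ (the bracket being understood as the quadratic form). The min--max principle then gives $2n-3=\lambda_{n-1}(H_\sigma)\leq\sup_{\psi\in V\setminus\{0\}}\langle H_\sigma\psi,\psi\rangle/\|\psi\|^2=\mu_n$.

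The only delicate point — and the step I expect to be the \emph{main obstacle} — is upgrading this to a strict inequality. I would argue by contradiction: if $\mu_n=2n-3$, then, since the spectral subspace of $H_\sigma$ associated with eigenvalues $<2n-3$ (namely $1,3,\dots,2n-5$) has dimension $n-2<n-1=\dim V$, the subspace $V$ contains a nonzero vector $\psi_0$ orthogonal to that spectral subspace; expanding $\psi_0$ in Hermite eigenfunctions and using $\langle H_\sigma\psi_0,\psi_0\rangle=(2n-3)\|\psi_0\|^2$ forces $\psi_0\in\ker(H_\sigma-(2n-3))$, i.e.\ $\psi_0$ is a nonzero multiple of the $(n-1)$-th Hermite eigenfunction of $H_\sigma$. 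But $\psi_0\in V$ vanishes on the nonempty open set $(-\infty,z_1)$, while a Hermite eigenfunction (a polynomial times a Gaussian) is real-analytic and not identically zero — a contradiction. Hence $\mu_n(\gamma,\sigma)>2n-3$ for every $\sigma\in\R$.

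Finally, for the ``in particular'' statement I would invoke Proposition~\ref{prop.dispersion}: since $\mu_n(\gamma,\cdot)>2n-3$ pointwise while $\mu_n(\gamma,\sigma)\to+\infty$ as $\sigma\to-\infty$ and $\mu_n(\gamma,\sigma)\to 2n-1>2n-3$ as $\sigma\to+\infty$, the infimum $\Theta^{[n-1]}(\gamma)=\inf_{\sigma}\mu_n(\gamma,\sigma)$ is attained at a finite point, hence $\Theta^{[n-1]}(\gamma)>2n-3$. In this scheme the integration by parts and the min--max bound are routine; the care needed is in citing the correct oscillation count for the Robin half-line operator and in the unique-continuation (real-analyticity) argument that secures strictness.
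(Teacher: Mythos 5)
Your proof is correct, but it takes a genuinely different route from the paper's. The paper also starts from the Sturm oscillation count ($u_n^{[\gamma,\sigma]}$ has exactly $n-1$ zeros on $\R_+$), but it then \emph{translates} the eigenfunction by its first zero $z_1$: the function $v(t):=u_n^{[\gamma,\sigma]}(t+z_1)$ satisfies the Dirichlet condition at $0$ and the equation $H^{\mathrm{Dir}}[\sigma-z_1]v=\mu_n(\gamma,\sigma)v$; counting the remaining $n-2$ zeros and invoking Sturm again identifies $v$ as the $(n-1)$-th Dirichlet eigenfunction, so that $\mu_n(\gamma,\sigma)=\mu_{n-1}^{\mathrm{Dir}}(\sigma-z_1)$, and the conclusion follows from the classical strict lower bound $\mu_{n-1}^{\mathrm{Dir}}(\cdot)>2(n-1)-1=2n-3$ for the Dirichlet de Gennes operator. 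Your version skips the passage through the Dirichlet model and compares directly with the full-line harmonic oscillator via the lobes on $(z_1,z_2),\dots,(z_{n-1},+\infty)$, with strictness obtained by unique continuation (real-analyticity of Hermite functions). The two are close in spirit --- the lower bound for the Dirichlet de Gennes eigenvalues that the paper cites is itself usually proved by extending Dirichlet test functions by zero and comparing with $H_\sigma$, which is exactly the mechanism you use --- but your argument is more self-contained, needing only the oscillation count, the min--max principle, and analyticity, whereas the paper's is shorter at the cost of importing the Dirichlet bound. One small point worth making explicit in your write-up: the argument $\langle H_\sigma\psi_0,\psi_0\rangle=(2n-3)\|\psi_0\|^2$ with $\psi_0$ orthogonal to the first $n-2$ Hermite modes forcing $\psi_0$ to be an eigenfunction uses that $\psi_0$ lies in the form domain $B^1(\R)$ and that the Hermite eigenvalues above $2n-3$ are \emph{strictly} larger; you state both but it is worth flagging them as the load-bearing facts in the strictness step.
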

\begin{proof}
	From the Sturm–Liouville theory,  $u_{n}^{[\gamma,\sigma]}$ admits $n-1$ zeros on $\R_{+}\,.$ We denote by $z_{n,1}(\gamma,\sigma) $ its first zero. We consider the function 
	\[v _{n}^{[\gamma,\sigma]}(t)=u_{n}^{[\gamma,\sigma]} ( t+ z_{n,1}(\gamma,\sigma))\,,\]
	which satisfies the Dirichlet condition at $0$ and 
	\[H^{\mathrm{Dir}}[\sigma-z_{n,1}(\gamma,\sigma)] v
          _{n}^{[\gamma,\sigma]}= \mu_{n}(\gamma,\sigma)v
          _{n}^{[\gamma,\sigma]}\,, \] where
        $H^{\mathrm{Dir}}[\sigma]$ is the Dirichlet realization of
        $-\partial_\tau^2+(\sigma-\tau)^2$ on $L^2(\R_+)$.  The
        function $v _{n}^{[\gamma,\sigma]}$ has exactly $n-2$ zeros on
        $\R_{+}\,.$ By the Sturm's oscillation theorem,
        $v_{n}^{[\gamma,\sigma]}$ is the $(n-1)$-th eigenfunction of
        $H^{\mathrm{Dir}}[\sigma-z_{n,1}(\gamma,\sigma)]\,.$ Therefore
        we have
	\[\mu_{n}(\gamma,\sigma)= \mu_{n-1}^{\mathrm{Dir}}(\sigma-z_{n,1}(\gamma,\sigma))\,.\]
	Moreover, by monotonicity of the Dirichlet problem, for all $\sigma \in \R$, \[\mu_{n-1}^{\mathrm{Dir}}(\sigma) > 2n-3\,. \]
\end{proof}

The following proposition is obtained by adapting the proof of \cite[Theorem II.2]{K06}.
\begin{proposition}
	\label{Prop.cri}
	Let $n\geq 1$. If $\sigma$	is a critical point of $\mu_n(\gamma,\cdot)$, we have
	\[\mu_n(\gamma,\sigma)=\sigma^2-\gamma^2\,.\]
\end{proposition}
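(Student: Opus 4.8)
The plan is to combine two identities for the normalized eigenpair. Write $\mu=\mu_n(\gamma,\sigma)$ and $u=u_n^{[\gamma,\sigma]}$, so that $-u''+(t-\sigma)^2u=\mu u$ on $\R_+$, $u'(0)=\gamma u(0)$, and $\|u\|_{L^2(\R_+)}=1$. As a preliminary I would record that $u$ and each of its derivatives decay faster than any exponential as $t\to+\infty$: indeed $u$ solves $u''=((t-\sigma)^2-\mu)u$ and lies in the recessive (decaying) subspace at infinity, so all the boundary terms at $+\infty$ appearing in the integrations by parts below vanish. (One may also restrict to $\gamma\in\R$: by Proposition~\ref{prop.dispersion} the Dirichlet curve $\mu_n(+\infty,\cdot)$ is strictly monotone, hence has no critical point, so the statement is vacuous there.)

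The first identity is Feynman--Hellmann. Since the family $(H[\gamma,\sigma])_\sigma$ is analytic of type (B) with $\sigma$-independent form domain $B^1(\R_+)$, as recalled in Section~\ref{sec:de-gennes-operator}, one has
\[
\partial_\sigma\mu_n(\gamma,\sigma)=\big\langle(\partial_\sigma H[\gamma,\sigma])u,u\big\rangle=-2\int_0^\infty(t-\sigma)\,|u(t)|^2\,\dd t\,,
\]
with no additional boundary contribution. Therefore, if $\sigma$ is a critical point of $\mu_n(\gamma,\cdot)$, then $\int_0^\infty(t-\sigma)|u|^2\,\dd t=0$.

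The second identity is of virial (Pohozaev) type: I would multiply the eigenvalue equation by $u'$ and integrate over $\R_+$. The term $\int(-u'')u'$ contributes $\tfrac12 u'(0)^2$; writing $uu'=\tfrac12(u^2)'$ and integrating by parts, $\int(t-\sigma)^2uu'$ contributes $-\tfrac12\sigma^2u(0)^2-\int_0^\infty(t-\sigma)|u|^2\,\dd t$; and $-\mu\int uu'$ contributes $\tfrac{\mu}{2}u(0)^2$. Inserting the Robin condition $u'(0)=\gamma u(0)$ gives
\[
\big(\mu+\gamma^2-\sigma^2\big)\,u(0)^2=2\int_0^\infty(t-\sigma)\,|u(t)|^2\,\dd t\,.
\]

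To conclude: at a critical point the right-hand side vanishes by the first identity, so $(\mu+\gamma^2-\sigma^2)u(0)^2=0$; and $u(0)\neq 0$, since $u(0)=0$ would force $u'(0)=\gamma u(0)=0$ as well, hence $u\equiv 0$ by the Cauchy--Lipschitz theorem, contradicting $\|u\|=1$ (this is also why one can impose $u_n^{[\gamma,\sigma]}(0)>0$). Therefore $\mu_n(\gamma,\sigma)=\sigma^2-\gamma^2$. The only point requiring care is the justification of the vanishing boundary terms at $+\infty$ and of the Feynman--Hellmann formula without boundary correction; both are routine given the super-exponential decay of the de Gennes eigenfunctions and the type (B) analyticity already used in the paper, so I do not expect a genuine obstacle here.
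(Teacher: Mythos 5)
Your proof is correct, and it is essentially the approach the paper outsources to \cite[Theorem II.2]{K06}: one combines the Feynman--Hellmann formula (valid here because the family is analytic of type~(B) with a $\sigma$-independent form domain, and the eigenvalues are simple) with a virial-type identity obtained by pairing the eigenvalue equation with $u'$; your identity $\bigl(\mu+\gamma^2-\sigma^2\bigr)u(0)^2=2\int_0^\infty(t-\sigma)|u|^2\,\dd t$ is in fact exactly the $p=1$ case of the identities \eqref{ident.1}--\eqref{ident.2} that the paper uses later in the proof of Lemma~\ref{prop.Mom}. Your side remarks (super-exponential decay of the eigenfunction killing boundary terms at $+\infty$, $u(0)\neq 0$ by Cauchy--Lipschitz, and the vacuousness of the statement in the Dirichlet case by strict monotonicity of $\mu_n(+\infty,\cdot)$) are all correct and complete the argument.
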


\begin{lemma}
	\label{prop.Mom}
	When $\gamma \in \mathbf{R}$, we have the following relations
	\begin{equation}
		\label{Mom.1}
		\int_{0}^{+\infty} (t-\xi_{n-1}(\gamma)) |u^{[\gamma,\xi_{n-1}(\gamma)]}_n(t)|^2    \dd t=0\,,
	\end{equation}
	\begin{equation}
		\label{Mom.3}
		\int_{0}^{+\infty} (t-\xi_{n-1}(\gamma))^3 |u^{[\gamma,\xi_{n-1}(\gamma)]}_n(t)|^2    \dd t=\frac{1}{6}[1+2\gamma\xi_{n-1}(\gamma)] u^{[\gamma,\xi_{n-1}(\gamma)]}_n(0)\,.
	\end{equation}
\end{lemma}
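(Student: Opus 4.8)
The plan is to obtain both identities from virial-type integration-by-parts relations satisfied by the eigenpair, combined with the critical-value formula of Proposition~\ref{Prop.cri}. Throughout, write $\xi=\xi_{n-1}(\gamma)$, $u=u_n^{[\gamma,\xi]}$ and $\mu=\mu_n(\gamma,\xi)$, so that $-u''+(t-\xi)^2u=\mu u$ on $\R_+$ with $u'(0)=\gamma u(0)$; set $M_k=\int_0^{+\infty}(t-\xi)^k|u(t)|^2\,\dd t$, so $M_0=1$, and recall $u$ is real with $u(0)>0$. Since $H[\gamma,\xi]$ has compact resolvent and confining potential, $u$ decays super-exponentially (Agmon estimates for the de Gennes operator), so in every integration by parts below the boundary term at $t=+\infty$ vanishes and only the $t=0$ contribution survives, where one substitutes $u'(0)=\gamma u(0)$.

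For \eqref{Mom.1} I would invoke the Feynman--Hellmann formula: the Robin condition $u'(0)=\gamma u(0)$ does not involve $\sigma$, so the form domain is $\sigma$-independent and (as recalled in Section~\ref{sec:de-gennes-operator}) the family $(H[\gamma,\sigma])_\sigma$ is analytic of type (B); hence $\partial_\sigma\mu_n(\gamma,\sigma)=\langle(\partial_\sigma H)u_n,u_n\rangle=-2\int_0^{+\infty}(t-\sigma)|u_n^{[\gamma,\sigma]}(t)|^2\,\dd t$. Evaluating at the critical point $\sigma=\xi$ gives $M_1=0$, which is \eqref{Mom.1}. (As a cross-check, multiplying the eigenvalue equation by $u'$ and integrating gives $M_1=\tfrac12(\gamma^2-\xi^2+\mu)|u(0)|^2$, which also vanishes once Proposition~\ref{Prop.cri} is used.)

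For \eqref{Mom.3} I would combine two virial identities. First, multiplying $-u''+(t-\xi)^2u=\mu u$ by $(t-\xi)u$ and integrating over $\R_+$, and using $M_1=0$, produces a relation of the form $N_1+M_3=\bigl(\xi\gamma+\tfrac12\bigr)|u(0)|^2$, where $N_1:=\int_0^{+\infty}(t-\xi)|u'(t)|^2\,\dd t$. Second, multiplying the same equation by $2(t-\xi)^2u'$ and integrating (integrating the $u''u'$ term via $-(t-\xi)^2((u')^2)'$ and the $(t-\xi)^2uu'$, $(t-\xi)^4uu'$ terms by parts, and again using $M_1=0$) yields $2N_1-4M_3=\bigl(\xi^4-\xi^2\gamma^2-\mu\xi^2\bigr)|u(0)|^2$. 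Eliminating $N_1$ between these two relations gives $6M_3=\bigl(1+2\gamma\xi-\xi^4+\xi^2\gamma^2+\mu\xi^2\bigr)|u(0)|^2$, and inserting $\mu=\xi^2-\gamma^2$ from Proposition~\ref{Prop.cri} collapses the pure powers of $\xi$, leaving $6M_3=(1+2\gamma\xi)|u(0)|^2$, which is \eqref{Mom.3}.

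The calculations themselves are routine; the points that need care are the bookkeeping of the $t=0$ boundary terms with the repeated substitution $u'(0)=\gamma u(0)$, and the fact — which is really the only subtle step — that the combination of virial identities is not an identity in $(\xi,\gamma)$ on its own and simplifies to the stated form only after the critical-value relation $\mu=\xi^2-\gamma^2$ of Proposition~\ref{Prop.cri} is plugged in. It is also worth recording explicitly the super-exponential decay of $u$ that legitimizes discarding the boundary terms at infinity in each integration by parts.
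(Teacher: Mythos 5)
Your argument is correct, and it is a close cousin of the paper's proof rather than an identical one. The paper derives both moment identities from a single algebraic device: for any polynomial $p$, set $v=2p\,u'-p'u$ so that $Lv=(p'''-4p'W-4p(t-\xi))u$ with $W=(t-\xi)^2-\Theta$; pairing with $u$ and integrating gives $\int u\,Lv=(v'(0)-\gamma v(0))u(0)$, and the two moments come from $p=1$ and $p=(t-\xi)^2$. The critical-value relation $\Theta^{[n-1]}(\gamma)=\xi^2-\gamma^2$ of Proposition~\ref{Prop.cri} is used to simplify the boundary term in both cases. Your route for \eqref{Mom.1} is actually shorter and more conceptual: Feynman--Hellmann at the critical point $\sigma=\xi$ gives $\partial_\sigma\mu_n=-2M_1=0$ directly, without ever invoking Proposition~\ref{Prop.cri}, whereas the paper's $p=1$ computation produces $-4M_1=2(\xi^2-\gamma^2-\Theta)\,|u(0)|^2$ and then needs the critical-value relation to conclude. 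For \eqref{Mom.3} your two raw virial relations (obtained by multiplying by $(t-\xi)u$ and by $2(t-\xi)^2u'$) and the elimination of $N_1=\int(t-\xi)|u'|^2$ are in effect an unpacked version of the paper's single identity with $p=(t-\xi)^2$; you end up with the same intermediate combination $6M_3=\bigl(1+2\gamma\xi-\xi^4+\xi^2\gamma^2+\Theta\xi^2\bigr)|u(0)|^2$, which collapses to $(1+2\gamma\xi)|u(0)|^2$ once $\Theta=\xi^2-\gamma^2$ is inserted, exactly as in the paper. You are also right to flag the role of super-exponential decay in killing boundary terms at infinity and the indispensable use of Proposition~\ref{Prop.cri} in the final simplification; the paper leaves both implicit. (One remark worth recording: as written, the right-hand side of \eqref{Mom.3} in the statement has $u_n^{[\gamma,\xi_{n-1}(\gamma)]}(0)$ rather than its square; your derivation, like the paper's own computation, yields $|u_n^{[\gamma,\xi_{n-1}(\gamma)]}(0)|^2$.)
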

\begin{proof}
	We let 
	\[u^{[\gamma]}_n=u^{[\gamma,\xi_{n-1}(\gamma)]}_n\,.\]
	Let us consider the differential operator:
	\[
	L=-\partial_t^2+(t-\xi_{n-1}(\gamma))^2-\Theta^{[n-1]}(\gamma)\,.
	\]
	Note that for any polynomial $p$, we have:
	\begin{equation}
		\label{ident.1}
		Lv=\left(p^{(3)}-4\left[(t-\xi_{n-1} (\gamma)^2-\Theta(\gamma)^{[n-1]} \right]p^{\prime}-4(t-\xi_{n-1}(\gamma)) p\right) u^{[\gamma]}_n, 
	\end{equation}
	and
	\begin{equation}
		\label{ident.2}
		\int_0^{+\infty} u^{[\gamma]}_n(t)(L v)(t) \mathrm{d} t=\int_0^{+\infty} L u^{[\gamma]}_n(t) v(t) \mathrm{d} t+ \left(v^{\prime}(0)-\gamma v(0)\right) u^{[\gamma]}_n(0) ,
	\end{equation}
	for  $v=2 p [u^{[\gamma]}_n]^{\prime}-p^{\prime}  u^{[\gamma]}_n.$ Taking $p=1$, we get
	\[
	-4 \int_0^{+\infty}(t-\xi_{n-1}(\gamma))\left|u^{[\gamma]}_n(t)\right|^2 \mathrm{d} t=2\left(\xi_{n-1}(\gamma)^2-\gamma^2-\Theta^{[n-1]}(\gamma)\right)\left|u^{[\gamma]}_n(0)\right|^2\,.
	\]
	Recalling Proposition \ref{Prop.cri}, the above formula proves \eqref{Mom.1}.
	To prove \eqref{Mom.3}, we take $p=(t-\xi_{n-1}(\gamma))^2$. Then, we have
	\[
	v^{\prime}(0)-\gamma v(0)=-2\left(2\gamma \xi_{n-1}(\gamma) +1\right)u^{[\gamma]}_n(0)\,. 
	\]
	We get now from \eqref{ident.1} and \eqref{ident.2}
	\[
	-12 \int_0^{+\infty}(t-\xi_{n-1}(\gamma))^3\left|u^{[\gamma]}_n(t)\right|^2 \mathrm{~d} t=-2\left(2\gamma \xi_{n-1}(\gamma) +1\right) |u^{[\gamma]}_n(0)|^2 \,.
	\]
\end{proof}

\begin{lemma}\label{eq.Cjxi}
	We have 
	\[\begin{split}
		C_j(\xi_{j-1}(\gamma))&=\frac12 \left(u_j^{[\gamma]}(0)\right)^2-\int_0^{+\infty}(t-\xi_{j-1}(\gamma))^3\left(u^{[\gamma]}_j(t)\right)^2\mathrm{d}t\\
		&=\frac13(1-\gamma\xi_{j-1}(\gamma))\left(u_j^{[\gamma]}(0)\right)^2\,,
	\end{split}\]
	where $C_j$ is defined in \eqref{eq.Cj}.
\end{lemma}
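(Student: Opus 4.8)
The plan is a direct computation of the quadratic form in~\eqref{eq.Cj} evaluated at $\sigma=\xi_{j-1}(\gamma)$. Write $\xi:=\xi_{j-1}(\gamma)$ and $u:=u_j^{[\gamma]}=u_j^{[\gamma,\xi]}$; recall that $u$ is real-valued (the de Gennes--Robin operator is a real Sturm--Liouville operator) and, being an eigenfunction, decays exponentially at $+\infty$. First I would split the bracket $(\tau-\xi)\tau^2-\partial_\tau-2\tau(\xi-\tau)^2$ into its three summands. The derivative term is integrated by parts: since $u$ is real and decays,
\[
  -\big\langle \partial_\tau u,\,u\big\rangle_{L^2(\R_+)} = -\int_0^{+\infty}u'(t)\,u(t)\,\mathrm{d}t = -\tfrac12\Big[u(t)^2\Big]_0^{+\infty} = \tfrac12\,u(0)^2\,,
\]
while the remaining two terms combine into $\int_0^{+\infty}P(t)\,u(t)^2\,\mathrm{d}t$ with the polynomial $P(t)=(t-\xi)t^2-2t(\xi-t)^2$.

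The next step is to simplify $P$. Factoring, $P(t)=-t(t-\xi)(t-2\xi)$, so the affine change of variable $s=t-\xi$ gives $P=-(s+\xi)\,s\,(s-\xi)=-s^3+\xi^2 s$, that is $P(t)=-(t-\xi)^3+\xi^2(t-\xi)$. Hence
\[
  C_j(\xi) = \tfrac12\,u(0)^2 - \int_0^{+\infty}(t-\xi)^3\,u(t)^2\,\mathrm{d}t + \xi^2\int_0^{+\infty}(t-\xi)\,u(t)^2\,\mathrm{d}t\,.
\]
By the first moment identity~\eqref{Mom.1} of Lemma~\ref{prop.Mom}, the last integral vanishes, which yields the first claimed equality. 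For the second one, I would insert the third moment identity~\eqref{Mom.3}, namely $\int_0^{+\infty}(t-\xi)^3u(t)^2\,\mathrm{d}t=\tfrac16(1+2\gamma\xi)u(0)^2$, and compute $\tfrac12-\tfrac16(1+2\gamma\xi)=\tfrac13(1-\gamma\xi)$; since $u(0)=u_j^{[\gamma]}(0)>0$ by the chosen normalization, $u(0)^2=\big(u_j^{[\gamma]}(0)\big)^2$ and we are done.

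I do not expect any real obstacle: once the moment identities of Lemma~\ref{prop.Mom} are at hand, everything reduces to elementary polynomial algebra and one integration by parts. The only points deserving a word are the reality of $u$ (so that the Hermitian pairing is the plain real integral) and its decay at $+\infty$ (so that the boundary term from the integration by parts contributes only at $0$). The substance of the lemma is precisely that, at the critical frequency $\xi_{j-1}(\gamma)$, the a priori unwieldy combination $(\tau-\xi)\tau^2-\partial_\tau-2\tau(\xi-\tau)^2$ collapses to $\tfrac13\big(1-\gamma\,\xi_{j-1}(\gamma)\big)$ times $u_j^{[\gamma]}(0)^2$.
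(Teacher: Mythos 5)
Your proof is correct and follows essentially the same route as the paper's (very terse) proof: the integration by parts giving $\tfrac12 u(0)^2$, the polynomial simplification of the multiplicative part (your factorization $P(t)=-t(t-\xi)(t-2\xi)=-(t-\xi)^3+\xi^2(t-\xi)$ is the sign-flipped form of the identity $(\sigma-t)t^2+2t(\sigma-t)^2=(t-\sigma)^3-\sigma^2(t-\sigma)$ that the paper records), and then the two moment identities of Lemma~\ref{prop.Mom} to discard the linear moment and evaluate the cubic one. The arithmetic $\tfrac12-\tfrac16(1+2\gamma\xi)=\tfrac13(1-\gamma\xi)$ checks out.
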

\begin{proof}
	We write
	\[(\sigma-t) t^2+2t(\sigma-t)^2=(t-\sigma)^3-\sigma^2(t-\sigma)\,.\]
	We take $\sigma=\xi_{j-1}(\gamma)$. The conclusion follows from Lemma \ref{prop.Mom}.
\end{proof}
\begin{proposition}\label{prop.sgnCj}
	Let us fix $j\geq1.$ When $\gamma \in \R$, there exists $\gamma_{0}^{[j-1]}>0$, such that, $ C_j(\xi_{j-1}(\gamma))$ is positive if $\gamma< \gamma_{0}^{[j-1]}$ and negative if  $\gamma> \gamma_{0}^{[j-1]}$.
\end{proposition}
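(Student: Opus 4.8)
The plan is to reduce the statement to an elementary study of the smooth function $g(\gamma):=1-\gamma\,\xi_{j-1}(\gamma)$. Indeed, by Lemma~\ref{eq.Cjxi} one has
\[
  C_j(\xi_{j-1}(\gamma))=\tfrac13\bigl(1-\gamma\,\xi_{j-1}(\gamma)\bigr)\bigl(u_j^{[\gamma]}(0)\bigr)^2\,,
\]
and $u_j^{[\gamma]}(0)\neq0$ (otherwise the eigenfunction would solve $u(0)=u'(0)=0$, hence vanish identically by Cauchy--Lipschitz), so $u_j^{[\gamma]}(0)>0$ by our normalization. Thus $C_j(\xi_{j-1}(\gamma))$ has the same sign as $g(\gamma)$, and it suffices to exhibit $\gamma_0^{[j-1]}>0$ such that $g>0$ on $(-\infty,\gamma_0^{[j-1]})$ and $g<0$ on $(\gamma_0^{[j-1]},+\infty)$.

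First I would gather the needed facts about $\xi_{j-1}$. From the Dauge--Helffer formula~\eqref{equ:partial2sigma(mu)} and the non-degeneracy of the minimum (Proposition~\ref{prop.dispersion}) one gets $\xi_{j-1}(\gamma)>0$ for all $\gamma\in\R$; in particular $g(\gamma)\geq1>0$ whenever $\gamma\leq0$. Next, Proposition~\ref{Prop.cri} evaluated at $\sigma=\xi_{j-1}(\gamma)$ yields the identity $\xi_{j-1}(\gamma)^2=\gamma^2+\Theta^{[j-1]}(\gamma)$; at $\gamma=0$ this reads $\Theta^{[j-1]}(0)=\xi_{j-1}(0)^2>0$, and since the quadratic form of $H[\gamma,\sigma]$, hence $\mu_j(\gamma,\cdot)$ and $\Theta^{[j-1]}$, is non-decreasing in $\gamma$, we get $\Theta^{[j-1]}(\gamma)>0$ for all $\gamma\geq0$. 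The displayed identity then gives $\xi_{j-1}(\gamma)>\gamma$ for $\gamma>0$, whence $g(\gamma)<1-\gamma^2\to-\infty$. Combined with $g(0)=1$, the intermediate value theorem produces a zero of $g$ in $(0,1)$.

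It remains to prove uniqueness of this zero, for which I would show that $g$ is strictly decreasing on $(0,+\infty)$. Differentiability of $\xi_{j-1}$ in $\gamma$ follows from the analyticity of $(\gamma,\sigma)\mapsto\mu_j(\gamma,\sigma)$ (type~(B) analyticity of $H[\gamma,\sigma]$) and the implicit function theorem applied to $\partial_\sigma\mu_j=0$, using $\partial_\sigma^2\mu_j(\gamma,\xi_{j-1}(\gamma))>0$. Differentiating $\xi_{j-1}(\gamma)^2=\gamma^2+\Theta^{[j-1]}(\gamma)$ and inserting the envelope/Feynman--Hellmann identity $\frac{d}{d\gamma}\Theta^{[j-1]}(\gamma)=\partial_\gamma\mu_j(\gamma,\xi_{j-1}(\gamma))=\bigl|u_j^{[\gamma]}(0)\bigr|^2$ gives
\[
  \xi_{j-1}'(\gamma)=\frac{2\gamma+\bigl|u_j^{[\gamma]}(0)\bigr|^2}{2\,\xi_{j-1}(\gamma)}>0\qquad\text{for }\gamma>0\,,
\]
hence $g'(\gamma)=-\xi_{j-1}(\gamma)-\gamma\,\xi_{j-1}'(\gamma)<0$ for all $\gamma>0$. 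Together with $g>0$ on $(-\infty,0]$ and $g\to-\infty$, this shows $g$ has exactly one zero $\gamma_0^{[j-1]}\in(0,1)$, positive to its left and negative to its right, which is the claim.

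The computation is short once Lemma~\ref{eq.Cjxi} is in hand; the only point that needs care is the monotonicity of $\xi_{j-1}$, i.e. having the correct sign in the Feynman--Hellmann derivative of $\Theta^{[j-1]}$ (positive, because the Dirichlet realization $\gamma=+\infty$ lies above all the Robin ones) together with $\xi_{j-1}>0$, which comes precisely from the non-degeneracy of the minimum of the dispersion curve. For $j=1$ this recovers, and corrects, the sign discussion of~\cite{K06}; see Remark~\ref{rem16}.
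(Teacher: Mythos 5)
Your proof is correct and takes essentially the same route as the paper: both reduce to the sign of $1-\gamma\,\xi_{j-1}(\gamma)=1-\gamma\sqrt{\gamma^2+\Theta^{[j-1]}(\gamma)}$ via Lemma~\ref{eq.Cjxi} and Proposition~\ref{Prop.cri}, show it is positive on $(-\infty,0]$, tends to $-\infty$, and is strictly decreasing on $(0,+\infty)$, whence a unique sign change. The one small improvement is that where the paper invokes~\cite[Section B]{K06} (adapted to $j\geq1$) for the sign of $(\Theta^{[j-1]})'$, you derive it self-containedly from Feynman--Hellmann, $\frac{d}{d\gamma}\Theta^{[j-1]}(\gamma)=\bigl|u_j^{[\gamma]}(0)\bigr|^2>0$; you also correctly get $C_j(\xi_{j-1}(\gamma))>0$ for $\gamma\le0$, fixing a sign typo in the paper's proof (the published proof writes $<0$, which contradicts the statement being proved).
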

\begin{proof}
	We notice that, for $\gamma\leq0$, we get $C_j(\xi_{j-1}(\gamma))<0.$
	
	Now, for $\gamma>0.$ From Proposition \ref{Prop.cri}, we can rewrite $C_j(\xi_{j-1}(\gamma))$ as
	\[  C_j(\xi_{j-1}(\gamma))=\frac13\left(1-\gamma \sqrt{\gamma^2+\Theta^{[j-1]}(\gamma)}\right)\left(u_j^{[\gamma]}(0)\right)^2\,,   \]
	Since $\left(u_j^{[\gamma]}(0)\right)^2>0 $, then, to study the sign of $C_j(\xi_{j-1}(\gamma))$ it is sufficient to study the sign of the function $f$ defined by $f(\gamma)=1-\gamma \sqrt{\gamma^2+\Theta^{[j-1]}(\gamma)}.$ We have 
	\[ f^\prime(\gamma)=-\sqrt{\gamma^2+\Theta^{[j-1]}(\gamma)}-\frac\gamma2  \dfrac{\left(\Theta^{[j-1]}(\gamma)\right)^\prime +2\gamma }{\sqrt{\gamma^2+\Theta^{[j-1]}(\gamma)}} \,. \]
	We can use \cite[Section B]{K06} (which can be adapted to $j\geq1$) to deduce that $ f^\prime(\gamma)<0.$ Therefore, $f$ is increasing on $[0,+\infty[.$
	
	Let us notice now that $f(0)=1$ and $\lim_{\gamma \to +\infty} f(\gamma)=-\infty.$ This establishes the existence of a unique zero of $f(\gamma)$, denoted by $\gamma_{0}^{[j-1]}$.
\end{proof}

\begin{lemma}\label{lem.sandwich}
	Let $\alpha\in\R$ and $\beta\in\N$. Consider the interval $[a,b]$. We consider $\Pi\psi=(\langle\psi, u_{j}^{[\gamma,\sigma]}\rangle)_{1\leq j\leq n}$, where $n$ is the number of dispersion curves $\mu_{j}(\gamma,\sigma)$ taking values in $[a,b]$ (see the discussion at the beginning of Section \ref{sec.K}). We consider $\hat K$ a neighborhood of $\tilde K$.

	There exists $C_{\alpha,\beta}>0$ such that for all $z\in[a,b]$ and all $\sigma\in \hat K$, the following holds. For all $v\in L^2(\R_+)$ such that $\langle t\rangle^\alpha v\in L^2(\R_+)$, we have
	\[\|\langle t\rangle^{-\alpha}\partial_\sigma^\beta(H[\gamma,\sigma]-z)^{-1}(\Pi^*\Pi)^{\perp}(\langle t\rangle^{\alpha}v)\|\leq C_{\alpha,\beta}\|v\|\,.\]

\end{lemma}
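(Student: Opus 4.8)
Write $R(z,\sigma):=(H[\gamma,\sigma]-z)^{-1}(\Pi^{*}\Pi)^{\perp}$ for the reduced resolvent; it is well defined because on the range of $(\Pi^{*}\Pi)^{\perp}$ the operator $H[\gamma,\sigma]$ has only the eigenvalues $\mu_{k}(\gamma,\sigma)$, $k\geq n+1$, and it maps $L^{2}(\R_{+})$ into $\mathrm{Dom}(H[\gamma,\sigma])\subset B^{2}(\R_{+})$. For $m\in\R$ set $\langle t\rangle^{m}L^{2}(\R_{+}):=\{f:\langle t\rangle^{-m}f\in L^{2}(\R_{+})\}$ with its natural norm; the claim is exactly that $\partial_{\sigma}^{\beta}R(z,\sigma)$ is bounded on $\langle t\rangle^{\alpha}L^{2}(\R_{+})$, uniformly in $z\in[a,b]$ and $\sigma\in\hat K$. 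Two uniform facts will be used. First, the \emph{spectral gap}: by the choice of $n$ (Corollary~\ref{coro:dispersion}), $\mu_{n+1}(\gamma,\cdot)$ takes no value in $[a,b]$ and $\inf_{\sigma\in\R}\mu_{n+1}(\gamma,\sigma)>b$ (this is $\Theta^{[n]}(\gamma)>2n-1>b$ for $\gamma\in\R$ by~\eqref{eq.lubtheta}, and even larger in the Dirichlet case), so that $\|R(z,\sigma)\|_{L^{2}\to L^{2}}\leq c_{0}^{-1}$ with $c_{0}:=\inf_{\sigma}\mu_{n+1}(\gamma,\sigma)-b>0$. Second, recalling that $H[\gamma,\sigma]$ is analytic of type (B) with simple eigenvalues (Section~\ref{sec:de-gennes-operator}), $\sigma\mapsto u_{j}^{[\gamma,\sigma]}$ is $\mathscr{C}^{\infty}$ with values in $B^{m}(\R_{+})$ for every $m$, with $\hat K$-uniform bounds (Agmon estimates propagate the Gaussian decay to the $\sigma$-derivatives); hence, for every $j$, the finite-rank operator $\partial_{\sigma}^{j}(\Pi^{*}\Pi)$ is bounded from $\langle t\rangle^{m}L^{2}$ into $\langle t\rangle^{m'}L^{2}$ for all $m,m'$, uniformly.

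The heart is the case $\beta=0$, which I would prove by an energy estimate. Treat first $\alpha=-p$ with $p\geq 0$ (the gain-of-decay direction). Given $v\in L^{2}$, put $g:=\langle t\rangle^{-p}v\in L^{2}$ and $u:=R(z,\sigma)g$, so that $(H[\gamma,\sigma]-z)u=(\Pi^{*}\Pi)^{\perp}g$, $u\in\mathrm{Dom}(H[\gamma,\sigma])$, and $\|u\|_{L^{2}}\leq c_{0}^{-1}\|v\|$. Testing this equation against $\langle t\rangle^{2p}u$ (after the customary truncation of the weight, the bounds being uniform in the truncation) and integrating by parts in $-\partial_{t}^{2}$ produces the boundary term $\gamma\abs{u(0)}^{2}$; absorbing it by a trace inequality when $\gamma<0$, and absorbing the cross term coming from $\partial_{t}\langle t\rangle^{2p}=O(\langle t\rangle^{2p-1})$, one gets
\[
  \|\langle t\rangle^{p}(t-\sigma)u\|^{2}\leq C\big(\|v\|\,\|\langle t\rangle^{p}u\|+\|\langle t\rangle^{p}u\|^{2}\big)\,,
\]
where the right-hand side uses $\abs{\langle(\Pi^{*}\Pi)^{\perp}g,\langle t\rangle^{2p}u\rangle}\leq C\|v\|\,\|\langle t\rangle^{p}u\|$, for which the rapid decay of the $u_{j}^{[\gamma,\sigma]}$ handles the contribution of $\Pi^{*}\Pi g$. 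Separately, splitting $\R_{+}$ along $\{\abs{t-\sigma}=\rho\}$ and using $\langle t\rangle^{2p}\leq\rho^{-2}\langle t\rangle^{2p}(t-\sigma)^{2}$ on $\{\abs{t-\sigma}>\rho\}$ while $\langle t\rangle^{2p}$ is bounded on the bounded set $\{\abs{t-\sigma}\leq\rho\}$ (uniformly for $\sigma\in\hat K$) gives $\|\langle t\rangle^{p}u\|^{2}\leq\rho^{-2}\|\langle t\rangle^{p}(t-\sigma)u\|^{2}+C_{\rho}\|u\|_{L^{2}}^{2}$. Inserting the first inequality into the second, taking $\rho$ large and using Young's inequality yields $\|\langle t\rangle^{p}u\|\leq C\|v\|$, and then $\|\langle t\rangle^{p+1}u\|\leq C\|v\|$ since $\langle t\rangle^{p+1}\lesssim\langle t\rangle^{p}\abs{t-\sigma}+\langle t\rangle^{p}$ for $\sigma\in\hat K$. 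Thus $R(z,\sigma)$ maps $\langle t\rangle^{-p}L^{2}$ into $\langle t\rangle^{-p-1}L^{2}$, uniformly. Since $z$ is real and $\Pi^{*}\Pi$ commutes with $H[\gamma,\sigma]$, $R(z,\sigma)$ is self-adjoint, so by duality it also maps $\langle t\rangle^{m}L^{2}$ into $\langle t\rangle^{m-1}L^{2}$ for every $m\geq 1$, and complex interpolation covers $0\leq m\leq 1$; hence, for all $m\in\R$, $R(z,\sigma):\langle t\rangle^{m}L^{2}\to\langle t\rangle^{m-1}L^{2}$ is bounded, uniformly in $(z,\sigma)$. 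Taking $m=\alpha$ settles the case $\beta=0$.

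For $\beta\geq 1$ I would differentiate the Grushin relations. From $(H[\gamma,\sigma]-z)R=(\Pi^{*}\Pi)^{\perp}$, $R\,\Pi^{*}\Pi=\Pi^{*}\Pi\,R=0$ and $\partial_{\sigma}H[\gamma,\sigma]=-2(t-\sigma)$ one obtains
\[
  \partial_{\sigma}R=2R\,(t-\sigma)\,R-R\,\partial_{\sigma}(\Pi^{*}\Pi)-\partial_{\sigma}(\Pi^{*}\Pi)\,R\,,
\]
and, iterating, $\partial_{\sigma}^{\beta}R$ is a finite sum of words alternating factors $R$, multiplications by $(t-\sigma)$, and operators $\partial_{\sigma}^{j}(\Pi^{*}\Pi)$, with $\hat K$-uniformly bounded numerical coefficients. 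Each $R$ gains one power of weight ($\langle t\rangle^{m}L^{2}\to\langle t\rangle^{m-1}L^{2}$, by the case $\beta=0$), each $(t-\sigma)$ costs one power, and each $\partial_{\sigma}^{j}(\Pi^{*}\Pi)$ is infinitely smoothing in the weight scale; since differentiation turns an $R$ into the pattern $R(t-\sigma)R$ (same net gain), a $(t-\sigma)$ into a constant (a gain), or a $\partial_{\sigma}^{j}(\Pi^{*}\Pi)$ into $\partial_{\sigma}^{j+1}(\Pi^{*}\Pi)$, every word in $\partial_{\sigma}^{\beta}R$ has net gain at least one, hence maps $\langle t\rangle^{\alpha}L^{2}$ boundedly into $\langle t\rangle^{\alpha-1}L^{2}\subset\langle t\rangle^{\alpha}L^{2}$, uniformly in $(z,\sigma)$. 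This is the assertion.

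The main obstacle is the closing of the $\beta=0$ energy estimate with the correct uniformity: one has to combine the coercivity of the harmonic term $(t-\sigma)^{2}$ away from $\{t=\sigma\}$ with the uniform $L^{2}$-bound $\|R(z,\sigma)\|\leq c_{0}^{-1}$ on a bounded neighborhood of $\{t=\sigma\}$, keeping the spectral projection $(\Pi^{*}\Pi)^{\perp}$ glued on throughout so that the (generally singular) full resolvent $(H[\gamma,\sigma]-z)^{-1}$ never appears alone. A related delicate point, used both for the projection contributions above and for $\beta\geq 1$, is the \emph{uniform} super-polynomial decay of $u_{j}^{[\gamma,\sigma]}$ and of its $\sigma$-derivatives for $\sigma\in\hat K$, obtained from Agmon estimates applied to the eigenvalue equation and its $\sigma$-differentiated versions.
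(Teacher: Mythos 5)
Your proposal is correct, and it takes a genuinely different route from the paper's proof.

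The paper (which only treats $\beta=0$, explicitly leaving $\beta\geq 1$ aside) conjugates the augmented Grushin matrix
\(\begin{pmatrix} H[\gamma,\sigma]-z & \Pi^*\\ \Pi & 0\end{pmatrix}\)
by the truncated weight $\langle\epsilon t\rangle_k^{\alpha}$. The conjugation produces a first-order perturbation $R_{\epsilon,k}$ with a small prefactor (uniform in $k$) and keeps the off-diagonal blocks bounded because of the uniform exponential decay of the $u_j^{[\gamma,\sigma]}$; the conjugated matrix is therefore still invertible with a uniformly bounded inverse, and the estimate follows by Fatou as $k\to\infty$. This conjugation/perturbation scheme handles all $\alpha\in\R$ at once. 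You instead estimate the reduced resolvent directly by an energy method: testing the equation against a truncated $\langle t\rangle^{2p}u$, absorbing the Robin boundary term via a trace inequality, using the coercivity of $(t-\sigma)^2\langle t\rangle^{2p}$ for $|t-\sigma|>\rho$ together with the uniform gap $\inf_\sigma\mu_{n+1}(\gamma,\sigma)>b$ on $\{|t-\sigma|\leq\rho\}$, and then passing to general $\alpha$ by self-adjointness, duality, and interpolation. This buys a stronger conclusion than the lemma asks: the reduced resolvent gains one power of $\langle t\rangle$-decay, $R(z,\sigma):\langle t\rangle^{m}L^2\to\langle t\rangle^{m-1}L^2$ uniformly. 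That gain is precisely what makes your treatment of $\beta\geq 1$ clean: in the differentiated identity $\partial_\sigma R = 2R(t-\sigma)R - R\,\partial_\sigma(\Pi^*\Pi) - \partial_\sigma(\Pi^*\Pi)\,R$ (which is correct — it follows from $(H-z)R=(\Pi^*\Pi)^\perp$, $\Pi^*\Pi R=0$ and $\partial_\sigma H=-2(t-\sigma)$), the weight loss of each factor $(t-\sigma)$ is exactly compensated by the gain of the adjacent $R$'s, and the projection factors are infinitely smoothing by the uniform rapid decay of the eigenfunctions and their $\sigma$-derivatives. So your proof actually settles the $\beta\geq1$ part of the statement that the paper does not spell out.

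Two places where your sketch leans on facts that deserve a word: (i) the truncation of the weight is necessary to justify testing with $\langle t\rangle^{2p}u$ (as in the paper, which uses $\chi_k$ with $|\chi_k'|\leq k^{-1}$ so the commutator term is uniformly controllable) — you acknowledge this, but the uniformity of the constants as the truncation is removed is precisely where one should be careful; and (ii) the uniform-in-$\hat K$ super-polynomial (in fact Gaussian) decay of $u_j^{[\gamma,\sigma]}$ \emph{and} of $\partial_\sigma^j u_j^{[\gamma,\sigma]}$ for the $\beta\geq 1$ step is a nontrivial Agmon-type estimate applied to the $\sigma$-differentiated eigenvalue equations; it is true (and the paper also uses the $\beta=0$ version without proof), but it is worth noting that the paper avoids needing the derivative version by only proving $\beta=0$.
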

\begin{proof}
	Let us only prove this estimate for $\beta=0$.
	We consider $z\in[a,b]$. Let us consider $v\in \mathscr{S}(\overline{\R}_+)$ and let $u$ be the unique solution to the equation
	\begin{equation}\label{eq.H-z}
		(H[\gamma,\sigma]-z)u=(\Pi^*\Pi)^{\perp}(\langle \epsilon t\rangle^\alpha_{k}v)
	\end{equation}
	that is orthogonal to $(u_{j}^{[\gamma,\sigma]})_{1\leq j\leq n}$, with
	\[\langle t\rangle_k=\left(1+t^2\chi^2_k\right)^{\frac12}\,,\]
	where $\chi_k$ is a smooth non-negative function equal to $0$ on $[0,1]$ and to $1$ on $[2k,+\infty)$ and such that $|\chi'_k|\leq k^{-1}$. In particular, the weight is $1$ near $0$. Here $\epsilon>0$ is a parameter to be chosen small enough.

	We have seen in Lemma \ref{lem.bij} that 
	\[\begin{pmatrix}
		H[\gamma,\sigma]-z&\Pi^*\\
		\Pi&0
	\end{pmatrix} : B^2(\R_+)\times \C^n\longrightarrow L^2(\R_+)\times\C^n\]
	is bijective. Thus, the equation \eqref{eq.H-z} is equivalent to
	\[\begin{pmatrix}
		H[\gamma,\sigma]-z&\Pi^*\\
		\Pi&0
	\end{pmatrix}\begin{pmatrix}
		u\\
		0
	\end{pmatrix} 
	=\begin{pmatrix}
		(\Pi^*\Pi)^{\perp}(\langle \epsilon t\rangle^\alpha_{k}v)\\
		0
	\end{pmatrix}\,.\]
	Note that
	\[\begin{pmatrix}
		\langle \epsilon t\rangle^{-\alpha}_{k}	&0\\
		0&1
	\end{pmatrix}\begin{pmatrix}
		H[\gamma,\sigma]-z&\Pi^*\\
		\Pi&0
	\end{pmatrix}\begin{pmatrix}
		\langle \epsilon t\rangle^{\alpha}_{k}&0\\
		0&1
	\end{pmatrix}=\begin{pmatrix}
		H[\gamma,\sigma]+R_{\epsilon, k}-z&\langle \epsilon t\rangle^{-\alpha}_{k}\Pi^*\\
		\Pi(\cdot \langle \epsilon t\rangle^{\alpha}_{k})&0
	\end{pmatrix}\,,\]
	where
	\[R_{\epsilon, k}=-\alpha\epsilon^2(t^2\chi^2_k)'\frac{1}{1+\epsilon^2 t^2\chi^2_k}\partial_t-\left(\frac{\epsilon^2\alpha}{2}\frac{(t^2\chi^2_k)''}{1+\epsilon^2t^2\chi_k^2}+\left(\frac\alpha2-1\right)\frac{\epsilon^4\alpha}{2}\frac{(t^2\chi_k^2)'^2}{(1+\epsilon^2 t^2\chi_k^2)^2}\right)\,.\]
	With
	\[u=\langle \epsilon t\rangle^{\alpha}_{k}\tilde u\,,\]
	we get
	\[\mathscr{H}[\gamma,\sigma,\epsilon, k]\begin{pmatrix}
		\tilde u\\
		0
	\end{pmatrix} 
	=\begin{pmatrix}
		\langle \epsilon t\rangle^{-\alpha}_{k}	 (\Pi^*\Pi)^{\perp}(\langle \epsilon t\rangle^\alpha_{k}v)\\
		0
	\end{pmatrix}\,,\]
	with
	\[
	\mathscr{H}[\gamma,\sigma,\epsilon, k]=\begin{pmatrix}
		H[\gamma,\sigma]+R_{\epsilon, k}-z&\langle \epsilon t\rangle^{-\alpha}_{k}\Pi^*\\
		\Pi(\cdot \langle \epsilon t\rangle^{\alpha}_{k})&0
              \end{pmatrix}\,.\] Thanks to the exponential decay of
            the $u_j^{[\gamma,\sigma]}$ (which is uniform for
            $\sigma\in \hat K$), we notice that
            $\mathscr{H}[\gamma,\sigma,\epsilon, k]$ is bijective as
            soon as $\epsilon$ is small enough and $k$ large
            enough. Moreover,
	\[\|\mathscr{H}[\gamma,\sigma,\epsilon, k]^{-1}\|\leq C\,.\]
	This implies that
	\[\|\tilde u\|\leq C\|\langle \epsilon t\rangle^{-\alpha}_{k}	 (\Pi^*\Pi)^{\perp}(\langle \epsilon t\rangle^\alpha_{k}v)\|\,,\]
	and then (by using again the exponential decay of the eigenfunctions)
	\[\| \langle \epsilon t\rangle^{-\alpha}_{k}u\|\leq C\|v\|\,.\]
	Taking the limit $k\to+\infty$,  the Fatou lemma gives
	\[\| \langle \epsilon t\rangle^{-\alpha}u\|\leq C\|v\|\,.\]
	This provides us with the desired estimate since $\langle\epsilon t\rangle\langle t\rangle^{-1}\in[\epsilon,1]$.
\end{proof}

\bibliographystyle{abbrv}
\bibliography{FLTRVN}
\end{document}